\newtheorem{thm}{Theorem}[section]
\newtheorem{lem}[thm]{Lemma}
\newtheorem{cor}[thm]{Corollary}
\newtheorem{defin}[thm]{Definition}
\newcommand{\bra}[1]{\langle #1|}
\newcommand{\ket}[1]{|#1 \rangle}
\newcommand{\braket}[2]{\langle #1|#2\rangle}
\newcommand{\ketbra}[1]{\ket{#1}\bra{#1}}
\newcommand{\proj}[1]{\ket{#1}\bra{#1}}
\newcommand{\ident}{\mathbbm{1}}
\DeclareMathOperator{\tr}{\mathrm{Tr}}
\newcommand{\mdag}{^{\dag}} 
\newcommand{\demi}{\frac{1}{2}}
\newcommand{\mbU}{\mathbb{U}}
\newcommand{\mbE}{\mathbb{E}}
\newcommand*{\cB}{\mathcal{B}}
\newcommand*{\cF}{\mathcal{F}}
\newcommand*{\cH}{\mathcal{H}}
\newcommand*{\cL}{\mathcal{L}}
\newcommand*{\cP}{\mathcal{P}}
\newcommand*{\cS}{\mathcal{S}}
\newcommand*{\cU}{\mathcal{U}}
\newcommand*{\cT}{\mathcal{T}}
\newcommand*{\cV}{\mathcal{V}}
\newcommand*{\cW}{\mathcal{W}}
\DeclareMathOperator{\Span}{span}
\renewcommand{\otimes}{\varotimes}
\newcommand*{\eps}{\varepsilon}
\newcommand*{\opid}{\mathcal{I}}
\newcommand{\hmin}{H_{\min}}
\newcommand{\hmax}{H_{\max}}
\newcommand{\tps}{\intercal}
\newcommand*{\pr}[3]{\put(#1,#2){\makebox(0,0){#3}}}
\newcommand*{\arrowwidth}{\linethickness{1pt}}
\newcommand*{\arrowrightp}{\vspace{0ex}\hspace{-0.5ex}$\blacktriangleright$}
\newcounter{tempx}\newcounter{tempy}
\newcommand*{\arrowr}[3]{\arrowwidth
                        \setcounter{tempx}{#3-5}
                        \put(#1,#2){\line(1,0){\value{tempx}}}            
                        \put(#1,#2){\makebox(#3,0)[r]{ \arrowrightp  }}
}
\begin{document}
\title{One-Shot Decoupling}

\author{Fr\'ed\'eric Dupuis${}^\ast$
\quad\quad
Mario Berta${}^\ast$
\quad\quad
J\"urg Wullschleger${}^{\dagger,\ddagger}$
\quad\quad
Renato Renner${}^\ast$\\[4mm]
${}^\ast$%
{\it\small Institute for Theoretical Physics}\\[-1mm]
{\it\small ETH Zurich, Switzerland}\\[2mm]
${}^\dagger$%
{\it\small Department of Computer Science and Operations Research}\\[-1mm]
{\it\small Universit\'e de Montr\'eal, Quebec, Canada}\\[2mm]
${}^\ddagger$%
{\it\small McGill University, Quebec, Canada}\\[2mm]
}

\date{}

\maketitle

\begin{abstract}
If a quantum system $A$, which is initially correlated to another system, $E$, undergoes an evolution separated from $E$, then the correlation to $E$ generally decreases.  Here, we study the conditions under which the correlation disappears (almost) completely, resulting in a decoupling of $A$ from $E$. We give a criterion for decoupling in terms of two smooth entropies, one quantifying the amount of initial correlation between $A$ and $E$, and the other characterizing the mapping that describes the evolution of $A$. The criterion applies to arbitrary such mappings in the general one-shot setting. Furthermore, the criterion is tight for mappings that satisfy certain natural conditions. One-shot decoupling has a number of applications both in physics and information theory, e.g., as a building block for quantum information processing protocols. As an example, we give a one-shot state merging protocol and show that it is essentially optimal in terms of its entanglement consumption/production.
\end{abstract}


\section{Introduction} \label{sec:Intro}

Correlations in quantum systems, and in particular entanglement, have
been in the focus of (both theoretical and experimental) research in
quantum information science over the past decades. As a result, one
has nowadays a pretty good (although still not complete) understanding
of quantum correlations and, in particular, the processes that create
them. In this work, we take | so to speak | an opposite approach and study
conditions under which two systems can be decoupled, i.e., brought to
a state where they are uncorrelated.

We call a system, $B$, decoupled from another system, $E$, if
the joint state of the two systems, $\rho_{B E}$, has product form
$\rho_{B} \otimes \rho_{E}$. Operationally, this means that the
outcome of any measurement on $B$ is statistically independent of the
outcome of any measurement on $E$. Or, in information-theoretic terms,
the system $E$ does not give any information on $B$ (and can therefore
safely be ignored when studying~$B$).

\paragraph{Decoupling Theorem.} Our goal is to characterize the conditions under which the evolution of a system results in decoupling. For this, we consider a system, $A$, that may initially be correlated to $E$. Furthermore, we assume that the system $A$ undergoes an evolution, described by a TPCPM\footnote{A trace-preserving completely-positive map (TPCPM) is a linear function that maps density operators to density operators.} $\mathcal{\bar{T}}$ from $A$ to $B$, during which no interaction with $E$ takes place (see Fig.~\ref{fig:decoupling}). The main result of this work is a decoupling theorem, i.e., a criterion that provides necessary and sufficient conditions for decoupling (of $B$ from $E$). The criterion depends on two entropic quantities, characterizing the initial state, $\rho_{A E}$, and the mapping $\mathcal{\bar{T}}$, respectively.

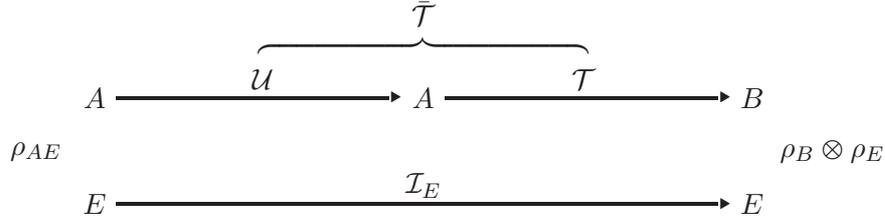
\begin{figure}
\begin{center}
\begin{picture}(250,85)(0, 0)
  \arrowr{10}{10}{230}
  \arrowr{10}{50}{107}
  \arrowr{133}{50}{107}
  \pr{65}{57}{$\mathcal{U}$}  
   \pr{185}{57}{$\cT$}  
  \pr{125}{17}{$\opid_E$}
  \pr{125}{70}{$\overbrace{\hphantom{\rule{125pt}{0pt}}}$}
  \pr{125}{82}{$\mathcal{\bar{T}}$}
  \pr{2}{50}{$A$}
  \pr{2}{10}{$E$}
  \pr{125}{50}{$A$}
  \pr{248}{50}{$B$}
  \pr{248}{10}{$E$}
  \pr{-20}{30}{$\rho_{A E}$}
 \pr{278}{30}{$\rho_B \otimes \rho_E$}
\end{picture}
\end{center}

\caption{{\bf Decoupling.}  \label{fig:decoupling} The initial
  system, $A$, may be correlated to a reference system $E$. The
  evolution is modeled as a mapping $\mathcal{\bar{T}}$ from $A$ to
  $B$. The final state of $B$ is supposed to be independent of
  $E$. The subdivision of $\mathcal{\bar{T}}$ into a unitary
  $\mathcal{U}$ and a mapping $\mathcal{T}$ is required for the
  formulation of our decoupling criterion. }
\end{figure}

The decoupling criterion can be conceptually split into two parts, called achievability and converse part, which we now describe informally. The full technical statements are provided as
Theorems~\ref{thm:achievability} and~\ref{thm:converse} in Sections~\ref{sec:achievability} and~\ref{sec:converse}, respectively. For their formulation, it is convenient to view $\mathcal{\bar{T}}$ as a sequence, $\mathcal{\bar{T}} =\mathcal{T} \circ \mathcal{U}$, where $\mathcal{U}$ is an arbitrary unitary on $A$, and $\mathcal{T}$ a fixed TPCPM from $A$ to $B$.

\emph{Achievability: decoupling up to an error $\eps$ is achieved for most choices of
$\mathcal{U}$ if}
\begin{align}\label{eq:achiev}
H_{\min}^\eps(A|E)_{\rho}+H_{\min}^\eps(A|B)_{\tau}\gtrapprox 0\ .
\end{align}
\emph{Converse: decoupling up to an error $\eps$ is not achieved for any choice of $\mathcal{U}$ if}
\begin{align}
H_{\min}^\eps(A|E)_{\rho}+H_{\max}^\eps(A|B)_{\tau}\lessapprox 0\ .
\end{align}
The criteria refer to the $\eps$-smooth conditional min- and max-entropy introduced in~\cite{RenWol04b,renner-phd}, which can be seen as generalizations of the von Neumann entropy (cf.~Section~\ref{sec:preliminaries} for definitions and properties). The $\eps$-smooth conditional min-entropy $H_{\min}^\eps(A|E)_{\rho}$ is a measure for the correlation present in the initial state $\rho_{A E}$ | the larger this measure, the less dependent is $A$ on $E$ (see Table~\ref{tb:state} for some typical examples). The quantities $H_{\min}^{\eps}(A|B)_{\tau}$ (for the achievability) and $H_{\max}^{\eps}(A|B)_{\tau}$ (for the converse) measure how well the mapping $\mathcal{T}$ conserves correlations. Roughly, they quantify the uncertainty one has about a ``copy'' of the input, $A$, given access to the output, $B$, of $\mathcal{T}$ (cf.~Table~\ref{tb:mapping}). We note that the expressions for the achievability and for the converse essentially coincide in many cases of interest (see the discussion in Section~\ref{sec:converse}).

As a typical example for decoupling, consider $m$ qubits, $A$, that are classically maximally correlated to $E$ (so that $H_{\min}^{\eps}(A|E)_{\rho}=0$, cf.~second row of Table~\ref{tb:state}). Furthermore, assume that $A$ undergoes a reversible evolution, $\mathcal{U}$, after which we discard $m-m'$ qubits, corresponding to a partial trace, $\mathcal{T}= \tr_{m-m'}$ (see last example of Table~\ref{tb:mapping}). Our criterion then says that the remaining $m'$ qubits will, for most evolutions $\mathcal{U}$, be decoupled from $E$ whenever $m' <m/2$. Conversely, if this condition is not satisfied, some correlation will necessarily be retained.

\newcommand*{\theighta}{\vphantom{$ \displaystyle \sum_{i}^{2^k}$}}

\begin{table}[ht]
\begin{center}
{\small
\begin{tabular}{|p{6.2cm}|c|c|}
  \hline  Description of initial state & \vphantom{$\displaystyle \sum$} $\rho = \rho_{A E}$ &
  $H_{\min}^{\eps}(A|E)_{\rho}$ \\ \hline \hline
 $k$ random bits $A$ independent of $E$ & \theighta $2^{-k}\cdot\ident_{A} \otimes \rho_E$
  & $k$ \\ \hline
  $k$ bits $A$ correlated classically to $E$ & \theighta $\displaystyle 2^{-k}\cdot\sum_{i=1}^{2^k}
  \ketbra{i}_A \otimes \ketbra{i}_E$ & $0$ \\ \hline
  $k$ qubits $A$ fully entangled with $E$ & \theighta
  $\ketbra{\Psi}$, where $\displaystyle \Psi =
  2^{-k/2}\cdot\sum_{i=1}^{2^k} \ket{i}_A \otimes \ket{i}_E$ & $-k$ \\ \hline
\end{tabular}
}
\end{center}
\caption{{\bf Dependence on the initial state.}\label{tb:state}
The table illustrates how the term $H_{\min}^{\eps}(A|E)_{\rho}$ (for $\eps \to 0$) in the decoupling criterion depends on the initial state $\rho_{A E}$. In all three examples, $A$ is assumed to be a $k$-qubit system with orthonormal basis $\{\ket{i}_A\}_{i = 1}^{2^k}$.  Similarly, $\{\ket{i}_E\}_{i=1}^{2^k}$ is an orthonormal family of states on $E$.}

\end{table}

\begin{table}
\begin{center}
{\small
\begin{tabular}{|p{6.2cm}|c|c|}
  \hline  Description of mapping  & \vphantom{$\displaystyle \sum$}  $\mathcal{T}$ &
  $H_{\min}^{\eps}(A|B)_{\tau}$ \\ \hline \hline
  identity on $m$ qubits & \theighta $\sigma \mapsto \sigma$
  & $-m$ \\ \hline
  orthogonal measurement on $m$ qubits & \theighta $\displaystyle \sigma \mapsto
  \sum_{i=1}^{2^m} \ketbra{i} \sigma \ketbra{i}$ & $0$ \\ \hline
  erasure of $m$ qubits & \theighta
  $\sigma \mapsto \tr(\sigma) \ketbra{0} $ & $m$ \\ \hline 
 \raisebox{0.4ex}{\parbox{5.6cm}{identity on $m'$, orthogonal measurement on ${m-m'}$ qubits}}
  & \theighta
  $\displaystyle \sigma \mapsto \sum_{i=1}^{2^{m-m'}} (\ident_{m'} \otimes
  \ketbra{i}) \sigma (\ident_{m'} \otimes \ketbra{i})$ & $ -m'$ \\ \hline
  identity on $m'$, erasure on ${m-m'}$ qubits & \theighta
  $\sigma \mapsto \tr_{m-m'}(\sigma) $ & $m - 2m'$ \\ \hline
\end{tabular}
}
\end{center}

\caption{{\bf Dependence on the mapping.}\label{tb:mapping}
The table illustrates how the term $H_{\min}^\eps(A|B)_{\tau}$ in the decoupling criterion depends on the mapping $\mathcal{T}$. In all five examples, the input space, $A$, is assumed to consist of $m$  qubits  with orthonormal basis $\{\ket{i}_A\}_{i = 1}^{2^m}$. The last two examples have a smaller output space consisting of only $m'$ qubits. The penultimate one can be seen as a combination of the first and the second, and the last one can be seen as a combination of the first and the third. (The smooth conditional min-entropies are evaluated for $\eps \to 0$.)}
\end{table}

We mention that it is possible phrase our achievability criterion for decoupling~\eqref{eq:achiev} in another (but equivalent) way. For TPCPMs $\cT$ from $A$ to $B$ such that for every unitary $\cU$ on $A$ there exists a unitary $\cV$ on $B$ with $\cV\circ\cT=\cT\circ\cU$, decoupling up to an error $\eps$ is achieved if
\begin{align}
H_{\min}^\eps(A|E)_{\rho}+H_{\min}^\eps(A|B)_{\tau}\gtrapprox 0\ .
\end{align}
For more details about this formulation, see the discussion in Section~\ref{sec:achiev-statement}.

\paragraph{Applications.}

The notion of decoupling has various applications in information
theory and in physics. Many of these applications have in common that
decoupling of a system $B$ from a system $E$ is used to show that $B$
is maximally entangled with a complementary system, $R$. Indeed, under
the assumption that $R$ is chosen such that the joint state, $\rho_{B
  E R}$, is pure, $\rho_{B E} = \rho_{B} \otimes \rho_{E}$ immediately
implies that there exists a subsystem $R'$ of $R$ such that the state
on $\rho_{B R'}$ is pure. If, in addition, $\rho_B$ is fully mixed,
$\rho_{B R'}$ is necessarily maximally entangled.

In the context of information theory, this type of argument is, for example, used to analyze state merging~\cite{HoOpWi05Nat,HoOpWi07CMP}, i.e., the task of conveying a subsystem from a sender to a receiver | who already holds a possibly correlated subsystem | using classical communication and entanglement. Another example, where decoupling is used in a similar fashion, is the quantum reverse Shannon theorem~\cite{BSST02,BDHSW09,BeChRe09_2}. In fact, the proof of this theorem given in~\cite{BeChRe09_2} refers to a coherent form of state merging (also known as the fully quantum Slepian Wolf or mother protocol~\cite{FQSW}) where the classical communication is replaced by quantum communication. Decoupling can also be used for the characterization of correlation and entanglement between systems, erasure processes, as well as channel capacities (see, e.g.,
\cite{GrPoWi05,Buscemi09,HHWY08}). In addition, its classical analogue, privacy amplification~\cite{BBCM95,RenKoe05}, is widely used in classical and quantum cryptography.

Decoupling processes are also crucial in physics. For example, the evolution of a thermodynamical system towards thermal equilibrium can be understood as a decoupling process, where the system under consideration decouples from the observer (somewhat analogous to the considerations in~\cite{LPSW09,Partovi1,Partovi2}). Recent work indeed shows that there is a close relation between smooth entropies and quantities that are relevant in thermodynamics~\cite{DRRV09,RARDV10,diploma-hutter,Faist12,Aberg13,Horodecki13}. Similarly, black hole radiation may be analyzed from such a point of view~\cite{HayPre07,braunstein-pati,braunstein-zyczkowski}. Finally, one-shot decoupling techniques were also applied in solid state physics in order to show that 1D quantum states with exponential decay of correlations have an efficient classical approximate description as a matrix product state~\cite{Brandao12}.

\paragraph{History and Related Work.}

While various standard results in quantum information theory have been proved using ideas related to decoupling, the concept came into its own with the discovery of state merging protocols~\cite{HoOpWi05Nat,HoOpWi07CMP} and, later, the fully quantum Slepian Wolf protocol~\cite{FQSW}. These are based on specific decoupling processes where the mapping $\mathcal{T}$ is either a projective measurement or a partial trace. In this early work, the decoupling was analyzed in terms of the dimensions of certain subsystems (rather than smooth conditional entropies).

Based on the diploma thesis of one of us~\cite{diploma-berta}, we have generalized these decoupling results to include mappings~$\mathcal{T}$ that consist of combinations of projective measurements and partial trace-preserving. Furthermore, we expressed the decoupling criterion in terms of smooth conditional entropies. Subsequently, one of the authors derived in his doctoral thesis~\cite{fred-these} a general decoupling theorem that can be applied to any type of mapping. This result is essentially (up to the use of different entropy measures) equivalent to Theorem~\ref{thm:achievability} presented here. We also note that the aforementioned characterizations of decoupling can be seen as special cases of this general result.

The above work was mostly concerned with achievability. Converse results were so far only known in special cases. In particular, we derived in~\cite{BeReWi07} and~\cite{diploma-berta} (see also~\cite{Renner09}) converse theorems for the case where the mapping $\cT$ is a projective measurement. The converse theorem presented here, Theorem~\ref{thm:converse}, generalizes these results.

We emphasize that the use of smooth conditional entropies is essential for applications of the decoupling technique in physics (see the discussion in Section~\ref{sec:discussion}).

\paragraph{Structure of the Paper.}

In Section~\ref{sec:preliminaries} we introduce the notation and review the definitions and main properties of the entropy measures used in this work. Our main achievability result for decoupling is given in Section~\ref{sec:achievability}, whereas Section~\ref{sec:converse} contains a converse that is tight in many cases of interest.  The use of the decoupling technique is illustrated in Section~\ref{sec:applications}, where we show how to obtain optimal one-shot quantum state merging.  We conclude with a discussion in Section~\ref{sec:discussion}.


\section{Preliminaries} \label{sec:preliminaries}

\subsection{Notation}

We denote the Hilbert space associated to a system $A$ by $\cH_A$. We
only consider finite-dimensional systems and denote the dimension of
$\cH_{A}$ by $|A|$. The set of linear operators on $\cH$ is denoted by
$\cL(\cH)$ and the set of nonnegative operators on $\cH$ by
$\cP(\cH)$. We define the sets of subnormalized states
$\cS_{\leq}(\cH)=\{\rho\in\cP(\cH):\tr\rho\leq1\}$ and normalized
states $\cS_{=}(\cH)=\{\rho\in\cP(\cH):\tr\rho=1\}$.

The tensor product of $\cH_{A}$ and $\cH_{B}$ is denoted by
$\cH_{AB}=\cH_{A}\otimes\cH_{B}$. For multipartite operators
$\rho_{AB}\in\cP(\cH_{AB})$, we write $\rho_{A}=\tr_{B}(\rho_{AB})$
for the corresponding reduced operator. For $M_{A}\in\cL(\cH_{A})$, we
write $M_{A}=M_{A}\otimes\ident_{B}$ for the enlargement on any
$\cH_{AB}$, where $\ident_B$ denotes the identity in $\cP(\cH_B)$.

Completely positive maps from $\cL(\cH_{A})$ to $\cL(\cH_{B})$ are
called CPMs and trace-preserving CPMs are called TPCPMs. For
$\cH_{A}$, $\cH_{B}$ with orthonormal bases
$\{\ket{i}_{A}\}_{i=1}^{|A|}$, $\{\ket{i}_{B}\}_{i=1}^{|B|}$ and
$|A|=|B|$, the canonical identity mapping from $\cL(\cH_{A})$ to
$\cL(\cH_{B})$ with respect to these bases is denoted by
$\opid_{A\rightarrow B}$, i.e., $\opid_{A \rightarrow B}(\ket{i}\bra{j}_A) = \ket{i}\bra{j}_B$.

For $\rho\in\cP(\cH)$, $\|\rho\|_{\infty}$ denotes the operator norm of $\rho$, which is equal to the maximum eigenvalue of $\rho$.  The trace norm of $\rho\in\cL(\cH)$ is defined as $\|\rho\|_{1}=\tr(\sqrt{\rho^{\dagger}\rho})$ and the induced metric on $\cS_{\leq}(\cH)$ is called trace distance.\footnote{The trace distance is often defined with an additional factor $1/2$, which we omit here.} The fidelity between $\rho,\sigma\in\cS_{\leq}(\cH)$ is defined as $F(\rho,\sigma)=\|\sqrt{\rho}\sqrt{\sigma}\|_{1}$.

We will make use of the Choi-Jamio\l{}kowski isomorphism, which relates CPMs to positive operators, and which we denote by $J$.

\begin{lem}\cite{cj-jamiolkowski, cj-choi}\label{choi}
The Choi-Jamio\l{}kowski map $J$ takes maps $\cT_{A\rightarrow B}:\cL(\cH_{A})\rightarrow\cL(\cH_{B})$ to operators $J(\cT_{A\rightarrow B})\in\cL(\cH_{A}\otimes\cH_{B})$. It is defined as
\begin{align}
J(\cT_{A\rightarrow B})=(\opid_{A}\otimes\cT_{A'\rightarrow B})(\ket{\Phi}\bra{\Phi}_{AA'})\ ,
\end{align}
where $\ket{\Phi}_{AA'}=|A|^{-\frac{1}{2}}\sum_{i}\ket{i}_{A}\otimes\ket{i}_{A'}$ and $\cH_{A'}\cong\cH_{A}$.\footnote{The Choi-Jamio\l{}kowski isomorphism is sometimes defined with an additional dimensional factor of $|A|$; we choose not to do this here.}  The map $J$ bijectively maps the set of CPMs from $\cH_{A}$ to $\cH_{B}$ to the set $\cP(\cH_{A}\otimes\cH_{B})$, and its inverse maps any $\gamma_{AB}\in\cP(\cH_{A}\otimes\cH_{B})$ to
\begin{align}
\cT_{A \rightarrow B}: \, M_A \, \mapsto \, |A|\cdot\tr[\gamma_{A B}M_A^T] \ ,
\end{align}
where $M_A^T$ denotes the transpose of $M_A$ with respect to the basis $\{\ket{i}_A\}_{i=1}^{|A|}$.
\end{lem}


\subsection{Smooth Entropies}

The smooth entropy formalism~\cite{renner-phd, RenWol04b} has been
introduced in (classical and quantum) information theory to study
general one-shot scenarios, in which nothing needs to be assumed about
the structure of the relevant probability distributions or quantum
states (e.g., those modeling noise processes in a communication
channel). The formalism therefore overcomes a limitation of the
established theory, where it is usually assumed that the relevant
processes can be modeled as asymptotic sequences of independent and identically distributed (iid) subprocesses.

In this section we provide the definitions of the underlying entropy
measures, called smooth min- and max entropy, and state some of their
basic properties. Further properties are summarized in
Appendix~\ref{app_smooth}. For a more detailed discussion of the
smooth entropy formalism we refer to~\cite{Tomamichel12,renner-phd,min-max-entropy, ToCoRe09, duality-min-max-entropy, datta-2008-2}.

Recall the following standard definitions. The von Neumann entropy of
$\rho\in\cS_{=}(\cH)$ is defined as\footnote{All logarithms are taken to base $2$.} $H(\rho)=-\tr(\rho\log\rho)$ and the conditional von
Neumann entropy of $A$ given $B$ for $\rho_{AB}\in\cS_{=}(\cH)$ is
defined as $H(A|B)_{\rho}=H(AB)_{\rho}-H(B)_{\rho}$.

\begin{defin}
Let $\rho_{AB}\in\cS_{\leq}(\cH_{AB})$. The conditional min-entropy of $A$ given $B$ is defined as
\begin{align}
H_{\min}(A|B)_{\rho}=\sup_{\sigma_{B}\in\cS_{=}(\cH_{B})}\sup\big\{\lambda\in\mathbb{R}:2^{-\lambda}\cdot\ident_{A}\otimes\sigma_{B}-\rho_{AB}\geq0\big\}\ .
\end{align}
The conditional max-entropy of $A$ given $B$ is defined as
\begin{align}
H_{\max}(A|B)_{\rho}=\sup_{\sigma_{B}\in\cS_{=}(\cH_{B})}\log F(\rho_{AB},\ident_{A}\otimes\sigma_{B})^{2}\ .
\end{align}
\end{defin}

In the special case where $B$ is trivial (i.e., one-dimensional), we write $H_{\min}(A)_{\rho}$ and $H_{\max}(A)_{\rho}$ instead of $H_{\min}(A|B)_{\rho}$ and $H_{\max}(A|B)_{\rho}$, respectively, and it can be shown that $H_{\min}(A)_{\rho}=-\log\|\rho_{A}\|_{\infty}$ as well as $H_{\max}(A)_{\rho}=2\log\tr\sqrt{\rho_{A}}$. Furthermore, for $\rho_{AB}\in\cS_{=}(\cH_{AB})$ the entropies can be ordered as~\cite[Lemma 2]{ToCoRe09}
\begin{align}
H_{\min}(A|B)_{\rho}\leq H(A|B)_{\rho}\leq H_{\max}(A|B)_{\rho}\ .
\end{align}

The smooth conditional min- and max-entropy are defined by extremizing the non-smooth versions over a set of nearby states, where nearby is quantified by the purified distance.

\begin{defin}
Let $\rho,\sigma\in\cS_{\leq}(\cH)$. The purified distance between $\rho$ and $\sigma$ is defined as
\begin{align}
P(\rho,\sigma)=\sqrt{1-\bar{F}(\rho,\sigma)^{2}}\ ,
\end{align}
where $\bar{F}(\rho,\sigma)=F(\rho,\sigma)+\sqrt{(1-\tr[\rho])(1-\tr[\sigma])}$ denotes the generalized fidelity.
\end{defin}

The purified distance is a metric on $\cS_{\leq}(\cH)$~\cite[Lemma 5]{duality-min-max-entropy}. As its name indicates, $P(\rho,\sigma)$ corresponds to the minimum trace distance between purifications of $\rho$ and $\sigma$. For more about the purified distance we refer to~\cite{duality-min-max-entropy}.

Henceforth $\rho,\sigma\in\cS_{\leq}(\cH)$ are called $\eps$-close if $P(\rho,\sigma)\leq\eps$ and this is denoted by $\rho\approx_{\eps}\sigma$. We use the purified distance to specify an $\eps$-ball around $\rho\in\cS_{\leq}(\cH)$,
\begin{align}
B^{\eps}(\rho)=\{\rho'\in\cS_{\leq}(\cH):\rho'\approx_{\eps}\rho\}\ .
\end{align}

\begin{defin}
Let $\eps\geq0$ and $\rho_{AB}\in\cS_{\leq}(\cH_{AB})$. The $\eps$-smooth conditional min-entropy of $A$ given $B$ is defined as
\begin{align}
H_{\min}^{\eps}(A|B)_{\rho}=\sup_{\hat{\rho}_{AB}\in\cB^{\eps}(\rho_{AB})}H_{\min}(A|B)_{\hat{\rho}}\ .
\end{align}
The $\eps$-smooth conditional max-entropy of $A$ given $B$ is defined as
\begin{align}
H_{\max}^{\eps}(A|B)_{\rho}=\inf_{\hat{\rho}_{AB}\in\cB^{\eps}(\rho_{AB})}H_{\max}(A|B)_{\hat{\rho}}\ .
\end{align}
\end{defin}

We mention that the optimization problems defining the smooth conditional min- and max-entropy can be formulated as semi-definite programs~\cite[Section 5.2.1]{Tomamichel12}. This allows to efficiently compute them numerically.

The smooth conditional min- and max-entropy are dual to each other in the following sense.

\begin{lem}\cite[Lemma 16]{duality-min-max-entropy}\label{lem:dual}
Let $\eps\geq0$, $\rho_{AB}\in\cS_{\leq}(\cH_{AB})$ and let $\rho_{ABC}\in\cS_{\leq}(\cH_{ABC})$ be an arbitrary purification of $\rho_{AB}$. Then, we have that
\begin{align}
H_{\min}^{\eps}(A|B)_{\rho}=-H_{\max}^{\eps}(A|C)_{\rho}\ .
\end{align}
\end{lem}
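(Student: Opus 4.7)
The statement decomposes naturally into a non-smooth duality ($\eps=0$) followed by a smoothing argument that exploits Uhlmann's theorem. The plan is to first establish $H_{\min}(A|B)_{\rho} = -H_{\max}(A|C)_{\rho}$ whenever $\rho_{ABC}$ is pure, and then bootstrap to arbitrary $\eps$.

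For $\eps=0$, I would rewrite $2^{-H_{\min}(A|B)_\rho}$ as the optimal value of the semidefinite program $\min\{\tr\sigma_B : \mathbb{I}_A \otimes \sigma_B \geq \rho_{AB},\ \sigma_B \in \cP(\cH_B)\}$ and compute its dual. Using Uhlmann's theorem, $F(\rho_{AC}, \mathbb{I}_A \otimes \sigma_C) = \max_{\ket{\phi}} |\langle \phi \mid \rho_{ABC}\rangle|$, with the maximum running over purifications $\ket{\phi}_{ABC}$ of $\mathbb{I}_A \otimes \sigma_C$ in $\cH_{ABC}$. Parametrizing such purifications via an operator acting on the purifying $B$ system, the dual SDP can be massaged into $\sup_{\sigma_C \in \cS_=(\cH_C)} F(\rho_{AC}, \mathbb{I}_A \otimes \sigma_C)^2 = 2^{H_{\max}(A|C)_\rho}$. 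Strong duality holds by Slater's condition: perturbing to $\sigma_B = \rho_B + \delta\, \mathbb{I}_B$ yields a strictly feasible primal.

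For general $\eps$, take any $\hat\rho_{AB} \in B^{\eps}(\rho_{AB})$. Since the purified distance equals the minimum purified distance between purifications in a sufficiently large ambient space, there is a purification $\tilde\rho_{ABC}$ of $\hat\rho_{AB}$ with $\tilde\rho_{ABC} \approx_{\eps} \rho_{ABC}$. Monotonicity of the purified distance under $\tr_B$ then gives $\tilde\rho_{AC} \in B^{\eps}(\rho_{AC})$. Applying the non-smooth duality to the pure state $\tilde\rho_{ABC}$ yields $H_{\min}(A|B)_{\hat\rho} = -H_{\max}(A|C)_{\tilde\rho} \leq -H_{\max}^{\eps}(A|C)_{\rho}$; taking the supremum over $\hat\rho_{AB}$ establishes $H_{\min}^{\eps}(A|B)_{\rho} \leq -H_{\max}^{\eps}(A|C)_{\rho}$. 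The reverse inequality is symmetric: starting from an $\eps$-close $\hat\rho_{AC}$ approaching the infimum in $H_{\max}^{\eps}(A|C)_\rho$, lift to a purification close to $\rho_{ABC}$ and read off the min-entropy of its reduction to $AB$.

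The main obstacle is Step 1. Writing down the SDPs is mechanical, but identifying the dual with a supremum of fidelities requires a careful reparametrization of purifications via Uhlmann's theorem, and strong duality must be justified by exhibiting a strictly feasible primal. Once the $\eps=0$ identity is secured, Step 2 is bookkeeping that relies only on Uhlmann's theorem and the data-processing inequality for the purified distance, both of which are standard tools of the smooth-entropy framework referenced in the preamble.
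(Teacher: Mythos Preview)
The paper does not supply its own proof of this lemma; it is simply quoted from the cited reference \cite{duality-min-max-entropy}. Your two-step plan---establishing the non-smooth duality $H_{\min}(A|B)_\rho=-H_{\max}(A|C)_\rho$ via semidefinite-programming duality and Uhlmann's theorem, then lifting to the smooth case by purifying an optimizer in the $\eps$-ball and tracing out the appropriate system---is precisely the strategy used in that reference, so the proposal is correct and aligned with the literature.

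One technical caveat worth tightening in Step~2: when you ``lift'' an $\eps$-close state $\hat\rho_{AB}$ (or $\hat\rho_{AC}$) to a purification in $\cH_{ABC}$, you implicitly assume the fixed purifying system is large enough to accommodate the rank of the smoothed state. The cited reference handles this by showing that the optimizer in the $\eps$-ball can always be taken with support contained in that of the original state (or by an embedding argument), so the lift into the \emph{given} $\cH_{ABC}$ is legitimate. Without this, the claim ``for an arbitrary purification $\rho_{ABC}$'' would not follow directly. This is a bookkeeping point rather than a gap in the overall strategy.
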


Smooth entropies satisfy various natural properties analogous to those known for the von Neumann entropy. One of them is the invariance under local isometries.

\begin{lem}\cite[Lemma 13/15]{duality-min-max-entropy}\label{lem:isometry}
Let $\eps\geq0$, $\rho_{AB}\in\cS_{\leq}(\cH_{AB})$, and let $\cU_{A\rightarrow C}$ and $\cV_{B\rightarrow D}$ be isometries from $A$ to $C$ and $B$ to $D$, respectively. Then, we have that
\begin{align}
&H_{\min}^{\eps}(A|B)_{\rho}= H_{\min}^{\eps}(C|D)_{\cV\circ\cU(\rho)}\\
&H_{\max}^{\eps}(A|B)_{\rho}= H_{\max}^{\eps}(C|D)_{\cV\circ\cU(\rho)}\ .
\end{align}
\end{lem}

Another important property is the data processing inequality.

\begin{lem}\cite[Theorem 18]{duality-min-max-entropy}\label{lem:strong}
Let $\eps\geq0$, $\rho_{AB}\in\cS_{\leq}(\cH_{AB})$, and let $\cT_{B\rightarrow C}$ be a TPCPM from $B$ to $C$. Then, we have that
\begin{align}
&H_{\min}^{\eps}(A|B)_{\rho}\leq H_{\min}^{\eps}(A|C)_{\cT(\rho)}\\
&H_{\max}^{\eps}(A|B)_{\rho}\leq H_{\max}^{\eps}(A|C)_{\cT(\rho)}\ .
\end{align}
\end{lem}

Smooth entropies are generalizations of the von Neumann entropy, in the sense that the von Neumann entropy can be retrieved as a special case via the quantum asymptotic equipartition property (AEP).

\begin{lem}\cite[Corollary~6.6 and~6.7]{Tomamichel12}\label{thm:fully-quantum-aep}
Let $0 < \eps < 1$ and $\rho_{AB}\in\cS_{=}(\cH_{AB})$. Then, we have that
\begin{align}
& \lim_{n\rightarrow\infty}\frac{1}{n}H_{\min}^{\eps}(A|B)_{\rho^{\otimes n}}=H(A|B)_{\rho}\\
& \lim_{n\rightarrow\infty}\frac{1}{n}H_{\max}^{\eps}(A|B)_{\rho^{\otimes n}}=H(A|B)_{\rho}\ .
\end{align}
\end{lem}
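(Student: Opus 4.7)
The plan is to reduce the two limits to a single one via duality, and then to sandwich $\frac{1}{n} H_{\min}^\eps(A|B)_{\rho^{\otimes n}}$ between matching asymptotic bounds. By Lemma~\ref{lem:dual}, if $\rho_{ABC}$ is any purification of $\rho_{AB}$ then $\rho_{ABC}^{\otimes n}$ purifies $\rho_{AB}^{\otimes n}$, and using $H(A|B)_\rho = -H(A|C)_\rho$ on pure tripartite states, the max-entropy identity reduces to the min-entropy identity applied to $\rho_{AC}$. So it suffices to prove $\lim_{\eps\to 0}\lim_{n\to\infty}\frac{1}{n}H_{\min}^\eps(A|B)_{\rho^{\otimes n}} = H(A|B)_\rho$.

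\textbf{Upper bound.} The single-copy inequality $H_{\min}(A|B)_\sigma \leq H(A|B)_\sigma$, valid for any normalized $\sigma_{AB}$, together with a continuity estimate of Alicki--Fannes type, gives the easy direction: every $\hat\rho \in B^\eps(\rho_{AB}^{\otimes n})$ is within trace distance at most $2\eps$ of $\rho_{AB}^{\otimes n}$, so $|H(A^n|B^n)_{\hat\rho} - n H(A|B)_\rho| \leq O(\eps\, n \log|A|) + O(1)$. Taking the supremum over $\hat\rho$ and dividing by $n$ yields $\frac{1}{n}H_{\min}^\eps(A|B)_{\rho^{\otimes n}} \leq H(A|B)_\rho + O(\eps \log|A|)$, which vanishes upon taking $\eps \to 0$ after $n \to \infty$.

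\textbf{Lower bound.} This is the main step. For fixed $\eps > 0$, I would construct a smoothed state $\hat\rho_{A^n B^n} \in B^\eps(\rho_{AB}^{\otimes n})$ together with a candidate $\sigma_{B^n} = \rho_B^{\otimes n}$ such that the operator inequality $\hat\rho_{A^n B^n} \leq 2^{-\lambda_n}\, \ident_{A^n} \otimes \rho_B^{\otimes n}$ holds with $\lambda_n / n \to H(A|B)_\rho$. The candidate $\hat\rho$ is obtained by projecting $\rho_{AB}^{\otimes n}$ onto a joint ``quantum typical subspace'' on which $-\frac{1}{n}\log \rho_{AB}^{\otimes n}$ concentrates near $H(AB)_\rho$ and $-\frac{1}{n}\log (\ident_{A^n} \otimes \rho_B^{\otimes n})$ concentrates near $H(B)_\rho$. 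On this subspace, the operator $(\ident_{A^n} \otimes \rho_B^{\otimes n})^{-1/2}\, \rho_{AB}^{\otimes n}\, (\ident_{A^n} \otimes \rho_B^{\otimes n})^{-1/2}$ has operator norm at most $2^{-n H(A|B)_\rho + o(n)}$, which yields the required inequality. A law-of-large-numbers argument bounds the trace lost when projecting, so that the normalized projected state is $\eps$-close in purified distance for all sufficiently large $n$.

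\textbf{Main obstacle.} The technical heart is that $\rho_{AB}$ and $\ident_A \otimes \rho_B$ do not commute in general, so one cannot simply reduce to a classical AEP on a joint eigenbasis of the two operators. Overcoming this requires either a careful construction of a joint typical projector for two non-commuting tensor-power observables (as in the quantum typical-subspace machinery of~\cite{renner-phd}), or an operator-valued concentration inequality of Ahlswede--Winter type applied to the log-eigenvalue operators on the symmetric subspace. Both routes yield concentration with $O(\sqrt{n})$ fluctuations about $n H(A|B)_\rho$, which vanish after dividing by $n$ and then letting $\eps \to 0$.
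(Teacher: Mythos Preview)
The paper does not prove this lemma; it is quoted verbatim from~\cite{ToCoRe09} (as indicated by the citation in the lemma header) and used as a black box. So there is no ``paper's own proof'' to compare against.

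That said, your sketch is a reasonable outline of one of the two standard routes. The duality reduction is correct, and the upper bound via $H_{\min}\le H$ plus Alicki--Fannes continuity is fine (modulo the minor point that the smoothing ball contains subnormalized states, which you should normalize before invoking continuity; this costs only an $O(1/n)$ term). For the lower bound, the typical-subspace construction you describe is essentially the approach taken in~\cite{renner-phd}, whereas the cited reference~\cite{ToCoRe09} instead proves a one-shot bound of the form
\[
H_{\min}^\eps(A^n|B^n)_{\rho^{\otimes n}} \;\ge\; n\,H(A|B)_\rho \;-\; O\bigl(\sqrt{n}\bigr)
\]
via a relative-entropy-variance argument (a quantum Berry--Esseen-type concentration for $\log\rho_{AB}-\log(\ident_A\otimes\rho_B)$), which bypasses the explicit joint-typicality projector. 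Both routes work; the~\cite{ToCoRe09} argument gives explicit second-order constants, while the typical-subspace route you outline is closer in spirit to the classical AEP but, as you correctly flag, must confront the non-commutativity of $\rho_{AB}$ and $\ident_A\otimes\rho_B$ head-on.
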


For more properties of smooth entropies we refer to the Appendix~\ref{app_smooth} and~\cite{Tomamichel12,renner-phd, min-max-entropy,ToCoRe09,duality-min-max-entropy,datta-2008-2}.

For technical reasons we will also need the following auxiliary quantities.

\begin{defin}\label{h2}
Let $\rho_{AB}\in\cS_{\leq}(\cH_{AB})$. The conditional collision entropy of $A$ given $B$ is defined as
\begin{align}
H_{2}(A|B)_{\rho}=\sup_{\sigma_{B}\in\cS_{=}(\cH_{B})}-\log\tr\left[\left((\ident_{A}\otimes\sigma_{B}^{-1/4})\rho_{AB}(\ident_{A}\otimes\sigma_{B}^{-1/4})\right)^{2}\right]\ .
\end{align}
\end{defin}

\begin{defin}\label{hmax-rho-given-sigma}
Let $\rho_{AB} \in \cS_{\leqslant}(\cH_{AB})$ and $\sigma_B \in \cS_{\leqslant}(\cH_B)$. We define
\begin{align}
\hmax(A|B)_{\rho|\sigma}=\log F(\rho_{AB}, \ident_A \otimes \sigma_B)^2\ .
\end{align}
\end{defin}

It can be shown that $\hmax(A|B)_{\rho}=\sup_{\sigma \in \cS_{\leqslant}(\cH_B)} \hmax(A|B)_{\rho|\sigma}$.

\begin{defin}\label{hmin-rho-given-sigma}
Let $\rho_{AB}\in\cS_{\leqslant}(\cH_{AB})$ and $\sigma_B \in \cS_{\leqslant}(\cH_B)$. We define
\begin{align}
\hmin(A|B)_{\rho|\sigma}=\sup\big\{\lambda\in\mathbb{R}:2^{-\lambda}\cdot\ident_{A}\otimes\sigma_{B}-\rho_{AB}\geq0\big\}\ .
\end{align}
\end{defin}

It can be shown that  $\hmin(A|B)_{\rho} = \sup_{\sigma \in \cS_{\leqslant}(\cH_B)} \hmin(A|B)_{\rho|\sigma}$.

Finally, we note that, since all Hilbert spaces in this paper are assumed to have finite dimension, the infima and suprema in the expressions above can be replaced by minima and maxima, respectively.


\section{Achievability} \label{sec:achievability}

In this section, we present and prove a general decoupling theorem (Theorem~\ref{thm:achievability}), which corresponds to the achievability part of the criterion sketched informally in
Section~\ref{sec:Intro}. The theorem subsumes and extends previous results in this direction.


\subsection{Statement of the Decoupling Theorem}\label{sec:achiev-statement}

As explained in the introductory section (see Fig.~\ref{fig:decoupling}), we consider a mapping from a system $A$ to a system $B$. The mapping consists of a unitary on $A$, selected randomly according to the Haar measure over the unitary group on $\cH_A$, followed by an arbitrary mapping $\cT = \cT_{A \rightarrow B}$. In applications, $\cT$ often consists of a measurement or a partial trace (see Table~\ref{tb:mapping} for examples). The decoupling theorem then tells us how well the output, $B$, of the mapping $\cT$ is decoupled (on average over the choices of the unitary) from a reference system $E$. 

\begin{thm}[Decoupling Theorem]\label{thm:achievability}
Let $\eps>0$, $\rho_{AE}\in\cS_{=}(\cH_{AE})$, and let $\cT_{A \rightarrow B}$ be a CPM with Choi-Jamio\l{}kowski representation $\tau_{AB}=J(\cT)$ such that $\tr(\tau_{AB})\leqslant1$. Then, we have that
\begin{align}\label{eq:decouplbound}
\int_{\mbU(A)}\left\| \cT_{A\rightarrow B}(U_{A}\rho_{AE}U_{A}\mdag)-\tau_B\otimes\rho_E\right\|_1 dU\leqslant 2^{-\demi H_{\min}^{\eps}(A|E)_{\rho}-\demi H_{\min}^{\eps}(A|B)_{\tau}}+12\eps\ ,
\end{align}
where $\int\cdot\,dU$ denotes the integral over the Haar measure over the full unitary group on $\cH_A$.
\end{thm}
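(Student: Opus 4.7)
The plan is the standard three-step template for one-shot decoupling: \emph{smooth} the state and the map to pass to nearby optimizers, \emph{reduce} the $L^{1}$ norm to an $L^{2}$ norm via Cauchy--Schwarz conjugated by appropriate inverse fourth powers, and \emph{integrate} the resulting expression explicitly over the Haar measure.

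For the smoothing step, I fix $\hat{\rho}_{AE}\in\cB^{\eps}(\rho_{AE})$ and $\hat{\tau}_{AB}\in\cB^{\eps}(\tau_{AB})$ that realize the smoothed min-entropies, and let $\hat{\cT}$ be the CPM whose Choi--Jamio\l{}kowski representation is $\hat{\tau}_{AB}$. Using the triangle inequality together with the facts that trace distance is upper bounded by purified distance and that purified distance is contractive under CPMs, I can replace $(\rho_{AE},\cT)$ by $(\hat{\rho}_{AE},\hat{\cT})$ inside the $\|\cdot\|_{1}$ at the cost of an additive $O(\eps)$ term. This reduces the theorem to the non-smooth inequality
\begin{equation*}
\int_{\mbU(A)}\bigl\|\hat{\cT}(U\hat{\rho}U\mdag)-\hat{\tau}_{B}\otimes\hat{\rho}_{E}\bigr\|_{1}\,dU \;\leq\; 2^{-\demi H_{\min}(A|E)_{\hat{\rho}}-\demi H_{\min}(A|B)_{\hat{\tau}}},
\end{equation*}
with the constant $12$ produced by the smoothing bookkeeping.

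For the core computation, I apply Jensen's inequality to push the square inside the integral, and then Cauchy--Schwarz in the form
\begin{equation*}
\|M_{BE}\|_{1}^{2} \;\leq\; \tr(\sigma_{B})\tr(\sigma_{E})\,\bigl\|(\sigma_{B}^{-1/4}\otimes\sigma_{E}^{-1/4})\,M_{BE}\,(\sigma_{B}^{-1/4}\otimes\sigma_{E}^{-1/4})\bigr\|_{2}^{2},
\end{equation*}
with $\sigma_{E}$ the optimizer in $H_{\min}(A|E)_{\hat{\rho}}$ and $\sigma_{B}$ the one for $H_{\min}(A|B)_{\hat{\tau}}$ (accessed through the Choi picture). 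Expanding the resulting squared norm and using the standard Haar moment identity $\int U^{\otimes 2}X(U\mdag)^{\otimes 2}\,dU = \alpha(X)\,\ident + \beta(X)\,F_{AA'}$, the linear-in-$U$ cross term integrates to a quantity that exactly cancels the constant $\|\hat{\tau}_{B}\otimes\hat{\rho}_{E}\|_{2}^{2}$ piece (since $\int U\hat{\rho}_{A}U\mdag\,dU=\ident_{A}/|A|$), and only the $F$-contribution from the quadratic piece survives. Via the Choi--Jamio\l{}kowski duality, this $F$-contribution factorizes as a product of one trace depending only on $\hat{\rho}_{AE}$ and another depending only on $\hat{\tau}_{AB}$; by the choice of $\sigma_{E},\sigma_{B}$ together with $H_{2}\geq H_{\min}$, each factor is bounded by $2^{-H_{\min}(A|E)_{\hat{\rho}}}$ and $2^{-H_{\min}(A|B)_{\hat{\tau}}}$ respectively. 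Taking a square root finishes the non-smooth step.

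The main obstacle is the Haar calculation itself: verifying the cancellation of the cross and constant terms, and showing that the surviving $F$-contribution cleanly factorizes into separate traces in $\hat{\rho}_{AE}$ and $\hat{\tau}_{AB}$. The factorization hinges on rewriting the output trace against $\hat{\cT}$ as an input trace against the Choi operator $\hat{\tau}_{AB}$, so that the two ``copies'' of $A$ produced by $U^{\otimes 2}$ pair naturally with $\hat{\rho}_{AE}$ on one side and $\hat{\tau}_{AB}$ on the other. The explicit $|A|$-dependent coefficients $\alpha,\beta$ from Schur--Weyl must then be combined with the smoothing bookkeeping to yield precisely the stated exponent and the additive $12\eps$.
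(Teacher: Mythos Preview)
Your proposal is correct and follows essentially the same route as the paper: the non-smooth core is exactly Theorem~\ref{thm:nonsmooth-mainthm} (Cauchy--Schwarz with $\sigma_B^{-1/4}\otimes\sigma_E^{-1/4}$ conjugation, Jensen, the Schur--Weyl Haar identity of Lemma~\ref{lem:haar-integral}, and the Choi-based factorization), after which $H_2\geq H_{\min}$ and the smoothing triangle-inequality bookkeeping recover the smooth statement with the $12\eps$ term. One small caution: replacing $\cT$ by $\hat\cT$ is not literally ``contractivity under a CPM'' (you are changing the map, not the input), and since $\hat\cT$ need not be trace-nonincreasing the paper instead writes $\hat\tau-\tau=\Delta_+-\Delta_-$, passes to the CP maps $\cD_\pm$, and bounds $\int\|\cD_\pm(U\rho U^\dagger)\|_1\,dU=\tr\Delta_\pm\leq 2\eps$ directly---this is the mechanism that actually yields the three $4\eps$ contributions summing to $12\eps$.
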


Here, the total CPM is of the form $\bar{\cT}=\cT\circ\cU$ with the unitary channel $\cU(\cdot)=U_{A}(\cdot)U_{A}^\dagger$ and $U_{A}$ chosen at random. We note that, equivalently, we may think of $\bar{\cT}$ as a channel that chooses at random a unitary $U_{A}$ and outputs the choice of $U_{A}$, together with the output of $\cT$.

The decoupling theorem (Theorem~\ref{thm:achievability}) provides a bound on the quality of decoupling that only depends on two entropic quantities, $H_{\min}^{\eps}(A|E)_{\rho}$ and $H_{\min}^{\eps}(A|B)_{\tau}$. The first is a measure for the correlations between $A$ and $E$ that are present in the initial state, $\rho_{AE}$. The second quantifies properties of the mapping $\cT$, which is characterized by the bipartite state $\tau_{A B}$ obtained via the Choi-Jamio\l{}kowski isomorphism $J$. Hence, in order to minimize the right hand side of~\eqref{eq:decouplbound}, no channel ends up being better suited for some types of states than for others or vice-versa. Furthermore, as discussed in Section~\ref{sec:converse}, the bound in~\eqref{eq:decouplbound} is essentially optimal in many cases of interest. We also note that, using Markov's inequality, the expectation value over the unitaries $U$ can be turned into a bound that holds for most unitaries. That is, for any $\mu > 0$,
\begin{align}
\left\| \cT_{A\rightarrow B}(U_{A}\rho_{AE}U_{A}\mdag)-\tau_B \otimes \rho_E \right\|_1 \leqslant \frac{1}{\mu}\cdot2^{-\demi H_{\min}^{\eps}(A|E)_{\rho}-\demi H_{\min}^{\eps}(A|B)_{\tau}}+\frac{12\eps}{\mu}
\end{align}
holds with probability at least $1-\mu$ (for $U$ chosen according to the Haar measure).

Finally, as sketched in the introductory section, the decoupling theorem (Theorem~\ref{thm:achievability}) can also be phrased in another (but equivalent) way.

\begin{cor}\label{cor:alt-achiev}
Let $\eps>0$, $\rho_{AE}\in\cS_{=}(\cH_{AE})$, and let $\cT_{A \rightarrow B}$ be a CPM with Choi-Jamio\l{}kowski representation $\tau_{AB}=J(\cT)$ such that $\tr(\tau_{AB})\leqslant1$. Furthermore, assume that for every unitary channel $\cU_{A}$ there exists a unitary channel $\cV_{B}$ such that $\cV_{B}\circ\cT_{A\rightarrow B}=\cT_{A\rightarrow B}\circ\cU_{A}$. Then, we have that
\begin{align}
\left\|\cT_{A\rightarrow B}(\rho_{AE})-\tau_B\otimes\rho_E \right\|_1\leqslant 2^{-\demi H_{\min}^{\eps}(A|E)_{\rho}-\demi H_{\min}^{\eps}(A|B)_{\tau}}+12\eps\ .
\end{align}
\end{cor}

\begin{proof}
By the decoupling theorem (Theorem~\ref{thm:achievability}) for the map $\cT_{A \rightarrow B}$, there exists a unitary $U_{A}$ such that
\begin{align}
\left\| \cT_{A\rightarrow B}(U_{A} \rho_{AE} U_{A}\mdag)-\tau_B \otimes \rho_E \right\|_1\leqslant 2^{-\demi H_{\min}^{\eps}(A|E)_{\rho} -\demi H_{\min}^{\eps}(A|B)_{\tau}}+12\eps\ .
\end{align}
Since there exists by assumption a unitary $V_{B}$ such that $\cV_{B}\circ\cT_{A\rightarrow B}=\cT_{A\rightarrow B}\circ\cU_{A}$, we get
\begin{align}\label{eq:alternative}
\left\|\cT_{A\rightarrow B}(\rho_{AE})-V_{B}\mdag\tau_B V_{B}\otimes\rho_E\right\|_1&=\left\|V_{B}\cT_{A\rightarrow B}(\rho_{AE})V_{B}\mdag-\tau_B\otimes\rho_E\right\|_1\nonumber\\
&\leqslant2^{-\demi H_{\min}^{\eps}(A|E)_{\rho} -\demi H_{\min}^{\eps}(A|B)_{\tau}}+12\eps\ .
\end{align}
Furthermore, again by assumption, there exists a unitary $W_{B}$ such that $\cW_{B}\circ\cT_{A\rightarrow B}=\cT_{A\rightarrow B}\circ\cU_{A}\mdag$, and hence
\begin{align}
\cT_{A\rightarrow B}=\cT_{A\rightarrow B}\circ\cU_{A}\circ\cU_{A}\mdag=\cV_{B}\circ\cT_{A\rightarrow B}\circ\cU_{A}\mdag=\cV_{B}\circ\cW_{B}\circ\cT_{A\rightarrow B}\ .
\end{align}
This implies $\cV_{B}\mdag\circ\cT_{A\rightarrow B}=\cW_{B}\circ\cT_{A\rightarrow B}$, and thus we get
\begin{align}
V_{B}\mdag\tau_B V_{B}=W_{B}\cT_{A\rightarrow B}\left(\frac{\ident_{A}}{|A|}\right)W_{B}\mdag=\cT_{A\rightarrow B}\left(U_{A}\mdag\frac{\ident_{A}}{|A|}U_{A}\right)=\cT_{A\rightarrow B}\left(\frac{\ident_{A}}{|A|}\right)=\tau_{B}\ .
\end{align}
Finally, we arrive at the claim by combining this with~\eqref{eq:alternative} .
\end{proof}

To see why this alternative formulation (Corollary~\ref{cor:alt-achiev}) is equivalent to the decoupling theorem (Theorem~\ref{thm:achievability}) we may think of the total map in Theorem~\ref{thm:achievability} as a channel that chooses at random a unitary $U_{A}$ and outputs the choice of $U_{A}$, together with the output of $\cT$. By inspection, this total map then fulfills the assumption of Corollary~\ref{cor:alt-achiev}.

Our first step in proving Theorem~\ref{thm:achievability} is to prove a version involving non-smooth min-entropies (Theorem~\ref{thm:nonsmooth-mainthm}). Then, in a second step, we show that smoothing preserves the essence of the theorem. Note that Theorem~\ref{thm:nonsmooth-mainthm} may be of interest in cases where no smoothing is required since it is slightly more general: it applies to any completely positive $\cT$, not only trace-non-increasing ones.

\begin{thm}[Non-Smooth Decoupling Theorem]\label{thm:nonsmooth-mainthm}
Let $\rho_{AE}\in\cS_{\leq}(\cH_{AE})$ and let $\cT_{A \rightarrow B}$ be a CPM with Choi-Jamio\l{}kowski representation $\tau_{AB}=J(\cT)$. Then, we have that
\begin{align}
\int_{\mbU(A)} \left\| \cT_{A\rightarrow B}(U_{A}\rho_{AE}U_{A}\mdag)-\tau_B\otimes\rho_E\right\|_1 dU\leqslant 2^{-\demi H_2(A|E)_{\rho}-\demi H_2(A|B)_{\tau}}\ ,
\end{align}
where $\int\cdot\,dU$ denotes the integral over the Haar measure over the full unitary group on $\cH_A$.
\end{thm}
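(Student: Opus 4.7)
My plan follows the standard "decoupling" structure in four moves: (i)~reduce the trace distance to a weighted Hilbert--Schmidt norm via Cauchy--Schwarz; (ii)~use Jensen's inequality to commute the Haar average with a square root; (iii)~compute the resulting second-moment Haar integral explicitly, using the Choi--Jamio\l{}kowski representation of $\cT$ to factorise cleanly; and (iv)~take the infimum over auxiliary weights to recognise the result as a product of two collision entropies.

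First, for arbitrary full-rank $\sigma_B\in\cS_=(\cH_B)$ and $\sigma_E\in\cS_=(\cH_E)$ I would apply the weighted inequality $\|X_{BE}\|_1\le\|(\sigma_B\otimes\sigma_E)^{-1/4}X_{BE}(\sigma_B\otimes\sigma_E)^{-1/4}\|_2$ to $\Delta(U):=\cT(U\rho_{AE}U^{\dagger})-\tau_B\otimes\rho_E$ and then Jensen to place the Haar integral inside a square root. A Schur-type twirl gives $\int_{\mbU(A)}U\rho_{AE}U^{\dagger}\,dU=(\ident_A/|A|)\otimes\rho_E$, whence $\int_{\mbU(A)}\cT(U\rho_{AE}U^{\dagger})\,dU=\tau_B\otimes\rho_E$: the subtracted product term is exactly the Haar mean. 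The usual identity $\mbE\|X-\mbE X\|^{2}=\mbE\|X\|^{2}-\|\mbE X\|^{2}$ therefore reduces the weighted integrand to $\int\|\cT(U\rho U^{\dagger})\|_{2}^{2}\,dU-\|\tau_B\otimes\rho_E\|_{2}^{2}$ (with the weights understood throughout).

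For the remaining Haar expectation, I would use the duplication trick $\|M\|_2^{2}=\tr(F\,M\otimes M)$ together with the second-moment formula $\int_{\mbU(A)}U^{\otimes 2}X(U^{\dagger})^{\otimes 2}\,dU=\alpha\,\ident_{AA'}+\beta\,F_{AA'}$. The $\alpha\,\ident$ contribution, once contracted with the two copies of $\cT$ via the Choi identity $\tr[\cT(X)M_B]=|A|\tr[\tau_{AB}(X^{T}\otimes M_B)]$, reproduces exactly $\|\tau_B\otimes\rho_E\|_{2}^{2}$ and cancels against the subtraction. The surviving $\beta\,F$ contribution, rewritten through $\tr[F_{BB'}(\cT(X)\otimes\cT(Y))]=|A|^{2}\tr[(\tau_{AB}\otimes\tau_{A'B'})(X^{T}\otimes Y^{T}\otimes F_{BB'})]$, factorises cleanly into a product of a "$\rho$-trace" and a "$\tau$-trace", giving
\begin{align*}
\int_{\mbU(A)}\|\Delta(U)\|_{2}^{2}\,dU \;\le\; \tr\!\bigl[(\sigma_E^{-1/4}\rho_{AE}\sigma_E^{-1/4})^{2}\bigr]\cdot\tr\!\bigl[(\sigma_B^{-1/4}\tau_{AB}\sigma_B^{-1/4})^{2}\bigr].
\end{align*}
Taking the infimum over $\sigma_E$ and $\sigma_B$ identifies the two factors with $2^{-H_2(A|E)_\rho}$ and $2^{-H_2(A|B)_\tau}$ by Definition~\ref{h2}, and the square root from Jensen yields the claimed bound.

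\textbf{Main obstacle.} The technical heart is the cancellation-and-factorisation in Step~3: one must verify that the $\alpha\,\ident$ contribution from the Haar twirl reproduces $\|\tau_B\otimes\rho_E\|_{2}^{2}$ \emph{exactly} (no residue), and that the $\beta\,F$ contribution separates into a pure $\rho$-factor and a pure $\tau$-factor with no cross-terms. Only this complete separation converts the bound into a genuine product $2^{-\frac12 H_2(A|E)_\rho}\cdot 2^{-\frac12 H_2(A|B)_\tau}$; any failure of separation would entangle the two sides. The bookkeeping with the $|A|^{2}-1$ versus $|A|^{2}$ denominators from the second-moment formula, and with the transposes introduced by the Choi identity, is routine once the main cancellation is in place.
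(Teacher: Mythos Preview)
Your four-move outline is exactly the paper's proof, and steps (i), (ii), (iv) match it verbatim. The one point that would trip you up if you tried to write step (iii) in full is your expectation that the $\alpha\,\ident$ contribution reproduces $\|\tau_B\otimes\rho_E\|_2^2$ \emph{exactly}. It does not. Solving the $2\times 2$ system from the second-moment formula gives
\[
\alpha \;=\; \tr[\tilde\tau_B^2]\cdot\frac{|A|^2 - |A|\,\tr[\tilde\tau_{AB}^2]/\tr[\tilde\tau_B^2]}{|A|^2-1},
\qquad
\beta \;=\; \tr[\tilde\tau_{AB}^2]\cdot\frac{|A|^2 - |A|\,\tr[\tilde\tau_B^2]/\tr[\tilde\tau_{AB}^2]}{|A|^2-1},
\]
so neither coefficient is the ``clean'' value you want. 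What actually happens is an \emph{inequality}: one needs the purity sandwich $|A|^{-1}\le \tr[\tilde\tau_{AB}^2]/\tr[\tilde\tau_B^2]\le |A|$ (the paper's Lemma~\ref{lem:tr2-bounded}, proved via Cauchy--Schwarz and a Weyl-operator argument) to conclude $\alpha\le\tr[\tilde\tau_B^2]$ and $\beta\le\tr[\tilde\tau_{AB}^2]$. The $\alpha$ term is then \emph{dominated} by the subtraction rather than cancelled, and the $\beta$ term is \emph{bounded} by the clean product you wrote down. This is not just $|A|^2-1$ versus $|A|^2$ bookkeeping; without the purity inequality the coefficients are not controlled. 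Once you insert that lemma, the rest of your sketch goes through unchanged.
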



\subsection{Technical Ingredients to the Proof}

The proof of the non-smooth decoupling theorem (Theorem~\ref{thm:nonsmooth-mainthm}) is based on a few technical lemmas, which we state and prove in the following, and which may be of independent interest. We note that they partly generalize techniques developed in the context of privacy amplification~\cite{RenKoe05,renner-phd,leftover} as well as earlier work on decoupling (see, e.g., \cite{HoOpWi07CMP}).

\begin{lem}\label{lem:swap-trick}
Let $M,N\in\cL(\cH_{A})$. Then, we have that $\tr[(M \otimes N) F]=\tr[MN]$, where $F$ swaps the two copies of the $A$ subsystem.
\end{lem}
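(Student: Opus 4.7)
The plan is to prove the identity by expanding the swap operator $F$ in a basis and then collapsing the resulting double sum via the completeness relation. Concretely, fix an orthonormal basis $\{\ket{i}\}_{i=1}^{|A|}$ of $\cH_A$ and recall that the swap can be written as $F = \sum_{i,j} \ket{i}\bra{j} \otimes \ket{j}\bra{i}$, which one checks acts as $F(\ket{x}\otimes\ket{y}) = \ket{y}\otimes\ket{x}$ on product vectors.

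Next I would compute $(M \otimes N) F = \sum_{i,j} M\ket{i}\bra{j} \otimes N\ket{j}\bra{i}$ and take the trace on $\cH_A \otimes \cH_A$, which factorizes over the two tensor factors. This gives
\begin{align*}
\tr[(M \otimes N) F] = \sum_{i,j} \tr(M\ket{i}\bra{j}) \, \tr(N\ket{j}\bra{i}) = \sum_{i,j} \bra{j}M\ket{i}\bra{i}N\ket{j}.
\end{align*}
Applying $\sum_i \ket{i}\bra{i} = \ident_A$ then collapses the inner sum to $\bra{j}MN\ket{j}$, and summing over $j$ yields $\tr(MN)$, as desired.

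The entire argument is a one-line basis expansion, so there is no real obstacle; the only thing to be careful about is the ordering of factors in the explicit formula for $F$ (one should verify on a single product vector that the given expression indeed implements the swap), and to note that the identity holds for arbitrary $M, N \in \cL(\cH_A)$ without any positivity or normalization assumption, since only linearity and the cyclicity of the trace are used.
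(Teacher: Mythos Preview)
Your proof is correct and is essentially the same direct basis computation as in the paper; the only cosmetic difference is that you expand the swap operator $F$ in a basis and use completeness, whereas the paper expands $M$ and $N$ in a basis and applies $F$ to the tensor factors. Both routes are one-line calculations yielding $\sum_{ij} m_{ij} n_{ji} = \tr[MN]$.
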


\begin{proof}
Write $M$ and $N$ in the standard basis for $\cH_{A}$, that is, $M = \sum_{ij} m_{ij} \ket{i}\bra{j}$ and $N = \sum_{kl} n_{kl} \ket{k}\bra{l}$. Then, we have that
\begin{align}
\tr[(M \otimes N) F] &= \tr\left[ \left( \sum_{ijkl} m_{ij} n_{kl} \ket{i}\bra{j} \otimes \ket{k}\bra{l} \right)F \right]\nonumber\\
&=\tr\left[ \sum_{ijkl} m_{ij} n_{kl} \ket{i}\bra{l} \otimes \ket{k}\bra{j} \right]\nonumber\\
&=\sum_{ij} m_{ij} n_{ji}\nonumber\\
&=\tr[MN]\ .
\end{align}
\end{proof}

The second lemma involves averaging over Haar distributed unitaries. While it would take us too far afield to formally introduce the Haar measure, it can simply be thought of as the uniform probability distribution over the set of all unitaries on a Hilbert space. The following then tells us the expected value of $U^{\otimes 2}M(U^{\dagger})^{\otimes 2}$ with $M\in\cL(\cH_{A}^{\otimes2})$ when $U$ is selected ``uniformly at random''.

\begin{lem}\label{lem:haar-integral}
Let $M\in\cL(\cH_{A}^{\otimes2})$. Then, we have that
\begin{align}
\mbE(M) = \int_{\mbU(A)} U^{\otimes 2}M(U^{\dagger})^{\otimes2}dU=\alpha\cdot\ident_{A A'}+\beta\cdot F_{A}\ ,
\end{align}
where $F_{A}$ swaps the two copies of the $A$ subsystem, $\alpha$ and $\beta$ are such that $\tr[M] = \alpha |A|^2 + \beta |A|$ and $\tr[MF] = \alpha |A| + \beta |A|^2$, and $dU$ is the normalized Haar measure on $\mbU(A)$.
\end{lem}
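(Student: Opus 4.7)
The plan is to exploit the unitary invariance of $\mbE(M)$ together with Schur--Weyl duality on $\cH_A^{\otimes 2}$, and then to pin down the two coefficients by evaluating two linear functionals (namely $\tr[\,\cdot\,]$ and $\tr[\,\cdot\, F]$).

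First I would check the key symmetry: for every $V\in\mbU(A)$,
\begin{align*}
V^{\otimes 2}\,\mbE(M)\,(V^\dagger)^{\otimes 2}
=\int_{\mbU(A)} (VU)^{\otimes 2} M\bigl((VU)^\dagger\bigr)^{\otimes 2}\,dU
=\mbE(M),
\end{align*}
where the second equality uses left-invariance of the Haar measure (the substitution $U'=VU$ leaves $dU$ unchanged). Hence $\mbE(M)$ lies in the commutant of the representation $V\mapsto V^{\otimes 2}$ of $\mbU(A)$ on $\cH_A^{\otimes 2}$.

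Second, I would invoke Schur--Weyl duality for $\cH_A^{\otimes 2}$: the representation $V\mapsto V^{\otimes 2}$ decomposes into the symmetric and antisymmetric subspaces, each of which is irreducible, so by Schur's lemma its commutant is two-dimensional and is spanned by the projectors $\tfrac{1}{2}(\ident\pm F)$, or equivalently by $\{\ident_{AA'},F_A\}$. Therefore
\begin{align*}
\mbE(M)=\alpha\,\ident_{AA'}+\beta\,F_A
\end{align*}
for some scalars $\alpha,\beta\in\mbC$.

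Third, I would solve for $\alpha,\beta$ by computing $\tr[\mbE(M)]$ and $\tr[\mbE(M)F]$ in two ways. Using $\tr[\ident_{AA'}]=|A|^2$ and (from Lemma~\ref{lem:swap-trick} applied with $M=N=\ident_A$) $\tr[F]=|A|$ and $\tr[F^2]=\tr[\ident_{AA'}]=|A|^2$, the right-hand side gives
\begin{align*}
\tr[\mbE(M)]=\alpha|A|^2+\beta|A|,\qquad \tr[\mbE(M)F]=\alpha|A|+\beta|A|^2.
\end{align*}
On the other hand, by cyclicity of the trace,
$\tr[\mbE(M)]=\int\tr[U^{\otimes 2}M(U^\dagger)^{\otimes 2}]\,dU=\tr[M]$, and since $F$ commutes with $V\otimes V$ for every $V$ (swapping two identical factors), $(U^\dagger)^{\otimes 2}F U^{\otimes 2}=F$, so $\tr[\mbE(M)F]=\tr[MF]$. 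Combining these identifications yields the stated linear system for $\alpha,\beta$, which completes the proof.

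The main obstacle, and the only non-elementary input, is the Schur--Weyl step: showing that the commutant of $\{V^{\otimes 2}\}$ is exactly two-dimensional. If one wishes to avoid invoking Schur--Weyl directly, one can instead argue that $\mbE(M)$, being invariant under conjugation by all $V^{\otimes 2}$, must preserve the decomposition $\cH_A^{\otimes 2}=\mathrm{Sym}^2\oplus\wedge^2$ and act as a scalar on each summand (by irreducibility of each, which can be proved by a direct highest-weight or character argument), whence it is a linear combination of the two projectors, i.e., of $\ident$ and $F$. All other steps are routine manipulations with the Haar measure and traces.
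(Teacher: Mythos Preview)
Your proof is correct and follows essentially the same approach as the paper: both invoke Schur--Weyl duality to conclude that $\mbE(M)\in\Span\{\ident,F\}$ and then determine $\alpha,\beta$ by computing $\tr[\mbE(M)]$ and $\tr[\mbE(M)F]$. The only cosmetic difference is that the paper phrases the first step as ``$\mbE$ is the orthogonal projection onto $\Span\{\ident,F\}$'' (citing Collins--\'Sniady), whereas you argue directly that $\mbE(M)$ lies in the commutant of the diagonal action and identify that commutant via Schur's lemma; these are the same fact.
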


\begin{proof}
This follows directly from a standard result in Schur-Weyl duality, e.g., \cite[Proposition~2.2]{collins-sniady}. The latter states that $\mbE:\cL(\cH_{A}^{\otimes2})\rightarrow\cL(\cH_{A}^{\otimes2})$ is an orthogonal projection onto $\Span\{\ident,F\}$ under the inner product $\langle A,B \rangle = \tr[A\mdag B]$. Hence, $\mbE(M)$ can be written as $\alpha\cdot\ident_{AA'}+\beta\cdot F_{A}$ as claimed, and the conditions $\tr[\ident \mbE(M)] = \tr[M]$ and $\tr[F \mbE(M)] =\tr[FM]$ must be fulfilled, and these lead to the two conditions on $\alpha$ and $\beta$.
\end{proof}

The following bounds the ratio of the purity of a bipartite state and the purity of the reduced state on one subsystem.

\begin{lem}\label{lem:tr2-bounded}
Let $\xi_{AB} \in\cP(\cH_{AB})$. Then, we have that
\begin{align}
\frac{1}{|A|} \leqslant \frac{\tr\left[ {\xi_{AB}}^2 \right]}{\tr\left[ {\xi_B}^2 \right]} \leqslant |A|\ .
\end{align}
\end{lem}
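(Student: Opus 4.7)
The plan is to split the claim into the two inequalities; only the upper bound actually uses positivity, while the lower bound is a Hilbert--Schmidt identity that works for any Hermitian $\xi_{AB}$.

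For the lower bound $\tr[\xi_{AB}^2] \geq \tr[\xi_B^2]/|A|$, I would write $\xi_{AB} = (\ident_A/|A|)\otimes \xi_B + \xi'_{AB}$, with $\xi'_{AB} := \xi_{AB} - (\ident_A/|A|)\otimes \xi_B$. By construction $\tr_A(\xi'_{AB}) = 0$, so the two summands are orthogonal in the Hilbert--Schmidt inner product: $\tr[((\ident_A/|A|)\otimes \xi_B)\,\xi'_{AB}] = (1/|A|)\tr_B[\xi_B\,\tr_A(\xi'_{AB})] = 0$. Using $\tr[(\ident_A \otimes \xi_B)^2] = |A|\tr[\xi_B^2]$, the cross term drops out and I obtain $\tr[\xi_{AB}^2] = \tr[\xi_B^2]/|A| + \tr[(\xi'_{AB})^2] \geq \tr[\xi_B^2]/|A|$, since $\xi'_{AB}$ is Hermitian.

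For the upper bound, the crucial step is the operator inequality
$$\xi_{AB} \ \leq\ |A|\,\ident_A \otimes \xi_B.$$
Once this is established, positivity of both $\xi_{AB}$ and $|A|\ident_A \otimes \xi_B - \xi_{AB}$ gives $\tr[\xi_{AB}(|A|\ident_A \otimes \xi_B - \xi_{AB})] \geq 0$, which rearranges to $\tr[\xi_{AB}^2] \leq |A|\tr[\xi_{AB}(\ident_A \otimes \xi_B)] = |A|\tr[\xi_B^2]$. To prove the operator inequality itself, I would invoke the generalized Pauli (Weyl) operators $\{W_k\}_{k=1}^{|A|^2}$ on $\cH_A$, normalized so that $W_1 = \ident_A$ and $\sum_k W_k M W_k^\dagger = |A|\tr(M)\ident_A$ for every $M \in \cL(\cH_A)$. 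Applied partially to the $A$-system, this yields $\sum_k (W_k \otimes \ident_B)\,\xi_{AB}\,(W_k^\dagger \otimes \ident_B) = |A|\,\ident_A \otimes \xi_B$. Each summand is positive as a unitary conjugate of $\xi_{AB} \geq 0$, and the $k=1$ term is $\xi_{AB}$ itself, so $\xi_{AB}$ is bounded above by the full sum.

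The only nontrivial point is that operator inequality; everything else is bookkeeping. The Weyl-twirl route is the cleanest argument I see, but an alternative is to reduce to rank-one $\xi_{AB} = \ket{z}\bra{z}$ by spectral decomposition, Schmidt-decompose $\ket{z}$, and verify $\ket{z}\bra{z} \leq |A|\,\ident_A \otimes \tr_A(\ket{z}\bra{z})$ via a Cauchy--Schwarz estimate on the Schmidt coefficients, then use linearity.
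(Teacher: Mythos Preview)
Your proof is correct. For the upper bound you take exactly the paper's route: establish $\xi_{AB} \leq |A|\,\ident_A \otimes \xi_B$ via the Weyl twirl $\sum_k (W_k \otimes \ident_B)\,\xi_{AB}\,(W_k^\dagger \otimes \ident_B) = |A|\,\ident_A \otimes \xi_B$ with $W_1=\ident_A$, and then trace against $\xi_{AB}$. (You spell out that last trace step more explicitly than the paper does, which is fine.)

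For the lower bound, however, you use a genuinely different argument. The paper introduces a copy $A'$ of $A$, rewrites $\tr[\xi_B^2] = \tr[(\xi_{AB}\otimes\ident_{A'})(\xi_{A'B}\otimes\ident_A)]$, and applies Cauchy--Schwarz in the Hilbert--Schmidt inner product to get $\tr[\xi_B^2] \leq |A|\,\tr[\xi_{AB}^2]$. You instead decompose $\xi_{AB}$ orthogonally (in Hilbert--Schmidt) into its $A$-maximally-mixed part $(\ident_A/|A|)\otimes\xi_B$ and a remainder with vanishing $A$-marginal, arriving at the exact identity $\tr[\xi_{AB}^2] = \tr[\xi_B^2]/|A| + \tr[(\xi'_{AB})^2]$. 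Both are one-liners; yours has the mild advantage of exhibiting the deficit term and making the saturation case transparent, and as you note it does not use positivity of $\xi_{AB}$. The paper's tensor-copy Cauchy--Schwarz trick, on the other hand, is a recurring motif in decoupling proofs (it pairs naturally with the swap trick and Haar averaging used elsewhere in the section), so it is worth recognizing even if your route is more self-contained here.
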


\begin{proof}
Letting $A'$ be a system isomorphic to $A$, we first prove the left-hand side
\begin{align}
\tr\left[ {\xi_B}^2 \right] &= \tr\left[ \tr_A\left[ \xi_{AB} \right]^2 \right]\nonumber\\
&= \tr\left[ \tr_A\left[ \xi_{AB} \right]\cdot\tr_{A'}\left[ \xi_{A'B} \right] \right]\nonumber\\
&= \tr\left[ \xi_{AB} \left( \tr_{A'}\left[ \xi_{A'B}\right]  \otimes \ident_A \right) \right]\nonumber\\
&= \tr\left[ (\xi_{AB} \otimes \ident_{A'})(\xi_{A'B}  \otimes \ident_A ) \right]\nonumber\\
&\leqslant \sqrt{\tr \left[ (\xi_{AB} \otimes \ident_{A'})^2 \right]\cdot\tr\left[ (\xi_{A'B}\otimes\ident_{A})^2 \right]}\nonumber\\
&= \tr\left[ {\xi_{AB}}^2 \otimes \ident_{A'} \right]\nonumber\\
&=|A|\cdot\tr\left[ {\xi_{AB}}^2 \right]\ ,
\end{align}
where the inequality is due to an application of Cauchy-Schwarz. The right-hand side follows from the fact that $\xi_{AB} \leqslant |A|\cdot\ident_A \otimes \xi_B$. This can in turn be seen from the fact that we can write
\begin{align}
|A|\cdot\ident_A \otimes \xi_B=\sum_{i=1}^{|A|^{2}}U_{A}^{i}\xi_{AB}(U_{A}^{i})\mdag\ ,
\end{align}
with unitaries $U_{A}^{i}$ such that $\tr\left[(U_{A}^{i})\mdag U_{A}^{j}\right]=0$ for every $i\neq j$, and $U_{A}^{1}=\ident_{A}$.
\end{proof}

In the main proof, we will need to bound the trace distance between two states. The following lemma will allow us to do this.

\begin{lem}\label{lem:pseudo-jensen-renato}
Let $M\in\cL(\cH_{A})$ and $\sigma\in\cP(\cH_{A})$. Then, we have that
\begin{align}
\| M \|_1 \leqslant \sqrt{\tr[\sigma]\cdot\tr[\sigma^{-1/4} M \sigma^{-1/2} M\mdag \sigma^{-1/4}]}\ .
\end{align}
In particular, if $M$ is Hermitian then, we have that
\begin{align}
\| M \|_1 \leqslant \sqrt{\tr[\sigma]\cdot\tr[(\sigma^{-1/4} M \sigma^{-1/4})^2]}\ .
\end{align}
\end{lem}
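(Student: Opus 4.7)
The plan is to write $\|M\|_1$ as a trace $\tr[UM]$ for a unitary $U$ produced by the polar decomposition of $M$, then distribute the powers of $\sigma$ symmetrically between the $U$ and $M$ factors and apply Cauchy--Schwarz in the Hilbert--Schmidt inner product. A \emph{symmetric} split (with $\sigma^{\pm1/4}$ on each side of each factor) is essential, because the right-hand side of the claim contains the sandwich $\sigma^{-1/4}(\cdot)\sigma^{-1/4}$ rather than a one-sided weight.

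Concretely, I would use the polar decomposition $M = V|M|$ and extend the partial isometry $V$ to a full unitary so that $\|M\|_1 = \tr[V\mdag M]$. Assume first that $\sigma$ is invertible; the general case then follows by replacing $\sigma$ with $\sigma + \delta\,\ident$ and letting $\delta \to 0$ (or by restricting to the support of $\sigma$). Cyclicity of the trace gives
\[
\|M\|_1 \;=\; \tr\!\left[(\sigma^{1/4} V\mdag \sigma^{1/4})\,(\sigma^{-1/4} M \sigma^{-1/4})\right].
\]
Setting $X := \sigma^{1/4} V\mdag \sigma^{1/4}$ and $Y := \sigma^{-1/4} M \sigma^{-1/4}$, the Hilbert--Schmidt Cauchy--Schwarz inequality yields $\|M\|_1^2 \leq \tr[X X\mdag]\cdot\tr[Y Y\mdag]$. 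A direct computation gives $Y Y\mdag = \sigma^{-1/4} M \sigma^{-1/2} M\mdag \sigma^{-1/4}$, which is exactly the factor wanted on the right-hand side of the claimed bound.

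The key remaining step---and the place where the main obstacle lies---is to show that $\tr[X X\mdag] = \tr[V\mdag \sigma^{1/2} V \sigma^{1/2}] \leq \tr[\sigma]$. (An asymmetric split of the form $(V\mdag \sigma^{1/2})(\sigma^{-1/2} M)$ would give $\tr[\sigma]$ for free, but at the cost of producing the weaker $\tr[M\mdag \sigma^{-1} M]$ on the $M$-side.) To dispose of this I apply Cauchy--Schwarz a second time, now to the two positive operators $A := V\mdag \sigma^{1/2} V$ and $B := \sigma^{1/2}$, using $\tr[AB] \leq \sqrt{\tr[A^2]\,\tr[B^2]}$. Since $A^2 = V\mdag \sigma V$ has the same trace as $\sigma$ (by cyclicity and unitarity of $V$) and $B^2 = \sigma$, this gives $\tr[X X\mdag] \leq \tr[\sigma]$. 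Combining with the previous paragraph yields the first inequality, and the Hermitian case follows immediately from the identity $(\sigma^{-1/4} M \sigma^{-1/4})^2 = \sigma^{-1/4} M \sigma^{-1/2} M\mdag \sigma^{-1/4}$, which holds whenever $M = M\mdag$.
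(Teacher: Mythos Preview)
Your proof is correct and follows essentially the same route as the paper's: a symmetric $\sigma^{\pm 1/4}$ split, Cauchy--Schwarz to separate the $U$- and $M$-factors, and a second Cauchy--Schwarz to bound $\tr[\sigma^{1/2}V\sigma^{1/2}V\mdag]$ by $\tr[\sigma]$. The only cosmetic difference is that the paper starts from the variational identity $\|M\|_1=\max_U|\tr[UM]|$ over all unitaries, whereas you fix the polar-decomposition unitary from the outset; either starting point leads to the same two inequalities.
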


This is a slight generalization of~\cite[Lemma 5.1.3]{renner-phd}. For completeness we give a different proof here.

\begin{proof}
We calculate
\begin{align}
\| M \|_1 &= \max_{U} \left| \tr[UM] \right|\nonumber\\
&= \max_U \left| \tr\big[(\sigma^{1/4} U \sigma^{1/4})(\sigma^{-1/4} M \sigma^{-1/4})\big] \right|\nonumber\\
&\leqslant \max_U \sqrt{\tr\left[(\sigma^{1/4} U \sigma^{1/4})(\sigma^{1/4} U\mdag \sigma^{1/4})\right]\cdot\tr\left[ \sigma^{-1/4} M \sigma^{-1/2}  M\mdag \sigma^{-1/4}\right]}\nonumber\\
&= \sqrt{\max_U \tr[\sigma^{1/2} U \sigma^{1/2} U\mdag]\cdot\tr\left[ \sigma^{-1/4} M \sigma^{-1/2} M\mdag \sigma^{-1/4} \right]}\nonumber\\
&= \sqrt{\tr[\sigma]\cdot\tr\left[ \sigma^{-1/4} M \sigma^{-1/2} M\mdag \sigma^{-1/4} \right]}\ ,
\end{align}
where the inequality results from an application of Cauchy-Schwarz, and the maximizations are over all unitaries on $A$. The last equality follows from
\begin{align}
\max_U \tr\left[\sigma^{1/2} U \sigma^{1/2} U\mdag\right] &\leqslant \max_U \sqrt{\tr\left[\sigma\right]\cdot\tr\left[U \sigma^{1/2} U\mdag U \sigma^{1/2} U\mdag\right]}\nonumber\\
&= \tr[\sigma]\nonumber\\
&\leqslant \max_U \tr[\sigma^{1/2} U \sigma^{1/2} U\mdag]\ .
\end{align}
\end{proof}


\subsection{Proof of the Non-Smooth Decoupling Theorem (Theorem~\ref{thm:nonsmooth-mainthm})}

Throughout the proof, we will denote with a prime the ``twin'' subsystems used when we take tensor copies of operators, and $F_{S}$ denotes a swap between $S$ and $S'$.

We first use Lemma \ref{lem:pseudo-jensen-renato} to bound the trace norm. For $\sigma_{B}\in\cS_{=}(\cH_{B})$ and $\zeta_{E}\in\cS_{=}(\cH_{E})$ we get
\begin{align}
&\|\cT_{A\rightarrow B}(U_{A}\rho_{AE}U_{A}\mdag)-\tau_{B}\otimes\rho_{E}\|_{1}\nonumber\\
&\leqslant \sqrt{\tr\left[ \left( (\sigma_B \otimes \zeta_E)^{-1/4} (\cT_{A\rightarrow B}(U_{A} \rho_{AE} U_{A}\mdag) - \tau_B \otimes \rho_E)(\sigma_B \otimes \zeta_E)^{-1/4}\right)^2 \right]}\ .
\end{align}
Now define the CPM $\tilde{\cT}_{A \rightarrow B}(\cdot)=\sigma_{B}^{-1/4}\cT_{A\rightarrow B}(\cdot)\sigma_{B}^{-1/4}$ and the operators $\tilde{\tau}_{A'B} =J(\tilde{\cT})$ and $\tilde{\rho}_{AE} = \zeta_{E}^{-1/4}\rho_{AE}\zeta_{E}^{-1/4}$. We then rewrite the above as
\begin{align}
\left\| \cT_{A\rightarrow B}(U_{A} \rho_{AE} U_{A}\mdag) - \tau_B \otimes \rho_E \right\|_1 \leqslant \sqrt{\tr\left[ \left( \mathcal{\tilde{T}}_{A\rightarrow B}(U_{A} \tilde{\rho}_{AE} U_{A}\mdag) - \tilde{\tau}_B \otimes \tilde{\rho}_E\right)^2 \right]}\ .
\end{align}
Using Jensen's inequality we obtain
\begin{align}\label{eqn:nonsmooth-mainthm-inter1}
\int \left\| \cT_{A\rightarrow B}(U_{A} \rho_{AE} U_{A}\mdag) - \tau_B \otimes \rho_E \right\|_1 dU \leqslant \sqrt{\int \tr\left[ \left( \mathcal{\tilde{T}}_{A\rightarrow B}(U_{A} \tilde{\rho}_{AE} U_{A}\mdag) - \tilde{\tau}_B \otimes \tilde{\rho}_E\right)^2 \right] dU}\ .
\end{align}
We now simplify the integral
\begin{align}
&\int\tr\left[ \left( \tilde{\cT}_{A\rightarrow B}(U_{A} \tilde{\rho}_{AE} U_{A}\mdag) - \tilde{\tau}_B \otimes \tilde{\rho}_E \right)^2 \right] dU\nonumber\\
&= \int \tr\left[ \left( \tilde{\cT}_{A\rightarrow B}(U_{A} \tilde{\rho}_{AE} U_{A}\mdag) \right)^2 \right]dU - 2 \int \tr\left[ \tilde{\cT}_{A\rightarrow B}(U_{A} \tilde{\rho}_{AE} U_{A}\mdag) \left( \tilde{\tau}_B \otimes \tilde{\rho}_E \right) \right]dU + \tr\left[ \left( \tilde{\tau}_B \otimes \tilde{\rho}_E \right)^2 \right]\nonumber\\
&= \int \tr\left[ \left( \tilde{\cT}_{A\rightarrow B}(U_{A} \tilde{\rho}_{AE} U_{A}\mdag) \right)^2 \right]dU - 2 \tr\left[ \tilde{\cT}_{A\rightarrow B}\left(\int U_{A} \tilde{\rho}_{AE} U_{A}\mdag dU\right)  \left( \tilde{\tau}_B \otimes\tilde{\rho}_E \right) \right] + \tr\left[ \left( \tilde{\tau}_B \otimes \tilde{\rho}_E \right)^2 \right]\nonumber\\
&= \int \tr\left[ \left( \tilde{\cT}_{A\rightarrow B}(U_{A} {\tilde{\rho}}_{AE} U_{A}\mdag) \right)^2 \right]dU - \tr\left[ \tilde{\tau}_{B}^{2} \right]\cdot\tr\left[ \tilde{\rho}_{E}^{2} \right]\ .
\end{align}
We rewrite the first term as follows
\begin{align}\label{eqn:attack-first-term}
\nonumber\int \tr\left[ \left( \tilde{\cT}_{A\rightarrow B}(U_{A} \tilde{\rho}_{AE} U_{A}\mdag) \right)^2 \right]dU &= \int \tr\left[ \left( \tilde{\cT}_{A\rightarrow B}(U_{A} \tilde{\rho}_{AE} U_{A}\mdag)  \right)^{\otimes 2} F_{BE}\right] dU\\
\nonumber&= \int \tr\left[ \left( \tilde{\cT}^{\otimes 2}_{A\rightarrow B}\left(U_{A}^{\otimes 2} \tilde{\rho}_{AE}^{\otimes 2} (U^{\dagger}_{A})^{\otimes 2}\right) \right) F_{BE}\right] dU\\
\nonumber&= \int \tr\left[ \tilde{\rho}_{AE}^{\otimes 2}  \left( \left( (U_{A}^{\dagger})^{\otimes 2} (\tilde{\cT}_{B\rightarrow A}\mdag)^{\otimes 2}(F_{B})U_{A}^{\otimes 2} \right) \otimes F_E \right) \right] dU\\
&=\tr\left[ \tilde{\rho}_{AE}^{\otimes 2}\left(\left(\int(U_{A}^{\dagger})^{\otimes 2} (\tilde{\cT}_{B\rightarrow A}\mdag)^{\otimes 2}(F_{B})U_{A}^{\otimes 2}dU\right)\otimes F_E  \right)\right]\ ,
\end{align}
where we have used the swap trick (Lemma \ref{lem:swap-trick}) with $F_{BE}=F_{B}\otimes F_{E}$ in the first equality, the definition of the adjoint of a superoperator in the third equality and the linearity of the trace in forth equality. We now compute the integral using a lemma about Haar distributed unitaries (Lemma \ref{lem:haar-integral})
\begin{align}
\int (U_{A}^{\dagger})^{\otimes 2} (\tilde{\cT}_{B\rightarrow A}\mdag)^{\otimes 2}(F_{B}) U_{A}^{\otimes 2}dU=\alpha\cdot\ident_{AA'}+\beta\cdot F_{A}\ ,
\end{align}
where $\alpha$ and $\beta$ satisfy the following equations
\begin{align}
\alpha|A|^2+\beta|A|=\tr\left[(\tilde{\cT}_{B\rightarrow A}\mdag)^{\otimes 2}(F_{B})\right]= \tr\left[F_B\tilde{\cT}_{A\rightarrow B}^{\otimes 2}(\ident_{AA'})\right]&=|A|^2\cdot\tr\left[ F_B \tilde{\tau}_B^{\otimes 2} \right]\nonumber\\
&=|A|^2\cdot\tr\left[\tilde{\tau}_{B}^{2}\right]
\end{align}
and
\begin{align}
\alpha|A|+\beta|A|^2&=\tr\left[(\tilde{\cT}_{B\rightarrow A}\mdag)^{\otimes 2}(F_{B})  F_A \right]\nonumber\\
&=\tr\left[ F_{B} \tilde{\cT}^{\otimes 2}_{A\rightarrow B}(F_{A})\right]\nonumber\\
&=|A|^2\cdot\tr\left[ F_B\cdot\tr_{AA'} \left[ \tilde{\tau}_{AB}^{\otimes 2} (F_A \otimes \ident_{BB'}) \right] \right]\nonumber\\
&=|A|^2\cdot\tr\left[ (\ident_{AA'} \otimes F_B) \tilde{\tau}_{AB}^{\otimes 2}(F_A \otimes \ident_{BB'}) \right]\nonumber\\
&=|A|^2\cdot\tr\left[ F_{AB} \tilde{\tau}_{AB}^{\otimes 2} \right]\nonumber\\
&=|A|^2\cdot\tr\left[ \tilde{\tau}_{AB}^2 \right]\ .
\end{align}
In the third equality, we have used the fact that $\tilde{\tau}_{AB}$ is a Choi-Jamio\l{}kowski representation of $\tilde{\cT}$ (Lemma~\ref{choi}), and the fourth equality is due to the fact that the adjoint of the partial trace is tensoring with the identity. Solving this system of equations yields
\begin{align}
\alpha &= \tr\left[ \tilde{\tau}_{B}^{2} \right]\cdot\left( \frac{|A|^2 - \frac{|A|\cdot\tr\left[ \tilde{\tau}_{AB}^{2} \right]}{\tr\left[ \tilde{\tau}_{B}^{2}\right]}}{|A|^2-1} \right)\\
\beta &= \tr\left[ \tilde{\tau}_{AB}^{2} \right]\cdot\left( \frac{|A|^2 - \frac{|A|\cdot\tr\left[ \tilde{\tau}_{B}^{2} \right]}{\tr\left[ \tilde{\tau}_{AB}^{2}\right]}}{|A|^2-1} \right)\ .
\end{align}
By applying Lemma \ref{lem:tr2-bounded}, we can simplify this to $\alpha \leqslant \tr\left[ \tilde{\tau}_{B}^{2} \right]$ and $\beta \leqslant \tr\left[ \tilde{\tau}_{AB}^{2} \right]$. Substituting this into (\ref{eqn:attack-first-term}) and using the swap trick twice (Lemma \ref{lem:swap-trick}), and then substituting into (\ref{eqn:nonsmooth-mainthm-inter1}) yields
\begin{align}
\int \left\| \cT_{A\rightarrow B}(U_{A} \rho_{AE} U_{A}\mdag) - \tau_B \otimes \rho_E \right\|_1 dU \leqslant \sqrt{\tr\left[ \tilde{\tau}_{AB}^{2} \right]\cdot\tr\left[ \tilde{\rho}_{AE}^{2} \right]}\ .
\end{align}
Finally we get the theorem by using the definitions of $\tilde{\tau}_{AB}$, $\tilde{\rho}_{AE}$ and the definition of the conditional collision entropy (Definition~\ref{h2}).
\qed


\subsection{Proof of the Main Decoupling Theorem (Theorem~\ref{thm:achievability})}

We now prove our main result, which is obtained from the non-smooth decoupling theorem (Theorem~\ref{thm:nonsmooth-mainthm}) by replacing the conditional collision entropies by smooth conditional min-entropies. 

First, note that the conditional collision entropy is always greater or equal to the conditional min-entropy (Lemma~\ref{lem:h2-hmin}) and therefore we are allowed to replace the $H_2$ terms on the right-hand side of the statement of Theorem~\ref{thm:nonsmooth-mainthm} by $H_{\min}$ terms. Thus we only have to consider the smoothing.

Let $\widehat{\rho}^{AE}\in\cB^{\eps}(\rho_{AE})$ be such that $H_{\min}^{\eps}(A|E)_{\rho}=H_{\min}(A|E)_{\widehat{\rho}}$ and $\widehat{\tau}_{AB}\in\cB^{\eps}(\tau_{AB})$ be such that $H_{\min}^{\eps}(A|B)_{\tau}=H_{\min}(A|B)_{\widehat{\tau}}$.

Furthermore write $\widehat{\tau}_{AB}-\tau_{AB}=\Delta^{+}_{AB}-\Delta^{-}_{AB}$, where $\Delta^{\pm}_{AB}\in\cP(\cH_{AB})$ have orthogonal support, and likewise, $\widehat{\rho}_{AE}-\rho_{AE}=\delta^{+}_{AE}-\delta^{-}_{AE}$ with $\delta^{+}_{AE}$ and $\delta^{-}_{AE}$ having orthogonal support as well as $\delta^{\pm}_{AE}\in\cP(\cH_{AE})$. By the equivalence of purified distance and trace distance (Lemma~\ref{lem:purified-dist-vs-trace-dist}) we have $\| \widehat{\tau}_{AB}-\tau_{AB}\|_1 \leqslant 2\eps$ and hence $\left\|\Delta^{\pm}_{AB}\right\|_1 \leqslant 2\eps$.

Moreover define $\widehat{\cT}_{A\rightarrow B}$, $\mathcal{D}^{-}_{A\rightarrow B}$ and $\mathcal{D}^{+}_{A\rightarrow B}$ as the unique superoperators that are such that $\widehat{\tau}_{AB}=J(\widehat{\cT}_{A\rightarrow B})$, $\Delta^{-}_{AB} = J(\mathcal{D}^{-}_{A\rightarrow B})$ and $\Delta^{+}_{AB} = J(\mathcal{D}^{+}_{A\rightarrow B})$, respectively.

Using the non-smooth decoupling theorem (Theorem~\ref{thm:nonsmooth-mainthm}) we get
\begin{align}
2^{-\demi H_{\min}^{\eps}(A|B)_{\tau}-\demi H_{\min}^{\eps}(A|E)_{\rho}}&\geqslant\int\left\|\widehat{\cT}_{A\rightarrow B}(U_{A}\widehat{\rho}_{AE}U_{A}\mdag) - \widehat{\tau}_B \otimes \widehat{\rho}_E\right\|_1 dU\nonumber\\
&\geqslant\int\left\|\widehat{\cT}_{A\rightarrow B}(U_{A}\widehat{\rho}_{AE}U_{A}\mdag) - \tau_B \otimes \rho_E \right\|_1 dU-4\eps\nonumber\\
&\geqslant\int\left\|\cT_{A\rightarrow B}(U_{A}\rho_{AE}U_{A}\mdag)-\tau_B\otimes\rho_E\right\|_1 dU\nonumber\\
&-\int\left\|\widehat{\cT}_{A\rightarrow B}(U_{A}\rho_{AE}U_{A}\mdag)-\widehat{\cT}_{A\rightarrow B}(U_{A}\widehat{\rho}_{AE}U_{A}\mdag)\right\|_1 dU\nonumber\\
&-\int\left\|\cT_{A\rightarrow B}(U_{A}\rho_{AE}U_{A}\mdag)-\widehat{\cT}_{A\rightarrow B}(U_{A}\rho_{AE}U_{A}\mdag)\right\|_1dU-4\eps\ ,
\end{align}
where we have used the triangle inequality for the trace distance in the second inequality. We now deal with the second term above
\begin{align}
&\int\left\|\widehat{\cT}_{A\rightarrow B}(U_{A}\rho_{AE}U_{A}\mdag)-\widehat{\cT}_{A\rightarrow B}(U_{A}\widehat{\rho}_{AE}U_{A}\mdag) \right\|_1 dU\nonumber\\
&= \int \left\| \widehat{\cT}_{A\rightarrow B}(U_{A}(\delta^{+}_{AE} - \delta^{-}_{AE}) U_{A}\mdag) \right\|_1 dU\nonumber\\
&\leqslant \int \left\| \widehat{\cT}_{A\rightarrow B}(U_{A}\delta^{+}_{AE} U_{A}\mdag) \right\|_1 dU + \int \left\| \widehat{\cT}_{A\rightarrow B}(U_{A}\delta^{-}_{AE} U_{A}\mdag) \right\|_1 dU\nonumber\\
&= \int \tr\left[\widehat{\cT}_{A\rightarrow B}(U_{A}\delta^{+}_{AE} U_{A}\mdag)\right]dU + \int\tr\left[\widehat{\cT}_{A\rightarrow B}(U_{A}\delta^{-}_{AE}U_{A}\mdag)\right]dU\nonumber\\
&= \tr\left[\widehat{\cT}_{A\rightarrow B}\left(\frac{\ident_{A}}{|A|}\right) \right]\cdot\left(\tr\left[\delta^{+}_{AE}\right] + \tr\left[\delta^{-}_{AE}\right]\right)\nonumber\\
&\leqslant 4\eps\ .
\end{align}
We deal with the third term in a similar fashion
\begin{align}
&\int\left\| \cT_{A\rightarrow B}(U_{A}\rho_{AE} U_{A}\mdag) - \widehat{\cT}_{A\rightarrow B}(U_{A}\rho_{AE} U_{A}\mdag) \right\|_1 dU\nonumber\\
&= \int \left\| (\mathcal{D}^{+}_{A\rightarrow B} - \mathcal{D}^{-}_{A\rightarrow B})(U_{A}\rho_{AE} U_{A}\mdag) \right\|_1 dU\nonumber\\
&\leqslant \int \left\| \mathcal{D}^{+}_{A\rightarrow B}(U_{A}\rho_{AE} U_{A}\mdag) \right\|_1 dU + \int \left\| \mathcal{D}^{-}_{A\rightarrow B}(U_{A}\rho_{AE} U_{A}\mdag) \right\|_1 dU\nonumber\\
&= \int \tr\left[\mathcal{D}^{+}_{A\rightarrow B}(U_{A}\rho_{AE} U_{A}\mdag)\right] dU + \int \tr\left[ \mathcal{D}^{-}_{A\rightarrow B}(U_{A}\rho_{AE} U_{A}\mdag) \right] dU\nonumber\\
&= \tr\left[\mathcal{D}^{+}_{A\rightarrow B}\left(\frac{\ident_{A}}{|A|}\otimes \rho_E\right)\right] + \tr\left[ \mathcal{D}^{-}_{A\rightarrow B}\left(\frac{\ident_{A}}{|A|}\otimes \rho_E\right) \right]\nonumber\\
&= \tr\left[\Delta^{+}_{A}\otimes \rho_E\right]+\tr\left[\Delta^{-}_{A}\otimes\rho_E\right]\nonumber\\
&\leqslant 4\eps\ .
\end{align}
This results in
\begin{align}
\int\left\| \cT_{A\rightarrow B}(U_{A}\rho_{AE} U_{A}\mdag) - \tau_{B}\otimes\rho_{E}\right\|_1 dU \leqslant 2^{- \demi H_{\min}^{\eps}(A|E)_{\rho}-\demi H_{\min}^{\eps}(A|B)_{\tau}} + 12\eps\ .
\end{align}
\qed


\section{Converse} \label{sec:converse}

The main purpose of this section is to state and prove a theorem (Theorem~\ref{thm:converse}) which implies that the achievability result of the previous section (Theorem~\ref{thm:achievability}) is essentially optimal for many natural choices of the mapping $\cT$.


\subsection{Statement of the Converse Theorem}

According to Theorem~\ref{thm:achievability}, decoupling is achieved whenever the term $H_{\min}^{\eps}(A|E)_{\rho}+H_{\min}^{\eps}(A|B)_{\tau}$ is sufficiently larger than $0$. Our converse now says that this is also a necessary condition (up to additive terms of the order $\log(1/\eps)$ and the scaling of the smoothing parameter) if one replaces the smooth conditional min-entropy in the second term, $H_{\min}^{\eps}(A|B)_{\tau}$ (which characterizes the channel), by a smooth conditional max-entropy.

\begin{thm}[Decoupling Converse]\label{thm:converse}
Let $\rho_{AE}\in\cS_{=}(\cH_{AE})$, $\mathcal{T}_{A \rightarrow B}$ be a TPCPM, and suppose that
\begin{align}
\left\|\mathcal{T}_{A\rightarrow B}(\rho_{AE})-\mathcal{T}_{A\rightarrow B}(\rho_A)\otimes\rho_E\right\|_1\leqslant\varepsilon\ .
\end{align}
Then, we have for any $\varepsilon',\varepsilon''>0$ that
\begin{align}
H_{\min}^{2\sqrt{6\varepsilon''+2\varepsilon}+2\sqrt{\varepsilon'}+\varepsilon''}(A|E)_{\rho}+H_{\max}^{ε''}(A|B)_{\omega}\geqslant-\log\frac{1}{\varepsilon'}\ ,
\end{align}
where $\omega_{AB}=\cT_{A'\rightarrow B}(\rho_{AA'})$ with $\rho_{AA'}\in\cS_{=}(\cH_{AA'})$ a purification of $\rho_{A}$, and $\cH_{A'}\cong\cH_{A}$.
\end{thm}

Note that we could also write $\omega_{AB}=|A|\left(\sqrt{\rho_A}\right)^\tps J(\mathcal{T}) \left(\sqrt{\rho_A}\right)^{\tps}$. In our formulation of the converse theorem, the mapping $\cT$ is not necessarily prepended by a unitary and the state that appears in the entropy term of the TPCPM is given by the more general expression $\omega_{AB}=\cT_{A'\rightarrow B}(\rho_{AA'})$ (rather than $\tau_{AB}=J(\cT)$ as in Theorem~\ref{thm:achievability}, corresponding to the case where $\rho_A$ is fully mixed). However, if we apply the converse to a TPCPM of the form $\bar{\cT}=\cT\circ\cU$, where $\cU$ corresponds to a random unitary channel applied to the input, Theorem~\ref{thm:converse} simplifies to the following.

\begin{cor}\label{cor:conv}
For the same premises as in Theorem~\ref{thm:converse}, but applied to the TPCPM $\bar{\cT}_{A\rightarrow B}=\cT_{A\rightarrow B}\circ\cU_{A}$, where $\cU_{A}$ corresponds to a Haar random unitary channel applied to the input, we have that
\begin{align}
H_{\min}^{2\sqrt{6\varepsilon''+2\varepsilon}+2\sqrt{\varepsilon'}+\varepsilon''}(A|E)_{\rho}+H_{\max}^{ε''}(A|B)_{\tau}\geqslant-\log\frac{1}{\varepsilon'}\ ,
\end{align}
where $\tau_{AB}=J(\cT)$.
\end{cor}

\begin{proof}
By assumption we have
\begin{align}
\int_{\mbU(A)}\left\|\cT_{A\rightarrow B}(U_{A}\rho_{AE}U_{A}^{\dagger})-\cT_{A\rightarrow B}(U_{A}\rho_A U_{A}^{\dagger})\otimes\rho_E\right\|_1 dU\leqslant\varepsilon\ .
\end{align}
and since the unitary $U_{A}$ is chosen at random, this is equivalent to
\begin{align}\label{eq:Fmap}
\left\|\cT_{A\rightarrow B}\circ\cF_{A\rightarrow AU}(\rho_{AE})-\cT_{A\rightarrow B}\circ\cF_{A\rightarrow AU}(\rho_A)\otimes\rho_E\right\|_1\leqslant\varepsilon\ ,
\end{align}
where $\cF_{A\rightarrow AU}$ denotes the TPCPM that chooses at random a unitary $U_{A}$ and outputs the choice of $U_{A}$. Now, let $\sigma_{AUER}$ be a purification of $\sigma_{AUE}=\cF_{A\rightarrow AU}(\rho_{AE})$ and note that $\sigma_{A}=\frac{\ident_{A}}{|A|}$ as well as $\sigma_{E}=\rho_{E}$. We apply Theorem~\ref{thm:converse} to~\eqref{eq:Fmap} with the map $\cT_{A\rightarrow B}$ and the state $\sigma_{AUE}$ to get
\begin{align}\label{eq:new_converse}
H_{\min}^{\delta}(A|UE)_{\sigma}+H_{\max}^{ε''}(UER|B)_{\cT(\sigma)}\geqslant-\log\frac{1}{\varepsilon'}\ ,
\end{align}
for $\delta=2\sqrt{6\varepsilon''+2\varepsilon}+2\sqrt{\varepsilon'}+\varepsilon''$. Since the state $\sigma_{AUER}$ and the maximally entangled state $\ket{\Phi}\bra{\Phi}_{AA'}$ are both purifications of $\frac{\ident_{A}}{|A|}$, there exists by Uhlmann's theorem~\cite{uhlmann} an isometry $\cW_{UER\rightarrow A'}$ such that $\ket{\Phi}\bra{\Phi}_{AA'}=\cW_{UER\rightarrow A'}(\sigma_{AUER})$. Hence, we have that $\cT_{A\rightarrow B}(\ket{\Phi}\bra{\Phi}_{AA'})=\cW_{UER\rightarrow A'}\circ\cT_{A\rightarrow B}(\sigma_{AUER})$, and by the invariance of the smooth conditional max-entropy under local isometries (Lemma \ref{lem:isometry}) we get
\begin{align}
H_{\max}^{ε''}(UER|B)_{\cT(\sigma)}=H_{\max}^{ε''}(A'|B)_{\cT(\ket{\Phi}\bra{\Phi})}=H_{\max}^{ε''}(A|B)_{\tau}\ .
\end{align}
Finally, we show that $H_{\min}^{\delta}(A|UE)_{\sigma}$ in~\eqref{eq:new_converse} is upper bounded by $H_{\min}^{\delta}(A|E)_{\rho}$. Since the register $U$ in $\sigma_{AUE}$ is classical, we can copy $U$ to another register $U'$ resulting in the state $\sigma_{AUU'E}$. With Lemma~\ref{lem:cc_equal} we then have
\begin{align}
H_{\min}^{\delta}(A|UE)_{\sigma}=H_{\min}^{\delta}(AU'|UE)_{\sigma}\ .
\end{align}
But now there exists an isometry $\cV_{AU'\rightarrow A}$ that reverses the action of the TPCPM $\cF$ such that $\cV_{AU'\rightarrow A}(\sigma_{AU'E})=\rho_{AE}$ (we let $\cV$ act on the copy $U'$ instead of $U$). Using the data processing inequality for the smooth conditional min-entropy (Lemma \ref{lem:strong}) and the invariance of the smooth conditional min-entropy under local isometries (Lemma \ref{lem:isometry}), we conclude
\begin{align}
H_{\min}^{\delta}(AU'|UE)_{\sigma}\leq H_{\min}^{\delta}(AU'|E)_{\sigma}= H_{\min}^{\delta}(A|E)_{\rho}\ .
\end{align}
\end{proof}

It can also be verified that the two terms, $H_{\min}^{\eps}(A|B)_{\tau}$ (from the achievability in Theorem~\ref{thm:achievability}) and $H_{\max}^{\eps}(A|B)_{\tau}$ (from the converse in Corollary~\ref{cor:conv}), coincide whenever the relevant states are essentially flat (i.e., proportional to projectors). This is the case for many channels used in applications (e.g., for state merging, cf.~Section~\ref{sec:applications}). Examples of such channels are given in Table~\ref{tb:mapping}. Furthermore, as we shall explain in the discussion section (Section~\ref{sec:discussion}), the two terms coincide asymptotically for iid channels.


\subsection{Proof of the Converse Theorem (Theorem~\ref{thm:converse})}

Let $\rho_{AER}$ be a purification of $\rho_{AE}$, $\cW_{A\rightarrow BB'}$ a Stinespring dilation~\cite{Stinespring55} of $\cT_{A\rightarrow B}$ and define
\begin{align}
\tilde{\sigma}_{BB'ER}=\ket{\tilde{\sigma}}\bra{\tilde{\sigma}}_{BB'ER}=\cW_{A\rightarrow BB'}(ρ_{AER})\ .
\end{align}
We have by Uhlmann's theorem~\cite{uhlmann} that $\omega_{AB}$ and $\tilde{\sigma}_{BER}$ are related by an isometry $\cV_{A\rightarrow ER}$, and hence by the invariance of the smooth conditional max-entropy under local isometries (Lemma~\ref{lem:isometry}) that
\begin{align}
H^{ε''}_{\max}(A|B)_{\omega}=H_{\max}^{ε''}(ER|B)_{\tilde{\sigma}}\ .
\end{align}
Furthermore, let $σ_{BB'ER}=\ket{σ}\bra{σ}_{BB'ER}$ be a subnormalized state with $P(σ,\tilde{σ}) ≤ ε''$ such that $H_{\max}(ER|B)_{\sigma}=H^{ε''}_{\max}(A|B)_{\omega}$, as well as $\sigma_{BB'ER}=\ket{\bar{\sigma}}\bra{\bar{\sigma}}_{BB'ER}$ such that $\bar{\sigma}_{BE} = \sigma_B \otimes \sigma_E$ and
\begin{align}
F(\sigma_{BB'ER}, \bar{\sigma}_{BB'ER}) = F(\sigma_{BE}, \sigma_B \otimes \sigma_E)\ .
\end{align}
Such a state exists by Uhlmann's theorem~\cite{uhlmann}, and can be shown to satisfy $P(\bar{\sigma},\sigma) \leqslant \sqrt{6\varepsilon''+2\varepsilon}$. The latter bound is obtained from 
\begin{align}
\| \sigma_{BE} - \bar{\sigma}_{BE} \|_1 &\leqslant \| \sigma_{BE} - \tilde{\sigma}_{BE} \|_1 + \| \tilde{\sigma}_{BE} - \bar{\sigma}_{BE} \|_1\nonumber\\
&\leqslant \| \sigma_{BE} - \tilde{\sigma}_{BE} \|_1 + \| \tilde{\sigma}_{BE} - \tilde{\sigma}_B \otimes \tilde{\sigma}_E \|_1 + \| \tilde{\sigma}_B \otimes \tilde{\sigma}_E - \sigma_B \otimes \sigma_E \|_1\nonumber\\
&\leqslant \varepsilon'' + \varepsilon + \| \tilde{\sigma}_B \otimes \tilde{\sigma}_E - \tilde{\sigma}_B \otimes \sigma_E \|_1 + \| \tilde{\sigma}_B \otimes \sigma_E - \sigma_B \otimes \sigma_E \|_1\nonumber\\
&\leqslant 3\varepsilon'' + \varepsilon\ ,
\end{align}
combined with the equivalence of purified distance and trace distance (Lemma~\ref{lem:purified-dist-vs-trace-dist}). Now, we know from a technical lemma about the conditional max-entropy (Lemma~\ref{lem:hmax-optimal-z}) that
\begin{align}
\sigma_{BB'ER} \leqslant 2^{\hmax(ER|B)_{\sigma|\sigma}}\cdot Y_{EBR} \otimes \ident_{B'}\ ,
\end{align}
where
\begin{align}
Y_{BER} = 2^{-\demi \hmax(ER|B)_{\sigma|\sigma}}\cdot\sigma_B^{-1/2} \sqrt{\sigma_B^{1/2} \sigma_{BER} \sigma_B^{1/2}} \sigma_{B}^{-1/2}\ .
\end{align}
This implies that
\begin{align}\label{eqn:opeq-sigma}
\sigma_{BB'ER} \leqslant \frac{2^{\hmax(ER|B)_{\sigma|\sigma}}}{\varepsilon'}\cdot\left((1-\varepsilon')\cdot\sigma_B^{-1/2}\bar{\sigma}_{BER}\sigma_B^{-1/2}+\varepsilon'\cdot Y_{BER} \right) \otimes \ident_{B'}
\end{align}
for any $\varepsilon'>0$. Tracing out the $R$ system, we get
\begin{align}
\sigma_{BEB'} \leqslant \frac{2^{\hmax(ER|B)_{\sigma|\sigma}}}{\varepsilon'}\cdot\left((1-\varepsilon')\cdot\ident_B\otimes\sigma_E+\varepsilon'\cdot Y_{BE}\right)\otimes\ident_{B'}\ .
\end{align}
We now define $G_{BE}=\sqrt{1-\varepsilon'}\cdot\sigma_E^{1/2}((1-\varepsilon')\cdot\ident_{B}\otimes\sigma_E+\varepsilon'\cdot Y_{BE})^{-1/2}$. Note that $G$ is a contraction, i.e., $\| G \|_{\infty}\leqslant 1$,
\begin{align}
G G^{\dagger} &= (1-\varepsilon')\cdot\sigma_E^{1/2} \left( (1-\varepsilon')\cdot\ident_B \otimes \sigma_E + \varepsilon'\cdot Y_{BE} \right)^{-1} \sigma_E^{1/2}\nonumber\\
&\leqslant (1-\varepsilon')\cdot\sigma_E^{1/2} \left( (1-\varepsilon')\cdot\ident_B \otimes \sigma_E \right)^{-1} \sigma_E^{1/2}\nonumber\\
&= \ident_{BE}\ ,
\end{align}
where we have used the operator monotonicity of $f(t) = -1/t$.  At this point, we conjugate both sides of~\eqref{eqn:opeq-sigma} by $G_{BE}$ to get
\begin{align}
G_{BE} \sigma_{BEB'} G_{BE}^{\dagger} &\leqslant \frac{(1-\varepsilon')\cdot2^{\hmax(ER|B)_{\sigma|\sigma}}}{\varepsilon'}\cdot\sigma_E \otimes \ident_{BB'}\nonumber\\
&\leqslant\frac{2^{\hmax(ER|B)_{\sigma|\sigma}}}{\varepsilon'}\cdot\sigma_E \otimes \ident_{BB'}\ .\label{eqn:right-after-G}
\end{align}
Let us now define $\ket{\psi}_{BERB'}=G_{BE} \ket{\sigma}_{BERB'}$ and note that $\psi_{BERB'}=\ket{\psi}\bra{\psi}_{BERB'}$ is a subnormalized state since $G$ is a contraction. Then, we can rewrite
(\ref{eqn:right-after-G}) as
\begin{align}
\psi_{BEB'}\leqslant\frac{2^{\hmax(ER|B)_{\sigma|\sigma}}}{\varepsilon'}\cdot\sigma_E \otimes \ident_{BB'}\ ,
\end{align}
which implies
\begin{align}
\hmin(BB'|E)_{\psi|\sigma} \geqslant -\hmax(ER|B)_{\sigma|\sigma} - \log(1/\varepsilon')\ .
\end{align}
We will now need to show that $\psi_{BEB'}$ is $(2\sqrt{6\varepsilon''+2\varepsilon}+2\sqrt{\varepsilon'}+\varepsilon'')$-close to $\tilde{\sigma}_{BEB'}$, because the invariance of the smooth conditional min-entropy under local isometries (Lemma~\ref{lem:isometry}) then implies the claim
\begin{align}
\hmin^{2\sqrt{6\varepsilon''+2\varepsilon}+2\sqrt{\varepsilon'}+\varepsilon''}(A|E)_{\rho} + \hmax^{\varepsilon''}(A|B)_{\omega} \geqslant  - \log(1/\varepsilon')\ .
\end{align}
To this end, we shall define the following vectors
\begin{align}
&\ket{\psi'}_{BERB'}=G_{BE}^{\dagger}\ket{\bar{\sigma}}_{BERB'}\\
&\ket{\psi''}_{BERB'}=G_{BE}\ket{\bar{\sigma}}_{BERB'}\\
&\ket{\tilde{\psi}}_{BERB'}=\sqrt{1-\varepsilon'}\cdot G_{BE}^{-1} \ket{\bar{\sigma}}_{BERB'}\ .
\end{align}
We first show that all these vectors define subnormalized states such that the purified distance between them is well-defined. Since $G_{BE}$ is a contraction, we immediately get that $\| \ket{\psi'}_{BERB'}\| \leqslant 1$ and $\| \ket{\psi''}_{BERB'}\| \leqslant 1$. Furthermore, we have that
\begin{align}
\left\| \ket{\tilde{\psi}}_{BERB'} \right\|^2 &= (1-\varepsilon')\cdot\bra{\bar{\sigma}} {G_{BE}^{-1}}^{\dagger} G_{BE}^{-1} \ket{\bar{\sigma}}\nonumber\\
&= \bra{\bar{\sigma}} \sigma_E^{-1/2} \left( (1-\varepsilon')\cdot\ident_{B}\otimes\sigma_E + \varepsilon' Y_{BE} \right) \sigma_{E}^{-1/2} \ket{\bar{\sigma}}\nonumber\\
&= 1-\varepsilon' + \varepsilon'\cdot\bra{\bar{\sigma}} \sigma_E^{-1/2} Y_{BE} \sigma_E^{-1/2} \ket{\bar{\sigma}}\nonumber\\
&= 1-\varepsilon' + \varepsilon'\cdot\tr\left[ Y_{BE} \sigma_E^{-1/2} \bar{\sigma}_{EB} \sigma_E^{-1/2} \right]\nonumber\\
&= 1-\varepsilon' + \varepsilon'\cdot\tr\left[ Y_{BE} \sigma_B\right]\nonumber\\
&= 1\ .
\end{align}
We have $\braket{\tilde{\psi}}{\psi'} = \sqrt{1-\varepsilon'}$, and
\begin{align}
\braket{\bar{\sigma}}{\tilde{\psi}} &= \sqrt{1-\varepsilon'}\cdot\bra{\bar{\sigma}} G_{BE}^{-1}  \ket{\bar{\sigma}}\nonumber\\
&= \tr\left[ (\sigma_B \otimes \sigma_E) \left( (1-\varepsilon')\cdot\ident_{B}\otimes\sigma_E + \varepsilon' Y_{BE} \right)^{1/2} \sigma_E^{-1/2} \right]\nonumber\\
&= \tr\left[ (\sigma_B \otimes \sigma_E^{1/2}) \left( (1-\varepsilon')\cdot\ident_{B}\otimes\sigma_E + \varepsilon' Y_{BE} \right)^{1/2}  \right]\nonumber\\
&\geqslant \tr\left[ (\sigma_B \otimes \sigma_E^{1/2})\cdot\sqrt{1-\varepsilon'}\cdot(\ident_{B}\otimes\sigma_E^{1/2})\right]\nonumber\\
&= \sqrt{1-\varepsilon'}\cdot\tr\left[ \sigma_B \otimes \sigma_E  \right]\nonumber\\
&= \sqrt{1-\varepsilon'}\ ,
\end{align}
where the inequality is due to the operator monotonicity of the square-root function. Therefore, we have that $P(\psi', \bar{\sigma}) \leqslant 2\sqrt{\varepsilon'}$ and furthermore $P(\psi'', \bar{\sigma}) = P(\psi', \bar{\sigma})$, since
\begin{align}
F(\psi'', \bar{\sigma}) = \bra{\bar{\sigma}} G_{BE}^{\dagger} \ket{\bar{\sigma}} = F(\bar{\sigma}, \psi')\ .
\end{align}
Since conjugation by $G$ is trace-non-increasing, we also have $P(\psi'', \psi) \leqslant P(\sigma, \bar{\sigma})\leq\sqrt{6\varepsilon''+2\varepsilon}$. This implies
\begin{align}
P(\psi, \tilde{\sigma})&\leqslant P(\psi, \psi'') + P(\psi'', \bar{\sigma})+P(\bar{\sigma},\sigma)+P(\sigma,\tilde{\sigma})\nonumber\\
&\leqslant \sqrt{6 \varepsilon'' + 2\varepsilon} +2 \sqrt{\varepsilon'}+\sqrt{6 \varepsilon'' + 2\varepsilon}+\varepsilon''\ .
\end{align}
\qed


\section{One-Shot State Merging}\label{sec:applications}

As an example application of the decoupling theorem and its converse
we discuss one-shot quantum state merging. This is a two-party
task: its goal is to transfer the information contained in a quantum
system, $A$, initially held by one party, Alice, to the other party,
Bob. This should be achieved with only limited resources (such as
entanglement or communication). It is taken into account that Bob may
have access to a quantum system, $B$, correlated to $A$, which may be
used to minimize the use of resources. The term one-shot is
used to emphasize that the task is considered in the general one-shot
scenario. As explained in the discussion section, the asymptotic iid
results, where many independent copies of a given state are
transferred, can be recovered as a special case.

The notion of quantum state merging has been introduced
in~\cite{HoOpWi05Nat,HoOpWi07CMP} and a protocol has been proposed
that achieves the task in the asymptotic iid scenario.  The more
general one-shot setup we consider here was first analyzed
in~\cite{diploma-berta} and preliminary results appeared
in~\cite{min-max-entropy}.

We start giving a formal definition of quantum state
merging~\cite{HoOpWi05Nat,HoOpWi07CMP,diploma-berta}. Let $\rho_{A B}$
be the joint initial state of Alice and Bob's systems. We can view
this state as part of a larger pure state $\rho_{ABE}$ that includes a
reference system $E$. In this picture state merging means that Alice
can send the $A$-part of $\rho_{ABE}$ to Bob's side without altering
the joint state. We consider the particular setting proposed
in~\cite{HoOpWi05Nat} where classical communication from Alice to Bob
is free, but no quantum communication is possible.  Furthermore, Alice
and Bob have access to a source of entanglement and their goal is to
minimize the number of entangled bits consumed during the protocol (or
maximize the number of entangled bits that can be generated).

\begin{defin}[Quantum State Merging]
Let $\rho_{AB}\in\cS_{=}(\cH_{AB})$, and let $A_{0}B_{0}$ be additional systems. A TPCPM $\mathcal{E}:AA_{0}\otimes BB_{0}\rightarrow A_{1}\otimes B_{1}B'B$ is called quantum state merging of $\rho_{AB}$ with error $\eps\geq0$, if it is a local operation and classical forward communication process for the bipartition $AA_{0}\rightarrow A_{1}$ vs.~$BB_{0}\rightarrow B_{1}B'B$, and
\begin{align}
(\mathcal{E}_{AA_{0}BB_{0}\rightarrow A_{1}B_{1}B'B})(\Phi^{K}_{A_{0}B_{0}}\otimes\rho_{ABE})\approx_{\eps}\Phi^{L}_{A_{1}B_{1}}\otimes\rho_{BB'E}\ ,
\end{align}
where $\rho_{BB'E}=(\opid_{A\rightarrow
  B'}\otimes\opid_{BE})\rho_{ABE}$ for a purification $\rho_{ABE}$ of
$\rho_{AB}$, and $\Phi^{K}$, $\Phi^{L}$ are maximally entangled states
on $A_{0}B_{0}$, $A_{1}B_{1}$ of Schmidt-rank $K$ and $L$,
respectively. The number
$$l^{\eps}=\log{K}-\log{L}$$
is called entanglement cost.\footnote{In the original references~\cite{HoOpWi05Nat,HoOpWi07CMP} quantum state merging was defined slightly differently, namely as a local operation and classical two-way communication process. However, their protocol for the achievability only uses classical forward communication.}
\label{defstate}
\end{defin}

We are interested in quantifying the minimal entanglement cost for
quantum state merging of $\rho_{AB}$ with error $\eps$. For this, we
use the achievability and converse for decoupling
(Theorem~\ref{thm:achievability} and
Theorem~\ref{thm:converse}). These allow us to derive essentially
tight (up to additive terms of the order $\log(1/\eps)$ and the scaling of the smoothing parameter) bounds on the entanglement cost.

The basic idea underlying our analysis of quantum state merging is the
observation that the desired situation after the protocol execution is
necessarily such that Alice's system is decoupled from the
reference. Furthermore, it follows from Uhlmann's theorem~\cite{uhlmann} that this decoupling is also sufficient.

\begin{thm}[Achievability for Quantum State Merging]\label{thm:achiev}
The minimal entanglement cost for quantum state merging of $\rho_{AB}\in\cS_{=}(\cH_{AB})$ with error $\eps>0$ is upper bounded by
\begin{align}
l^{\eps}\leq H_{\max}^{\eps^{2}/13}(A|B)_{\rho}+4\log(1/\eps)+2\log13\ .
\end{align}
\end{thm}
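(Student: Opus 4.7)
The plan is to use the one-shot analogue of the FQSW/state-merging strategy: Alice applies a Haar-random unitary on $A$, teleports a suitably-sized subsystem to Bob, and then Bob applies a local isometry (whose existence is guaranteed by Uhlmann's theorem) which completes the state transfer and produces the new entanglement. Decoupling of Alice's remaining part from the reference $E$ is exactly what powers the Uhlmann step.

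Concretely, I split $A = A_1 \otimes A_2$ with $|A_2|$ to be chosen, let Alice apply Haar-random $U$ on $A$, and set $K = |A_2|$ so that Alice can teleport $A_2$ to Bob using the initial entanglement together with the classical channel; the output entanglement will be $\ket{\Phi^L}_{A_1B_1}$ with $L = |A_1|$, so the cost is $l = \log|A_2| - \log|A_1| = 2\log|A_2|-\log|A|$. Applying Theorem~\ref{thm:achievability} with $\cT = \tr_{A_2}$, using duality (Lemma~\ref{lem:dual}) to write $H_{\min}^{\delta}(A|E)_\rho = -H_{\max}^{\delta}(A|B)_\rho$ since $\rho_{ABE}$ is pure, and using $H_{\min}(A|A_1)_\tau = \log|A_2|-\log|A_1| = 2\log|A_2|-\log|A|$ for the partial trace (cf.\ the last row of Table~\ref{tb:mapping}), the decoupling theorem yields
\begin{align*}
\int_{\mbU(A)} \bigl\|\tr_{A_2}(U\rho_{AE}U\mdag) - \tfrac{\ident_{A_1}}{|A_1|}\otimes\rho_E\bigr\|_1\, dU \leq 2^{-\frac{1}{2}\bigl(-H_{\max}^{\delta}(A|B)_\rho + 2\log|A_2|-\log|A|\bigr)} + 12\delta.
\end{align*}

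Next I fix the error budget. Setting $\delta = \eps^2/13$ makes $12\delta = 12\eps^2/13$, so demanding that the full decoupling bound be at most $\eps^2$ forces the exponential piece to be at most $\eps^2/13$; rearranging yields $2\log|A_2|-\log|A| \geq H_{\max}^{\eps^2/13}(A|B)_\rho - 4\log\eps + 2\log 13$. Picking $|A_2|$ to saturate this inequality, and invoking the probabilistic method to fix a particular $U$ achieving at most the average, gives exactly the cost claimed in the statement. The reason for aiming at $\eps^2$ in trace distance (rather than $\eps$) is that the subsequent Uhlmann step converts trace distance of reduced states into purified distance of full states only up to a square root, via the Fuchs--van de Graaf inequality.

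The final step is Uhlmann's theorem itself: the state $(U\otimes \ident_{BE})\ket{\rho}_{ABE}$ purifies the left-hand side of the decoupling bound with $A_2B$ as purifying system, while $\ket{\Phi^L}_{A_1B_1}\otimes\ket{\rho}_{B'BE}$ purifies the right-hand side with $B_1B'B$ on Bob's side. After padding Bob's purifying side with an ancilla to match the dimensions, Uhlmann supplies a local isometry on Bob which, applied after he receives $A_2$ by teleportation, brings the global state within purified distance $\eps$ of $\Phi^L_{A_1B_1}\otimes\rho_{BB'E}$, as required by Definition~\ref{defstate}. The main obstacle I anticipate is purely the constant bookkeeping: splitting the decoupling bound into its exponential and smoothing contributions and threading the square-root loss from Uhlmann, so that the final cost matches the specific values $\eps^2/13$ and $2\log 13$ in the statement.
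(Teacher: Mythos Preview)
Your proposal is correct but takes a genuinely different route from the paper's proof. You use the FQSW/mother-protocol strategy: apply the Haar-random unitary on $A$ alone, take $\cT=\tr_{A_2}$ (partial trace), and then \emph{teleport} $A_2$ to Bob using the initial entanglement $\Phi^K_{A_0B_0}$ with $K=|A_2|$. The paper instead applies the random unitary to the \emph{combined} system $A_0A$ (Alice's half of the initial entanglement together with her input), takes $\cT$ to be a rank-$L$ projective measurement on $A_0A$ with the classical outcome serving as the LOCC message, and lets the initial entanglement enter the decoupling bound directly via superadditivity, $H_{\min}^{\delta}(A_0A|E)_{\theta}\geq H_{\min}^{\delta}(A|E)_{\rho}+\log K$, rather than being spent separately on teleportation. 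Your approach is a bit more modular (the decoupling step and the entanglement bookkeeping are cleanly separated), whereas the paper's approach gives independent freedom in choosing $K$ and $L$ and so sidesteps the constraint that $|A|$ factor as $|A_1|\cdot|A_2|$ with the specific value of $|A_2|$ you need --- a point you gloss over, though it is routinely handled by embedding $A$ in a slightly larger space at negligible additive cost (the paper itself acknowledges the analogous integrality issue in a footnote). The constant bookkeeping you worry about works out identically in both versions: trace distance $\leq\eps^2$ on $A_1E$ gives fidelity $\geq 1-\tfrac{1}{2}\eps^2$, Uhlmann lifts this to the purifications, and then $P\leq\eps$.
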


\begin{proof}
Let $\rho_{A B E}$ be a purification of $\rho_{A B}$. The intuition is as follows. In the first step of the protocol, Alice decouples her part from the reference (employing Theorem~\ref{thm:achievability}), where she chooses a rank-$L$ projective measurement as the TPCPM, and she sends the measurement result to Bob. For all measurement outcomes the post-measurement state on Alice's side is then approximately given by $\frac{\ident_{A_{1}}}{|A_{1}|}\otimes\rho_{E}$ and Bob holds a purification of this. But $\frac{\ident_{A_{1}}}{|A_{1}|}\otimes\rho_{E}$ is the reduced state of $\Phi^{L}_{A_{1}B_{1}}\otimes\rho_{BB'E}$ as well and since all purifications are equal up to local isometries, there exists an isometry on Bob's side that transform the state into $\Phi^{L}_{A_{1}B_{1}}\otimes\rho_{BB'E}$ (by Uhlmann's theorem~\cite{uhlmann}); this is then the second step of the protocol.

More formally, choose $K$ and $L$ such that
\begin{align}
\log K-\log L=H_{\max}^{\eps^{2}/13}(A|B)_{\rho} + 4\log(1/\eps) + 2\log13\ ,
\label{KL}
\end{align}
which is the entanglement cost of the protocol.\footnote{Since we need $K$, $L\in\mathbb{N}$, we can not choose $\log K-\log L$ exactly equal to $H_{\max}^{\eps^{2}/13}(A|B)_{\rho}+4\log(1/\eps)+2\log13$ in general. Rather, we need to choose $K$, $L\in\mathbb{N}$ such $\log K-\log L$ is minimal but still greater or equal than $H_{\max}^{\eps^{2}/13}(A|B)_{\rho}+4\log(1/\eps)+2\log13$.}

Choose $N$ fixed orthogonal subspaces of dimension $L$ on $AA_{0}$,\footnote{For simplicity assume that $K\cdot|A|$ is divisible by $L$. In general one has to choose $N-1$ fixed orthogonal subspaces of dimension $L$ and one of dimension $L'=K\cdot|A|-(N-1)\cdot L<L$. The proof remains the same, although some coefficients change.} denote the projectors on these subspaces followed by a fixed unitary mapping it to $A_{1}$ by $P^{x}_{A_{0}A\rightarrow A_{1}}$ and define the isometry
\begin{align}
W_{A_{0}A\rightarrow A_{1}X_{A}X_{B}}=\sum_{x}P^{x}_{A_{0}A\rightarrow A_{1}}\otimes\ket{x}_{X_{A}}\otimes\ket{x}_{X_{B}}\ .
\label{proj}
\end{align}
Denote by $U_{A_{0}A}$ a unitary selected randomly according to the Haar measure over the unitary group on $\cH_{A_{0}A}$ and write
\begin{align}
\theta_{A_0 B_0 ABE} &= \Phi^K_{A_0B_0} \otimes \rho_{ABE}\\
\sigma_{A_{0}B_{0}ABE} &= U_{A_{0}A} \theta_{A_0 B_0 ABE} U_{A_{0}A}^{\dagger}.
\end{align}
Now the first step of the protocol is to apply this unitary followed by the isometry~\eqref{proj}, and to send the $X_B$ system to Bob. In order to take into account that the channel is classical, we keep a copy $X_{A}$ at Alice's side.

By the decoupling theorem (Theorem~\ref{thm:achievability}) we get for
\begin{align}
\sigma_{A_{1}X_{A}X_{B}B_{0}BE}=(W_{A_{0}A\rightarrow A_{1}X_{A}X_{B}})\sigma_{A_{0}B_{0}ABE}(W_{A_{0}A\rightarrow A_{1}X_{A}X_{B}})^{\dagger}\ .
\end{align}
that
\begin{align}
\|\sigma_{A_{1}X_{A}E}-\tau_{A_{1}X_{A}}\otimes\rho_{E}\|_{1}\leq2^{-1/2(H_{\min}^{\eps^{2}/13}(A_{0}A|E)_{\theta}+H_{\min}^{\eps^{2}/13}(A_{0}'A'|A_{1}X_{A})_{\tau})}+\frac{12\eps^{2}}{13}\ ,
\label{dec}
\end{align}
where $A_{0}'A'$ is a copy of $A_{0}A$, and
\begin{align}
\ket{\tau}_{A_{0}'A'A_{1}X_{A}X_{B}}=W_{A_{0}A\rightarrow A_{1}X_{A}X_{B}}\ket{\Phi}_{A_{0}'A'A_{0}A}
\end{align}
with
\begin{align}
\ket{\Phi}_{A_{0}'A'A_{0}A}=\frac{1}{K\cdot|A|}\sum_{i}\ket{i}_{A_{0}'A'}\otimes\ket{i}_{A_{0}A}\ .
\end{align}
We can simplify this using the superadditivity of the smooth conditional min-entropy (Lemma~\ref{super}) and the duality between smooth conditional min- and max-entropy (Lemma \ref{lem:dual})
\begin{align}
H_{\min}^{\eps^{2}/13}(A_{0}A|E)_{\theta}\geq H_{\min}^{\eps^{2}/13}(A|E)_{\rho}+\log K=-H_{\max}^{\eps^{2}/13}(A|B)_{\rho}+\log K\ .
\label{step1}
\end{align}
Furthermore, because $\tau_{A_{0}'A'A_{1}X_{A}}$ is classical on $X_{A}$, we can use a lemma about the conditional min-entropy of classical-quantum states (Lemma~\ref{classical}) and get
\begin{align}
H_{\min}^{\eps^{2}/13}(A_{0}'A'|A_{1}X_{A})_{\tau}&\geq H_{\min}(A_{0}'A'|A_{1}X_{A})_{\tau}\nonumber\\
&=-\log(\sum_{x}p_{x}\cdot2^{-H_{\min}(A_{0}'A'|A_{1})_{\tau^{x}}})\nonumber\\
&\geq\min_{x}H_{\min}(A_{0}'A'|A_{1})_{\tau^{x}}\ ,
\end{align}
where
\begin{align}
&\tau^{x}_{A_{0}'A'A_{1}}=\frac{1}{\sqrt{p_{x}}}P_{A_{0}A\rightarrow A_{1}}^{x}\ket{\Phi}_{A_{0}'A'A_{0}A}\\
&p_{x}=\|P_{A_{0}A\rightarrow A_{1}}^{x}\ket{\Phi}_{A_{0}'A'A_{0}A}\|\ .
\end{align}
But since $P_{A_{0}A\rightarrow A_{1}}^{x}$ is a rank $L$ projector, we can use a dimension lower bound of the conditional min-entropy (Lemma~\ref{lem:dim}) to conclude that for all $x$
\begin{align}
H_{\min}(A_{0}'A'|A_{1})_{\tau^{x}}\geq-\log L\ .
\end{align}
This together with~\eqref{KL},~\eqref{dec} and~\eqref{step1} implies
\begin{align}
\left\|\sigma_{A_{1}X_{A}E}-\frac{\ident_{A_{1}}}{|A_{1}|}\otimes\tau_{X_{A}}\otimes\rho_{E}\right\|_{1}&=\left\|\sigma_{A_{1}X_{A}E}-\tau_{A_{1}X_{A}}\otimes\rho_{E}\right\|_{1}\nonumber\\
&\leq2^{-1/2(\log K-\log L-H_{\max}^{\eps^{2}/13}(A|B)_{\rho})}+\frac{12\eps^{2}}{13}\nonumber\\
&=2^{-1/2(4\log(1/\eps)+2\log13)}+\frac{12\eps^{2}}{13}=\eps^{2}\ ,
\end{align}
and hence $F(\sigma_{A_{1}X_{A}E},\frac{\ident_{A_{1}}}{|A_{1}|}\otimes\tau_{X_{A}}\otimes\rho_{E})\geq1-\eps^{2}/2$ (by Lemma~\ref{lem:purified-dist-vs-trace-dist}).

In the second step of the protocol, Bob decodes the system to the state $\rho_{BB'E} \otimes \Phi_{A_1B_1}$. A suitable decoder can be shown to exist using Uhlmann's theorem~\cite{uhlmann}. There exists an isometry $\cV_{BB_{0}X_{B}\rightarrow BB'B_{1}X_{B}}$ such that for
\begin{align}
\eta_{A_{1}X_{A}X_{B}BB'B_{1}E}=\cV_{BB_{0}X_{B}\rightarrow BB'B_{1}X_{B}}(\sigma_{A_{1}X_{A}X_{B}BB_{0}E})
\end{align}
\begin{align}
F(\sigma_{A_{1}X_{A}E},\frac{\ident_{A_{1}}}{|A_{1}|}\otimes\tau_{X_{A}}\otimes\rho_{E})=F(\eta_{A_{1}X_{A}X_{B}BB'B_{1}E},\tau_{X_{A}X_{B}}\otimes\Phi^{L}_{A_{1}B_{1}}\otimes\rho_{BB'E})\ ,
\end{align}
and with that
\begin{align}\label{lem:fidend}
F(\eta_{A_{1}X_{A}X_{B}BB'B_{1}E},\tau_{X_{A}X_{B}}\otimes\Phi^{L}_{A_{1}B_{1}}\otimes\rho_{BB'E})\geq1-\frac{\eps^{2}}{2}\ .
\end{align}
Expressing this in the purified distance (with Lemma~\ref{lem:purified-dist-vs-trace-dist}) and discarding $X_{A}X_{B}$, we obtain a $\eps$-error quantum state merging protocol for $\rho_{ABE}$.
\end{proof}

\begin{thm}[Converse for Quantum State Merging]\label{thm:converse-state-merging}
The minimal entanglement cost for quantum state merging of $\rho_{AB}\in\cS_{=}(\cH_{AB})$ with error $\eps>0$ is lower bounded by
\begin{align}
l^{\eps}\geq H_{\max}^{4\sqrt{2\eps}+3\eps}(A|B)_{\rho}-2\log\frac{1}{\eps}\ .
\end{align}
\end{thm}

\begin{proof}
We start with noting that any $\eps$-error quantum state merging
protocol for $\rho_{AB}$ can be assumed to have the following form:
applying local operations at Alice's side, then sending a classical
register from Alice to Bob, and finally applying local operations at
Bob's side. For a purified state $\rho_{A B E}$,  the protocol
produces a state $\eps$-close to ${\Phi^{L}_{A_{1}B_{1}}\otimes\rho_{BB'E}}$.

As can be seen from the definition, it is a necessary step for any
quantum state merging protocol to decouple Alice's part from the
reference. The idea of the proof is to use the converse for decoupling
(Theorem~\ref{thm:converse}). This then results in the desired
converse for quantum state merging.

More precisely, a general $\eps$-error quantum state merging protocol for $\rho_{ABE}$ has the following form. At first some TPCPM
\begin{align}
\cT_{A_{0}A\rightarrow A_{1}X_{B}}(.)=\sum_{x}M^{x}_{A_{0}A\rightarrow A_{1}}(.)\otimes\ket{x}\bra{x}_{X_{B}}
\end{align}
is applied to the input state $\Phi^{K}_{A_{0}B_{0}}\otimes\rho_{ABE}$. By the Stinespring dilation~\cite{Stinespring55} we can think of this TPCPM as an isometry
\begin{align}\label{eq:newcptp}
W_{A_{0}A\rightarrow A_{1}A_{G}X_{B}X_{A}}=\sum_{x}M^{x}_{A_{0}A\rightarrow A_{1}A_{G}}\otimes\ket{x}_{X_{A}}\otimes\ket{x}_{X_{B}}\ ,
\end{align}
where the $M^{x}_{A_{0}A\rightarrow A_{1}A_{G}}$ are partial isometries and $A_{G},X_{A}$ are additional \lq garbage\rq~registers on Alice's side that will be discarded in the end. The isometry $W$ results in the state
\begin{align}
\ket{\gamma}_{A_{1}A_{G}X_{A}X_{B}BB_{0}E}=\sum_{x}\ket{\gamma^{x}}_{A_{1}A_{G}BB_{0}E}\otimes\ket{x}_{X_{A}}\otimes\ket{x}_{X_{B}}\ ,
\end{align}
with
\begin{align}
\ket{\gamma^{x}}_{A_{1}A_{G}BB_{0}E}=M^{x}_{A_{0}A\rightarrow A_{1}A_{G}}(\ket{\Phi^{K}}_{A_{0}B_{0}}\otimes\ket{\rho}_{ABE})\ .
\end{align}
The next step of the protocol is then to send the classical register $X_{B}$ to Bob.

Now let us analyze how the state $\gamma_{A_{1}A_{G}X_{A}E}$ has to
look like. By the definition of quantum state merging
(Definition~\ref{defstate}) the state at the end of the protocol has
to be $\eps$-close to $\Phi^{L}_{A_{1}B_{1}}\otimes\rho_{BB'E}$. This
implies that Alice's part $A_{1}$ has to be decoupled from the
reference. But because the state
$\Phi^{L}_{A_{1}B_{1}}\otimes\rho_{BB'E}$ is pure this also implies
that all additional registers, that we might have at the end of the
protocol, have to be decoupled as well. Thus we need
\begin{align}\label{eq:additional}
\gamma_{A_{1}A_{G}X_{A}E}\approx_{\eps}\frac{\ident_{A_{1}}}{|A_{1}|}\otimes\gamma_{A_{G}X_{A}}\otimes\rho_{E}\ ,
\end{align}
and in trace distance (using Lemma~\ref{lem:purified-dist-vs-trace-dist}) this reads
\begin{align}\label{eq:newmain}
\left\|\gamma_{A_{1}A_{G}X_{A}E}-\frac{\ident_{A_{1}}}{|A_{1}|}\otimes\gamma_{A_{G}X_{A}}\otimes\rho_{E}\right\|_{1}\leq2\eps\ .
\end{align}

Using the converse for decoupling (Theorem~\ref{thm:converse}) for the isometry $W_{A_{0}A\rightarrow A_{1}A_{G}X_{B}X_{A}}$ in \eqref{eq:newcptp} followed by the partial trace over $X_{B}$, we get that the decoupling condition~\eqref{eq:newmain} implies for any $\eps',\eps''>0$ that
\begin{align}
H_{\min}^{2\sqrt{6\eps''+2\eps}+2\sqrt{\eps'}+\eps''}(A_{0}A|E)_{\rho}+H_{\max}^{\eps''}(A_{0}'A'|A_{1}A_{G}X_{A})_{\omega}\geq&-\log\frac{1}{\eps'}\ ,
\end{align}
where
\begin{align}
\omega_{A_{0}'A'A_{1}A_{G}X_{A}}=\mathrm{tr}_{X_{B}}\left[(W_{A_{0}A\rightarrow A_{1}A_{G}X_{B}X_{A}})\zeta_{A_{0}'A'A_{0}A}(W^{\dagger}_{A_{0}A\rightarrow A_{1}A_{G}X_{B}X_{A}})\right]
\end{align}
for $\zeta_{A_{0}'A'A_{0}A}$ a purification of $\frac{\ident_{A_{0}}}{|A_{0}|}\otimes\rho_{A}$ with $A_{0}'A'$ a copy of $A_{0}A$. As a next step we simplify this in order to bring the converse into the desired form.

Choosing $\eps'=\eps^{2}$ and $\eps''=\eps$, using a dimension upper bound for the smooth conditional min-entropy (Lemma~\ref{old}), and the duality between smooth conditional min- and max-entropy (Lemma~\ref{lem:dual}) we obtain
\begin{align}
\log K+H_{\max}^{\eps}(A_{0}'A'|A_{1}A_{G}X_{A})_{\omega}\geq&H_{\max}^{4\sqrt{2\eps}+3\eps}(A|B)_{\rho}-2\log\frac{1}{\eps}\ .
\end{align}
By the decoupling criterion in purified distance (Equation~\eqref{eq:additional}), the state $\omega_{A_{0}'A'A_{1}A_{G}X_{A}}$ has to be $\eps$-close to a state
\begin{align}
\xi_{A_{0}'A'A_{1}A_{G}X_{A}}=\sum_{x}q_{x}\xi_{A_{0}'A'A_{1}A_{G}}^{x}\otimes\ket{x}\bra{x}_{X_{A}}\ ,
\end{align}
where $q_{x}$ is some probability distribution and $\xi_{A_{0}'A'A_{1}A_{G}}^{x}$ pure with $\xi_{A_{1}A_{G}}^{x}=\frac{\ident_{A_{1}}}{|A_{1}|}\otimes\xi_{A_{G}}^{x}$ for all $x$.
Hence
\begin{align}
H_{\max}^{\eps}(A_{0}'A'|A_{1}A_{G}X_{A})_{\omega}\leq H_{\max}(A_{0}'A'|A_{1}A_{G}X_{A})_{\xi}
\end{align}
and by a lemma about the conditional max-entropy of classical-quantum states (Lemma \ref{classical2})
\begin{align}
H_{\max}(A_{0}'A'|A_{1}A_{G}X_{A})_{\xi}=\log\left(\sum_{x}q_{x}\cdot2^{H_{\max}(A_{0}'A'|A_{1}A_{G})_{\xi^{x}}}\right)\ .
\end{align}
Using the duality between conditional min- and max-entropy (Lemma~\ref{lem:dual}) and a polar decomposition of $\xi^{x}_{A_{0}'A'A_{1}A_{G}}$, we get
\begin{align}
H_{\max}(A_{0}'A'|A_{1}A_{G})_{\xi^{x}}&=-H_{\min}(A_{0}'A')_{\xi^{x}}\nonumber\\
&=-H_{\min}(A_{1}A_{G})_{\xi^{x}}\nonumber\\
&=-H_{\min}(A_{1})_{\frac{\ident}{|A_{1}|}}-H_{\min}(A_{G})_{\xi^{x}}\nonumber\\
&\leq-H_{\min}(A_{1})_{\frac{\ident}{|A_{1}|}}\nonumber\\
&=-\log L\ .
\end{align}
Hence, the converse becomes
\begin{align}
\log K-\log L\geq H_{\max}^{4\sqrt{2\eps}+3\eps}(A|B)_{\rho}-2\log\frac{1}{\eps}\ .
\end{align}
\end{proof}




\section{Discussion}\label{sec:discussion}

The main contribution of this work is a decoupling theorem, i.e., a sufficient (Theorem~\ref{thm:achievability}) and necessary (Theorem~\ref{thm:converse}) criterion for decoupling in terms of smooth conditional entropies. These criteria can then be applied to obtain tight characterizations of various operational tasks. As outlined in Section~\ref{sec:applications} by means of state merging, such applications are often possible because of a duality between independence and maximum entanglement: given a pure state $\rho_{BER}$ such that $\rho_B$ is maximally mixed, the property that the subsystem $B$ is independent of $E$ and the property that $B$ is fully entangled with $R$ are equivalent.

A crucial property of our decoupling criterion is that it gives (nearly optimal) bounds in a one-shot scenario, where the decoupling map $\cT$ may only be applied once (or, by replacing $\cT$ by $\cT^{\otimes k}$, any finite number of times). For a typical example, consider $m$ qubits, $A$, and assume that $A$ undergoes a reversible evolution, $\mathcal{U}$, after which we discard $m-m'$ qubits, corresponding to a partial trace, $\mathcal{T}= \tr_{m-m'}$ (see last example of Table~\ref{tb:mapping}). Our decoupling theorem (Theorem~\ref{thm:achievability}) then shows that decoupling up to an error $\eps$ is achieved for most choices of $\cU$ if
\begin{align}\label{eq:ex-tight}
m'\lessapprox\frac{1}{2}\left(m+H_{\min}^{\eps}(A|E)_{\rho}\right)\ .
\end{align}
In contrast to this, the original decoupling results~\cite{FQSW}, formulated in terms of smooth non-conditional entropies, only show that decoupling up to an error $\eps$ is achieved for most choices of $\cU$ if
\begin{align}\label{eq:ex-faroff}
m'\lessapprox\frac{1}{2}\left(m+H_{\min}^{\eps}(AE)_{\rho}-H_{\max}^{\eps}(E)_{\rho}\right)\ .
\end{align}
To see that this latter bound may be arbitrarily weaker than the bound~\eqref{eq:ex-tight} that uses smooth conditional entropies, consider the following completely classical state. Let $A$ and $E$ be perfectly correlated, and let the marginal distribution of $A$ (and $E$) have one value that is taken with probability 1/2, and be uniform over the remaining $2^{m}-1$ values. Then we have (for $\eps\geq0$ close to zero)
\begin{align}
H_{\min}^{\eps}(A|E)_{\rho}\approx0\quad\mathrm{vs.}\quad H_{\min}^{\eps}(AE)_{\rho}-H_{\max}^{\eps}(E)_{\rho}\approx1-m\ .
\end{align}
The difference between these two bounds is conceptually relevant. An example illustrating this is the quantitative Landauer's principle derived recently in~\cite{Faist12}. The result, which is based on the bound~\eqref{eq:ex-tight}, shows that correlations between the inputs and outputs of an irreversible mapping are relevant for the thermodynamic work cost of implementations of the mapping. These correlations  would not be accounted for if a bound of the form~\eqref{eq:ex-faroff} was used for the derivation of the principle.

In contrast to the original results on decoupling that are based on specific decoupling processes (where the mapping $\cT$ is either a partial trace~\cite{FQSW} or a projective measurement~\cite{HoOpWi07CMP}), our decoupling criterion is also applicable to general mappings $\cT$. This extension is, e.g., employed in~\cite[Section 5]{diploma-hutter} in order to discuss the postulate of equal a priori probability in quantum statistical mechanics.

Our generalizations of the decoupling technique are crucial for other applications in physics as well, e.g., for the analysis of thermodynamic systems~\cite{RARDV10}, for finding an efficient classical description of 1D quantum states with an exponential decay of correlations~\cite{Brandao12}, or for the study of black hole radiation~\cite{HayPre07,braunstein-pati,braunstein-zyczkowski}.

Information-theoretic applications other than state merging (cf.~Section~\ref{sec:applications}) have been investigated in the doctoral thesis of one of the authors~\cite{fred-these}. One of these applications is channel coding. Here, Alice wants to use a noisy quantum channel $\mathcal{N}^{A \rightarrow B}$ to send qubits to Bob with fidelity at least $1-\varepsilon$. The idea is that decoding is possible whenever a purification of the qubits Alice is sending is decoupled from the channel environment. One can therefore get a coding theorem directly from Theorem \ref{thm:achievability} by setting $\mathcal{T}$ to be the complementary channel of $\mathcal{N}$ (i.e., consider a Stinespring dilation~\cite{Stinespring55} $\cU^{\mathcal{N}}_{A \rightarrow BE}$ of $\mathcal{N}$, and set $\mathcal{T}_{A \rightarrow E}(\cdot)=\tr_B[U_{A}\cdot U_{A}\mdag]$). Unassisted channel coding~\cite{lsd1,lsd2,lsd3} can be obtained by choosing the input state $\rho_{AR} = \Phi_{AR}$ (where $\Phi_{A R}$ is a maximally entangled state between $A$ and $R$). Similarly, entanglement-assisted channel coding \cite{BSST02} corresponds to the input choice $\rho_{ABR} = \Phi_{A_R R} \otimes \Phi_{A_B B}$ (where $\cH_A=\cH_{A_R} \otimes \cH_{A_B}$, with $A_R$ containing the state to be transmitted and $A_B$ the initial entanglement that Alice shares with Bob). Other choices of $\rho_{ABR}$ correspond to different scenarios.

Another application where decoupling can be employed as a building block for constructing protocols is the simulation of noisy quantum channels using perfect classical channels together with pre-shared entanglement. The fully quantum reverse Shannon theorem asserts that this is possible using only a classical communication rate equal to the capacity of the channel to be simulated~\cite{BSST02,BDHSW09}. In~\cite{BeChRe09_2}, a proof of this theorem using one-shot decoupling has been proposed.

Our one-shot decoupling results contrast with (and are strictly more general than) the iid scenario\footnote{The abbreviation iid stands for independent and identically distributed.} usually considered in information theory, where statements are proved asymptotically under the assumption that the underlying processes (such as channel uses) are repeated many times independently. We note that asymptotic iid statements can be easily retrieved from the general one-shot results using the quantum asymptotic equipartition property (AEP) for smooth entropies~\cite{renner-phd,ToCoRe09} (see Lemma~\ref{thm:fully-quantum-aep}). Consider decoupling with a map of the form $\mathcal{\bar{T}} =\mathcal{T} \circ \mathcal{U}$ (with $\cU$ a random unitary channel). If the map $\cT$ as well as the initial state $\rho_{AE}$ consist of many identical copies, i.e., $\cT^{\otimes n}$ and $\rho^{\otimes n}_{AE}$, then the achievability bound of Theorem~\ref{thm:achievability}, i.e., the condition that is sufficient for decoupling, turns into the criterion
\begin{align}\label{eq:asymach}
H(A|E)_{\rho}+H(A|B)_{\tau} \geq 0 \ ,
\end{align}
where $H$ denotes the (conditional) von Neumann entropy. Analogously, the converse in Corollary~\ref{cor:conv} (i.e., the condition which is necessary for decoupling for maps of this form) turns into 
\begin{align}\label{eq:asymconv}
H(A|E)_{\rho}+H(A|B)_{\tau} \leq 0\ .
\end{align}
In other words, in the iid scenario, the achievability bound~\eqref{eq:asymach} and the converse bound \eqref{eq:asymconv}, taken together, imply an exact characterization of decoupling.


\section*{Acknowledgments}

We thank Andreas Winter for insightful discussions and for his valuable contributions to~\cite{diploma-berta}, which served as a starting point for this work. We also thank Patrick Hayden for enlightening discussions, as well as Oleg Szehr for fixing a bug regarding smoothing in the proof of Theorem~\ref{thm:achievability}, among other useful comments. We acknowledge support from the Swiss National Science Foundation (grants No.~200021-119868 and 200020-135048), the National Centre of Competence in Research 'Quantum Science and Technology (QSIT)', and the European Research Council (grant No.~258932). FD was supported by Canada’s NSERC Postdoctoral Fellowship Program. MB was supported by the German Science Foundation (grant CH 843/2-1), the Swiss National Science Foundation (grants PP00P2-128455, 20CH21-138799 (CHIST-ERA project CQC)), and the Swiss State Secretariat for Education and Research supporting COST action MP1006. JW was funded by the U.K. EPSRC grant EP/E04297X/1 and the Canada-France NSERC-ANR project FREQUENCY. Parts of this work were done while JW was at the University of Bristol.

\bibliographystyle{alpha}
\bibliography{big6}


\appendix

\section{Properties of Smooth Entropies}\label{app_smooth}

The conditional collision entropy is lower bounded by the conditional min-entropy.

\begin{lem}\label{lem:h2-hmin}
Let $\rho_{AB} \in \cS_{\leqslant}(\cH_{AB})$. Then, we have that $H_2(A|B)_{\rho} \geqslant H_{\min}(A|B)_{\rho}$.
\end{lem}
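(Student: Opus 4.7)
The plan is to show the inequality by substituting the optimal $\sigma_B$ for the min-entropy into the definition of the collision entropy. Let $\lambda = H_{\min}(A|B)_\rho$ and let $\sigma_B \in \cS_=(\cH_B)$ be an optimizer achieving this value, so that by definition
\begin{equation*}
  \rho_{AB} \leqslant 2^{-\lambda} \, \ident_A \otimes \sigma_B.
\end{equation*}
Since $H_2(A|B)_\rho$ is a supremum over $\sigma_B$, it suffices to prove that this particular $\sigma_B$ already yields $-\log\tr[X^2] \geqslant \lambda$, where $X := (\ident_A \otimes \sigma_B^{-1/4}) \rho_{AB} (\ident_A \otimes \sigma_B^{-1/4})$.

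The key step is to conjugate the operator inequality above by $\ident_A \otimes \sigma_B^{-1/4}$ on both sides, yielding
\begin{equation*}
  X \leqslant 2^{-\lambda} \, \ident_A \otimes \sigma_B^{1/2}.
\end{equation*}
Now $X \geqslant 0$, so I would use the elementary fact that if $0 \leqslant X \leqslant Y$ then $\tr[X^2] \leqslant \tr[XY]$ (because $\tr[X(Y-X)]$ is a trace of a product of two positive semidefinite operators and hence nonnegative). Applied with $Y = 2^{-\lambda}\, \ident_A \otimes \sigma_B^{1/2}$, cyclicity of the trace collapses the right-hand side to $2^{-\lambda}\, \tr[\rho_{AB}] \leqslant 2^{-\lambda}$, where the last bound uses $\rho_{AB} \in \cS_{\leqslant}(\cH_{AB})$. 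Taking negative logarithms gives $-\log \tr[X^2] \geqslant \lambda = H_{\min}(A|B)_\rho$, and since $H_2$ is the supremum over all $\sigma_B$, the claim follows.

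The only technical subtlety is that $\sigma_B$ need not be of full rank, so $\sigma_B^{-1/4}$ should be interpreted as the Moore--Penrose pseudoinverse, with the convention that all operators are restricted to the support of $\sigma_B$. Since the optimizer for $H_{\min}(A|B)_\rho$ can always be chosen so that its support contains the support of $\rho_B$ (otherwise the defining inequality $\rho_{AB} \leqslant 2^{-\lambda} \ident_A \otimes \sigma_B$ fails), this causes no loss of generality, and all manipulations above remain valid on the appropriate support. This handling of the pseudoinverse is the only point that requires any care; the rest of the argument is a direct substitution.
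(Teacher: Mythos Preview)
Your proof is correct and follows essentially the same route as the paper: both pick the optimizer $\sigma_B$ for $H_{\min}$, conjugate the defining operator inequality by a power of $\ident_A\otimes\sigma_B$, and then use positivity to bound $\tr[X^2]$ by $2^{-H_{\min}}\tr[\rho_{AB}]\leqslant 2^{-H_{\min}}$. The only cosmetic difference is that the paper conjugates by $\sigma_B^{-1/2}$ (arriving directly at $\ident_{AB}$ on the right) rather than by $\sigma_B^{-1/4}$, and the paper does not spell out the support/pseudoinverse caveat that you handle explicitly.
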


\begin{proof}
Let $\sigma_B\in\cS_{=}(\cH_{B})$ be such that $\rho_{AB} \leqslant2^{-H_{\min}(A|B)_{\rho}}\cdot\ident_A \otimes \sigma_B$. We then obtain
\begin{align}
2^{-H_2(A|B)_{\rho}} &= \min_{\omega_B} \tr\left[ (\ident_A \otimes \omega_B)^{-1/2} \rho_{AB} (\ident_A \otimes \omega_B)^{-1/2} \rho_{AB} \right]\nonumber\\
&\leqslant \tr\left[ (\ident_A \otimes \sigma_B)^{-1/2} \rho_{AB} (\ident_A \otimes \sigma_B)^{-1/2} \rho_{AB} \right]\nonumber\\
&\leqslant 2^{-H_{\min}(A|B)_{\rho}}\cdot\tr\left[ \ident_{AB} \rho_{AB} \right]\nonumber\\
&\leqslant 2^{-H_{\min}(A|B)_{\rho}} \ .
\end{align}
\end{proof}

The smooth conditional min-entropy is superadditive.

\begin{lem}\label{super}
Let $\eps,\eps'\geq0$, $\rho_{AB}\in\cS_{=}(\cH_{AB})$ and $\rho_{A'B'}'\in\cS_{=}(\cH_{A'B'})$. Then, we have that
\begin{align}
H_{\min}^{\eps+\eps'}(AA'|BB')_{\rho\otimes\rho'}\geq H_{\min}^{\eps}(A|B)_{\rho}+H_{\min}^{\eps'}(A'|B')_{\rho'}\ .
\end{align}
\end{lem}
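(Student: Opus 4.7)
The plan is to reduce the smooth statement to additivity of the non-smooth min-entropy by choosing optimal smoothing states on each side and tensoring them.

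First, pick $\hat\rho_{AB}\in\cB^{\eps}(\rho_{AB})$ and $\hat\rho'_{A'B'}\in\cB^{\eps'}(\rho'_{A'B'})$ that achieve the smooth min-entropies on the right-hand side, i.e.\ $H_{\min}(A|B)_{\hat\rho}=H_{\min}^{\eps}(A|B)_{\rho}$ and $H_{\min}(A'|B')_{\hat\rho'}=H_{\min}^{\eps'}(A'|B')_{\rho'}$ (existence of maximizers follows from the finite-dimensional assumption).

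Next, I would show that the tensor product satisfies $\hat\rho\otimes\hat\rho'\in\cB^{\eps+\eps'}(\rho\otimes\rho')$. This follows from the multiplicativity of fidelity under tensor products and the triangle inequality for the purified distance: via the intermediate state $\rho\otimes\hat\rho'$ one gets
\begin{align*}
P(\rho\otimes\rho',\hat\rho\otimes\hat\rho')\leq P(\rho\otimes\rho',\rho\otimes\hat\rho')+P(\rho\otimes\hat\rho',\hat\rho\otimes\hat\rho')\leq\eps'+\eps,
\end{align*}
where each term is bounded using $\bar F(\sigma\otimes\omega_1,\sigma\otimes\omega_2)\geq\bar F(\omega_1,\omega_2)$ (i.e.\ monotonicity of purified distance under appending an uncorrelated state).

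Then I would invoke additivity of the non-smooth conditional min-entropy on product states, $H_{\min}(AA'|BB')_{\hat\rho\otimes\hat\rho'}=H_{\min}(A|B)_{\hat\rho}+H_{\min}(A'|B')_{\hat\rho'}$. This is a direct consequence of the definition: if $\sigma_B$, $\sigma'_{B'}$ are optimal normalized operators witnessing $\hat\rho_{AB}\leq 2^{-H_{\min}(A|B)_{\hat\rho}}\ident_A\otimes\sigma_B$ and similarly for the primed system, then tensoring the inequalities gives a valid witness for $\sigma_B\otimes\sigma'_{B'}\in\cS_{=}(\cH_{BB'})$ on $\hat\rho\otimes\hat\rho'$ with exponent equal to the sum. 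Combined with the bound $H_{\min}^{\eps+\eps'}(AA'|BB')_{\rho\otimes\rho'}\geq H_{\min}(AA'|BB')_{\hat\rho\otimes\hat\rho'}$ (which uses the containment established above), this yields the claim.

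The only mildly delicate point is the tensor-subadditivity of the purified distance on subnormalized states; I would verify this by working with the generalized fidelity $\bar F$ directly (using $\bar F(\rho_1\otimes\rho_2,\sigma_1\otimes\sigma_2)\geq \bar F(\rho_1,\sigma_1)\bar F(\rho_2,\sigma_2)$ together with the identity $1-xy\leq (1-x^2)+(1-y^2)$ for $x,y\in[0,1]$), so that $P(\rho_1\otimes\rho_2,\sigma_1\otimes\sigma_2)\leq P(\rho_1,\sigma_1)+P(\rho_2,\sigma_2)$. Everything else is essentially bookkeeping.
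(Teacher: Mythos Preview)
Your proposal is correct and follows essentially the same approach as the paper's proof: choose optimal smoothings $\hat\rho\in\cB^{\eps}(\rho)$ and $\hat\rho'\in\cB^{\eps'}(\rho')$, show via the triangle inequality for the purified distance that $\hat\rho\otimes\hat\rho'\in\cB^{\eps+\eps'}(\rho\otimes\rho')$, and then apply additivity of the non-smooth min-entropy on product states. The only difference is that the paper simply cites the triangle inequality and additivity as known facts, whereas you sketch the tensor-witness argument for additivity and the monotonicity of $\bar F$ under appending an uncorrelated factor; these extra details are fine and do not change the structure of the argument.
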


\begin{proof}
Let $\bar{\rho}_{AB}\in\cB^{\eps}(\rho_{AB})$ and $\bar{\rho}'_{A'B'}\in\cB^{\eps}(\rho'_{A'B'})$ such that $H_{\min}^{\eps}(A|B)_{\rho}=H_{\min}(A|B)_{\bar{\rho}}$ and $H_{\min}^{\eps'}(A'|B')_{\rho'}=H_{\min}(A'|B')_{\bar{\rho}'}$. By the triangle inequality for the purified distance~\cite[Lemma 5]{duality-min-max-entropy} we have $\bar{\rho}_{AB}\otimes\bar{\rho}'_{A'B'}\in\cB^{\eps+\eps'}(\rho_{AB}\otimes\rho_{A'B'}')$. Using the additivity of the conditional min-entropy~\cite{min-max-entropy}, we conclude
\begin{align}
H_{\min}^{\eps+\eps'}(AA'|BB')_{\rho\otimes\rho'}&\geq H_{\min}(AA'|BB')_{\bar{\rho}\otimes\bar{\rho}'}\nonumber\\
&=H_{\min}(A|B)_{\bar{\rho}}+H_{\min}(A'|B')_{\bar{\rho}'}\nonumber\\
&=H_{\min}^{\eps}(A|B)_{\rho}+H_{\min}^{\eps'}(A'|B')_{\rho'}\ .
\end{align}
\end{proof}

We have the following dimension lower and upper bounds for the (smooth) conditional min-entropy.

\begin{lem}\cite[Lemma 20]{duality-min-max-entropy}\label{lem:dim}
Let $\rho_{AB}\in\cS_{=}(\cH_{AB})$. Then, we have that $H_{\min}(A|B)_{\rho}\geq-\log|B|$.
\end{lem}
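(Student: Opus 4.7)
The statement is a simple lower bound on $H_{\min}(A|B)_{\rho}$, and the approach is to exhibit a feasible candidate in the optimization defining the min-entropy. Recall that
\[
H_{\min}(A|B)_{\rho}=\sup_{\sigma_{B}\in\cS_{=}(\cH_{B})}\sup\{\lambda\in\mathbb{R}:2^{-\lambda}\ident_{A}\otimes\sigma_{B}-\rho_{AB}\geq0\}\,,
\]
so any specific choice of $\sigma_B$ and $\lambda$ for which the operator inequality holds immediately lower-bounds the min-entropy by $\lambda$.

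The plan is to choose $\sigma_{B}=\ident_{B}/|B|$, the maximally mixed state on $B$, and $\lambda=-\log|B|$. With this choice, $2^{-\lambda}\ident_{A}\otimes\sigma_{B}=|B|\cdot\ident_{A}\otimes(\ident_{B}/|B|)=\ident_{AB}$. So the required inequality reduces to $\ident_{AB}-\rho_{AB}\geq 0$.

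For this last step, I would invoke that $\rho_{AB}\in\cS_{=}(\cH_{AB})$: being a positive operator of unit trace on a finite-dimensional space, all its eigenvalues lie in $[0,1]$, hence $\|\rho_{AB}\|_{\infty}\leq 1$, which is exactly $\rho_{AB}\leq\ident_{AB}$. Thus $(\sigma_{B},\lambda)=(\ident_{B}/|B|,-\log|B|)$ is feasible in the supremum, giving $H_{\min}(A|B)_{\rho}\geq-\log|B|$.

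There is no real obstacle here: the argument is a one-line application of the definition once one thinks to take $\sigma_{B}$ maximally mixed so that $\ident_{A}\otimes\sigma_{B}$ becomes proportional to $\ident_{AB}$ and the problem reduces to the trivial bound $\rho\leq\ident$ for normalized density operators.
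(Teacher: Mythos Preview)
Your proof is correct. The paper does not actually supply a proof of this lemma; it simply cites it from \cite[Lemma 20]{duality-min-max-entropy}. Your argument---choosing $\sigma_B=\ident_B/|B|$ and $\lambda=-\log|B|$ so that the feasibility condition reduces to $\rho_{AB}\leq\ident_{AB}$---is the standard one-line proof and matches what one finds in the cited reference.
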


\begin{lem}\label{old}
Let $\eps\geq0$ and $\rho_{ABC}\in\cS_{=}(\cH_{ABC})$. Then, we have that
\begin{align}
H_{\min}^{\eps}(AB|C)_{\rho}\leq H_{\min}^{\eps}(A|C)_{\rho}+\log|B|\ .
\end{align}
\end{lem}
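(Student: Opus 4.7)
The plan is to reduce this to the analogous non-smooth statement and then propagate the bound through the smoothing. Concretely, I would proceed in four short steps.

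First I would choose the optimal smoothing for the left-hand side: let $\hat\rho_{ABC}\in\cB^{\eps}(\rho_{ABC})$ be such that $H_{\min}^{\eps}(AB|C)_{\rho}=H_{\min}(AB|C)_{\hat\rho}$, which exists because the supremum in the definition of smooth min-entropy is attained (the $\eps$-ball is compact and the min-entropy is upper semicontinuous).

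Second I would establish the non-smooth version $H_{\min}(AB|C)_{\hat\rho}\leq H_{\min}(A|C)_{\hat\rho}+\log|B|$. This is the obstacle-free core of the argument: picking $\sigma_C$ achieving $H_{\min}(AB|C)_{\hat\rho}$, so that $\hat\rho_{ABC}\leq 2^{-H_{\min}(AB|C)_{\hat\rho}}\ident_{AB}\otimes\sigma_C$, and then tracing out $B$ (which preserves operator inequalities) yields $\hat\rho_{AC}\leq 2^{-H_{\min}(AB|C)_{\hat\rho}}|B|\,\ident_A\otimes\sigma_C$, from which the definition of min-entropy directly gives the claim.

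Third I would transport this back through the smoothing. Since the partial trace over $B$ is a TPCPM and the purified distance is monotone under TPCPMs, one has $\hat\rho_{AC}\in\cB^{\eps}(\rho_{AC})$. Therefore $\hat\rho_{AC}$ is a candidate in the supremum defining $H_{\min}^{\eps}(A|C)_{\rho}$, giving $H_{\min}(A|C)_{\hat\rho}\leq H_{\min}^{\eps}(A|C)_{\rho}$.

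Combining the three steps yields
\begin{align*}
H_{\min}^{\eps}(AB|C)_{\rho}=H_{\min}(AB|C)_{\hat\rho}\leq H_{\min}(A|C)_{\hat\rho}+\log|B|\leq H_{\min}^{\eps}(A|C)_{\rho}+\log|B|.
\end{align*}
The only subtle point is that $\hat\rho_{ABC}$ may be subnormalized, but both the non-smooth bound (which only uses the operator inequality definition of $H_{\min}$) and the monotonicity of the purified distance under TPCPMs remain valid on $\cS_{\leq}$, so no difficulty arises. I do not expect a genuine obstacle here; the lemma is essentially a routine ``smoothing commutes with dimensional penalties'' argument.
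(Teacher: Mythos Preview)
Your proposal is correct and follows essentially the same approach as the paper: both pick an optimal smoothing state $\hat\rho_{ABC}$, take the defining operator inequality $\hat\rho_{ABC}\leq 2^{-H_{\min}(AB|C)_{\hat\rho}}\ident_{AB}\otimes\sigma_C$, trace out $B$, and then use monotonicity of the purified distance under partial trace to conclude $\hat\rho_{AC}\in\cB^{\eps}(\rho_{AC})$. The paper's write-up is slightly terser but the logic is identical.
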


\begin{proof}
Let $\bar{\rho}_{ABC}\in\cB^{\eps}(\rho_{ABC})$, $\sigma_{C}\in\cS_{=}(\cH_{C})$ and $\lambda\in\mathbb{R}$ such that
\begin{align}
H_{\min}^{\eps}(AB|C)_{\rho}=H_{\min}(AB|C)_{\bar{\rho}}=-\log\lambda\ ,
\end{align}
that is, $\lambda$ is minimal such that $\lambda\cdot\ident_{AB}\otimes\sigma_{C}-\bar{\rho}_{ABC}\geq0$. By taking the partial trace over $B$ we get $\lambda\cdot|B|\cdot\ident_{A}\otimes\sigma_{C}-\bar{\rho}_{AC}\geq0$. Furthermore we have by the monotonicity of the purified distance~\cite[Lemma 7]{duality-min-max-entropy} that $\bar{\rho}_{AC}\in\cB^{\eps}(\rho_{AC})$ and hence
\begin{align}
H_{\min}^{\eps}(A|C)_{\rho}\geq H_{\min}(A|C)_{\bar{\rho}}\geq-\log\mu\ ,
\end{align}
where $\mu\in\mathbb{R}$ is minimal such that $\mu\cdot\ident_{A}\otimes\sigma_{C}-\bar{\rho}_{AC}\geq0$. 
Thus $\lambda\cdot|B|\geq\mu$ and therefore
\begin{align}
H_{\min}^{\eps}(AB|C)_{\rho}\leq H_{\min}^{\eps}(A|C)_{\rho}+\log|B|\ .
\end{align}
\end{proof}

The following lemma is about the conditional min-entropy of quantum-classical states.

\begin{lem}\label{classical}
Let $\rho_{ABX}\in\cS_{=}(\cH_{ABX})$ with $\rho_{ABX}=\sum_{x}p_{x}\cdot\rho_{AB}^{x}\otimes\ket{x}\bra{x}_{X}$ and $\rho_{AB}^{x}\in\cS_{=}(\cH_{AB})$ for all $x$. Then, we have that
\begin{align}
H_{\min}(A|BX)_{\rho}=-\log(\sum_{x}p_{x}\cdot2^{-H_{\min}(A|B)_{\rho^{x}}})\ .
\label{oper}
\end{align}
\end{lem}

\begin{proof}
By the operational interpretation of the conditional min-entropy as the maximal achievable singlet fraction~\cite[Theorem 2]{min-max-entropy} we have
\begin{align}
H_{\min}(A|BX)_{\rho}=-\log(|A|\cdot\max_{\mathcal{F}_{BX\rightarrow A'}}F^{2}((\opid_{A}\otimes\mathcal{F}_{BX\rightarrow A'})(\rho_{ABX}),\ket{\Phi}\bra{\Phi}_{AA'}))\ ,
\end{align}
where the maximum is taken over all TPCPMs $\mathcal{F}_{BX\rightarrow A'}$, $\ket{\Phi}_{AA'}=|A|^{-1/2}\sum_{i}\ket{x}_{A}\otimes\ket{x}_{A'}$, and $\mathcal{H}_{A'}\cong\mathcal{H}_{A}$. Writing out the conditional min-entropy terms on the right hand side of~\eqref{oper} in the same manner we obtain
\begin{align}
H_{\min}(A|B)_{\rho^{x}}=-\log\left(|A|\cdot\max_{\mathcal{F}^{x}_{B\rightarrow A'}}F^{2}((\opid_{A}\otimes\mathcal{F}^{x}_{B\rightarrow A'})(\rho_{AB}^{x}),\ket{\Phi}\bra{\Phi}_{AA'})\right)\ .
\end{align}
The claim is therefore equivalent to
\begin{align}
&\max_{\mathcal{F}_{BX\rightarrow A'}}F^{2}((\opid_{A}\otimes\mathcal{F}_{BX\rightarrow A'})(\rho_{ABX}),\ket{\Phi}\bra{\Phi}_{AA'})\nonumber\\
&=\sum_{x}p_{x}\cdot\max_{\mathcal{F}^{x}_{B\rightarrow A'}}F^{2}((\opid_{A}\otimes\mathcal{F}^{x}_{B\rightarrow A'})(\rho_{AB}^{x}),\ket{\Phi}\bra{\Phi}_{AA'})\ .
\end{align}
Now, because the state $\rho_{ABX}$ is classical on $X$, the maximization on the left hand side can without loss of generality be restricted to TPCPMs that first measure on $X$ in the basis $\{\ket{x}\}$ and then do some TPCPM $\mathcal{F}^{x}_{B\rightarrow A'}$ conditioned on the measurement outcome $x$.  By the linearity of the square of the fidelity when one argument is pure, the claim then follows.
\end{proof}

The following lemma is about the conditional max-entropy of quantum-classical states.

\begin{lem}\label{classical2}
Let $\rho_{ABX}\in\cS_{=}(\cH_{ABX})$ with $\rho_{ABX}=\sum_{x}p_{x}\cdot\rho_{AB}^{x}\otimes\ket{x}\bra{x}_{X}$ and $\rho_{AB}^{x}\in\cS_{=}(\cH_{AB})$ for all $x$. Then, we have that
\begin{align}
H_{\max}(A|BX)_{\rho}=\log(\sum_{x}p_{x}\cdot2^{H_{\max}(A|B)_{\rho^{x}}})\ .
\end{align}
\end{lem}

\begin{proof}
Let $\rho_{ABCXX'}$ be a purification of $\rho_{ABX}$. Then, we have by the duality of conditional min- and max-entropy (Lemma~\ref{lem:dual}) and a lemma about the conditional min-entropy of quantum-classical states (Lemma~\ref{classical}) that
\begin{align}
H_{\max}(A|BX)_{\rho}=-H_{\min}(A|CX')_{\rho}=\log(\sum_{x}p_{x}\cdot2^{-H_{\min}(A|C)_{\rho^{x}}})=\log(\sum_{x}p_{x}\cdot2^{H_{\max}(A|B)_{\rho^{x}}})\ .
\end{align}
\end{proof}

The following lemma is property of the smooth conditional min-entropy of quantum-classical states.

\begin{lem}\label{lem:cc_equal}
Let $\varepsilon\geq0$ and $\rho_{ABXX'}\in\cS_{=}(\cH_{ABXX'})$ with $\rho_{ABXX'}=\sum_{x}p_{x}\cdot\rho_{AB}^{x}\otimes\ket{x}\bra{x}_{X}\otimes\ket{x}\bra{x}_{X'}$ and $\rho_{AB}^{x}\in\cS_{=}(\cH_{AB})$ for all $x$. Then, we have that
\begin{align}
H_{\min}^{\varepsilon}(A|BX)_{\rho}=H_{\min}^{\varepsilon}(AX'|BX)_{\rho}\ .
\end{align}
\end{lem}

\begin{proof}
We first show the case $\varepsilon=0$. By a property of the conditional min-entropy of quantum-classical states (Lemma~\ref{classical}), the claim becomes equivalent to
\begin{align}
H_{\min}(A|B)_{\rho^{x}}=H_{\min}(AX'|B)_{\rho^{x}\otimes\ket{x}\bra{x}}\ .
\end{align}
But by the additivity of the conditional min-entropy~\cite{min-max-entropy} this holds.

For $\varepsilon>0$, let $\bar{\rho}_{ABXX'}\in\cB^{\varepsilon}(\rho_{ABXX'})$ be classical on $XX'$ with respect to the basis $\{\ket{x}\otimes\ket{x}\}_{x}$ such that $H_{\min}^{\varepsilon}(AX'|BX)_{\rho}=H_{\min}(AX'|BX)_{\bar{\rho}}$ (which is possible by~\cite[Proposition 5.8]{Tomamichel12}). Since the purified distance is monotone under trace non-increasing CPMs~\cite[Lemma 7]{duality-min-max-entropy}, we have $\bar{\rho}_{ABX}\in\cB^{\varepsilon}(\rho_{ABX})$ and hence
\begin{align}
H_{\min}^{\varepsilon}(AX'|BX)_{\rho}\leq H_{\min}^{\varepsilon}(A|BX)_{\rho}\ .
\end{align}
For the inequality in the other direction, let $\hat{\rho}_{ABX}\in\cB^{\varepsilon}(\rho_{ABX})$ be classical on $X$ with respect to the basis $\{\ket{x}\}_{x}$ such that $H_{\min}^{\varepsilon}(A|BX)_{\rho}=H_{\min}(A|BX)_{\hat{\rho}}$ (which is possible by~\cite[Proposition 5.8]{Tomamichel12}). By~\cite[Corollary 9]{duality-min-max-entropy} and the monotonicity of the purified distance under trace non-increasing CPMs~\cite[Lemma 7]{duality-min-max-entropy} there exists an extension $\hat{\rho}_{ABXX'}\in\cB^{\varepsilon}(\rho_{ABXX'})$ of $\hat{\rho}_{AXB}$ that is classical on $XX'$ with respect to the basis $\{\ket{x}\otimes\ket{x}\}_{x}$. Thus, we conclude
\begin{align}
H_{\min}^{\varepsilon}(A|BX)_{\rho}\leq H_{\min}^{\varepsilon}(AX'|BX)_{\rho}\ .
\end{align}
\end{proof}

We have the following chain rule for the smooth conditional min-entropy.

\begin{lem}\label{lem:chainrule}
Let $\eps>0$, $\eps',\eps''\geq0$ and $\rho_{ABC}\in\cS_{=}(\cH_{ABC})$. Then, we have that
\begin{align}
H_{\min}^{\eps + 2\eps' + \eps''}(AB|C)_{\rho} \geqslant H_{\min}^{\eps'}(A|BC)_{\rho} + H_{\min}^{\eps''}(B|C)_{\rho} - \log \frac{2}{\eps^2}\ .
\end{align}
\end{lem}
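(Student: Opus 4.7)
The plan is to exhibit a state $\hat{\rho}_{ABC}\in\cB^{\eps+2\eps'+\eps''}(\rho_{ABC})$ and a density operator $\bar{\sigma}_{C}$ satisfying $\hat{\rho}_{ABC}\leq 2^{-(\alpha+\beta)+\log(2/\eps^{2})}\,\ident_{AB}\otimes\bar{\sigma}_{C}$, where $\alpha:=H_{\min}^{\eps'}(A|BC)_{\rho}$ and $\beta:=H_{\min}^{\eps''}(B|C)_{\rho}$; by the definition of smooth min-entropy this delivers the claim directly.

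As a first step, I select near-optimal witnesses for the two right-hand min-entropies: there exist $\tilde{\rho}_{ABC}\in\cB^{\eps'}(\rho_{ABC})$ together with a state $\sigma_{BC}$ such that $\tilde{\rho}_{ABC}\leq 2^{-\alpha}\,\ident_{A}\otimes\sigma_{BC}$, and $\tilde{\rho}'_{BC}\in\cB^{\eps''}(\rho_{BC})$ together with a state $\sigma_{C}$ such that $\tilde{\rho}'_{BC}\leq 2^{-\beta}\,\ident_{B}\otimes\sigma_{C}$. The key obstruction to directly chaining these two bounds is that $\sigma_{BC}$ is an arbitrary bipartite operator rather than the product $\ident_{B}\otimes\sigma_{C}$ needed in the target $AB|C$ inequality.

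To bridge this mismatch I use an operator-Markov argument. Consider the Hermitian operator $T:=(\ident_{B}\otimes\sigma_{C})^{-1/2}\,\sigma_{BC}\,(\ident_{B}\otimes\sigma_{C})^{-1/2}$ (with pseudo-inverses taken on the appropriate supports) and let $\Pi$ be its spectral projector onto eigenvalues at most $t:=2^{-\beta}\cdot 2/\eps^{2}$. Then $\Pi$ localizes $\sigma_{BC}$ to a scalar multiple of $\ident_{B}\otimes\sigma_{C}$ on its image, so pinching $\tilde{\rho}_{ABC}$ against $\ident_{A}\otimes\Pi$ yields an operator inequality with exponent $-\alpha+\log t=-(\alpha+\beta)+\log(2/\eps^{2})$, which is of the desired form. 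The trace loss is controlled by the spectral inequality $T\geq t(\ident-\Pi)$, which combined with $\tilde{\rho}'_{BC}\leq 2^{-\beta}\ident_{B}\otimes\sigma_{C}$ gives $\tr[(\ident-\Pi)\tilde{\rho}'_{BC}]\leq t^{-1}\tr[T\tilde{\rho}'_{BC}]\leq t^{-1}\cdot 2^{-\beta}=\eps^{2}/2$; this converts into a purified-distance cost of at most $\eps/\sqrt{2}\leq\eps$ via the standard estimate $P(\omega,P\omega P)\leq\sqrt{\tr[(\ident-P)\omega]}$.

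The final step is to transfer this truncation cost from $\tilde{\rho}'_{BC}$ to $\tilde{\rho}_{ABC}$ using the triangle inequality for the purified distance through $\rho_{BC}$: since $\tilde{\rho}_{BC}$ is $\eps'$-close to $\rho_{BC}$ (by monotonicity of the purified distance under taking marginals) and $\rho_{BC}$ is $\eps''$-close to $\tilde{\rho}'_{BC}$, one pays $\eps'+\eps''$ for this transfer on top of the initial $\eps'$ between $\rho_{ABC}$ and $\tilde{\rho}_{ABC}$, for a total smoothing parameter of $\eps+2\eps'+\eps''$. The main obstacle is the non-commutative pinching step: since $\Pi$ commutes with $T$ but not with $(\ident_{B}\otimes\sigma_{C})^{1/2}$ or $\tilde{\rho}_{ABC}^{1/2}$, deriving the clean operator bound on $\hat{\rho}_{ABC}$ from $\Pi T\Pi\leq t\Pi$ requires a delicate spectral/pinching argument, and it is precisely this non-commutation bookkeeping that forces the factor $2\eps'$ in the smoothing parameter rather than $\eps'$.
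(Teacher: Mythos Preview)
Your outline has two genuine gaps that are not mere bookkeeping.

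\textbf{The pinching inequality does not follow.} You need $\Pi\sigma_{BC}\Pi\leq t\,\ident_{B}\otimes\sigma_{C}$ in order to convert $\tilde{\rho}_{ABC}\leq 2^{-\alpha}\ident_{A}\otimes\sigma_{BC}$ into the target bound after conjugation by $\ident_{A}\otimes\Pi$. But $\Pi$ is the spectral projector of $T=(\ident_{B}\otimes\sigma_{C})^{-1/2}\sigma_{BC}(\ident_{B}\otimes\sigma_{C})^{-1/2}$, so $\Pi T\Pi\leq t\Pi$ only gives $(\ident_{B}\otimes\sigma_{C})^{1/2}\Pi T\Pi(\ident_{B}\otimes\sigma_{C})^{1/2}\leq t\,\ident_{B}\otimes\sigma_{C}$; this is \emph{not} $\Pi\sigma_{BC}\Pi$ unless $\Pi$ and $(\ident_{B}\otimes\sigma_{C})^{1/2}$ commute. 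You flag this as ``delicate'' but leave it unresolved; in fact no projector on $BC$ will do the job in general. Replacing $\Pi$ by the non-projector $G=(\ident_{B}\otimes\sigma_{C})^{1/2}\Pi(\ident_{B}\otimes\sigma_{C})^{-1/2}$ fixes the operator inequality but destroys the purified-distance control since $G$ is not a contraction.

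\textbf{The trace-loss transfer does not give additive smoothing.} Your Markov bound controls $\tr[(\ident-\Pi)\tilde{\rho}'_{BC}]$, but the state you pinch has marginal $\tilde{\rho}_{BC}$. Closeness of $\tilde{\rho}_{BC}$ and $\tilde{\rho}'_{BC}$ in purified distance only transfers to $|\tr[(\ident-\Pi)\tilde{\rho}_{BC}]-\tr[(\ident-\Pi)\tilde{\rho}'_{BC}]|\leq\|\tilde{\rho}_{BC}-\tilde{\rho}'_{BC}\|_{1}\leq 2(\eps'+\eps'')$, which after the square root yields a cost of order $\sqrt{\eps'+\eps''}$, not $\eps'+\eps''$. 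So the total smoothing parameter you obtain is not $\eps+2\eps'+\eps''$.

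The paper avoids both issues by a different mechanism. It invokes a concentration lemma (their Lemma~21 of~\cite{leftover}) that, by projecting on the \emph{purifying} system $AE$ of $\rho'_{ABCE}$, replaces the arbitrary $\sigma_{BC}$ in the min-entropy bound by the actual marginal $\rho'_{BC}$ at the cost of $\log(2/\eps^{2})$: one gets $\bar{\rho}'_{ABC}\leq 2^{-\alpha+\log(2/\eps^{2})}\ident_{A}\otimes\rho'_{BC}$ with $\bar{\rho}'\in\cB^{\eps}(\rho')$. Because this projector lives on $AE$, it commutes trivially with any operator on $BC$. The second smoothing is then implemented by the Uhlmann-type extension lemma: an operator $T_{BC}$ with $T_{BC}\rho'_{BC}T_{BC}^{\dagger}=\rho''_{BC}$ and $P(T_{BC}\cdot T_{BC}^{\dagger},\cdot)=P(\rho''_{BC},\rho'_{BC})$ is applied on $BC$, which turns $\rho'_{BC}$ into $\rho''_{BC}\leq 2^{-\beta}\ident_{B}\otimes\sigma_{C}$ in the operator bound while incurring purified-distance cost exactly $\eps'+\eps''$ (via $\rho_{BC}$). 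The separation of the two conjugations onto disjoint subsystems $AE$ and $BC$ is what makes the non-commutativity problem disappear and the additive smoothing come out correctly.
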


\begin{proof}
Let $\rho_{ABC}'\in\cB^{\eps'}(\rho_{ABC})$ such that $H_{\min}^{\eps'}(A|BC)_{\rho}=H_{\min}(A|BC)_{\rho'}$ and let $\rho_{ABCE}'$ be a purification of $\rho_{ABC}'$. Furthermore let $\rho_{BC}''\in\cB^{\eps''}(\rho_{BC})$, $\sigma_{C}\in\cS_{=}(\cH_{BC})$ and $\lambda\in\mathbb{R}$ such that $H_{\min}^{\eps''}(B|C)_{\rho}=H_{\min}(B|C)_{\rho''}=-\log\lambda$, that is, $\lambda$ is minimal such that
\begin{align}\label{AB}
\lambda\cdot\ident_{B}\otimes\sigma_{C}-\rho_{BC}''\geq0\ .
\end{align}
By~\cite[Lemma 21]{leftover} there exists a projector $P_{AE}$ such that
\begin{align}
\bar{\rho}'_{ABCE}=(P_{AE}\otimes\ident_{BC})\rho'_{ABCE}(P_{AE}\otimes\ident_{BC})\in\cB^{\eps}(\rho_{ABCE}')\ ,
\end{align}
and
\begin{align}\label{ABC}
2^{-H_{\min}^{\eps'}(A|BC)_{\rho}+\log\frac{2}{\eps^{2}}}\cdot\ident_{A}\otimes\rho_{BC}'-\bar{\rho}_{ABC}'\geq0\ .
\end{align}
Now let $T_{BC}$ be defined as in Lemma~\ref{lem:extuhlm} with
$\rho_{BC}''=T_{BC}\rho_{BC}'T_{BC}^{\dagger}$ and consider the state
\begin{align}
\bar{\rho}_{ABCE}''=(\ident_{AE}\otimes T_{BC})\bar{\rho}_{ABCE}'(\ident_{AE}\otimes T_{BC}^{\dagger})=(P_{AE}\otimes T_{BC})\rho_{ABCE}'(P_{AE}\otimes T_{BC}^{\dagger})\ .
\end{align}
Applying $T_{BC}$ to~\eqref{ABC} we obtain
\begin{align}
2^{-H_{\min}^{\eps'}(A|BC)_{\rho}+\log\frac{2}{\eps^{2}}}\cdot\ident_{A}\otimes\rho_{BC}''-\bar{\rho}_{ABC}''\geq0\ .
\end{align}
Together with~\eqref{AB} this yields
\begin{align}
2^{-H_{\min}^{\eps'}(A|BC)_{\rho}+\log\frac{2}{\eps^{2}}-H_{\min}^{\eps''}(B|C)_{\rho}}\cdot\ident_{AB}\otimes\sigma_{C}-\bar{\rho}_{ABC}''\geq0\ .
\end{align}
This implies
\begin{align}\label{ABCD}
H_{\min}(AB|C)_{\bar{\rho}''}\geq H_{\min}^{\eps'}(A|BC)_{\rho}+H_{\min}^{\eps''}(B|C)_{\rho}-\log\frac{2}{\eps^{2}}\ .
\end{align}
But by the monotonicity of the purified distance~\cite[Lemma 7]{duality-min-max-entropy} and the definition of $T_{BC}$ we have
\begin{align}
P(\bar{\rho}_{ABC}'',\bar{\rho}_{ABC}')&\leq P((P_{AE}\otimes T_{BC})\rho_{ABCE}'(P_{AE}\otimes T_{BC}^{\dagger}),(P_{AE}\otimes\ident_{BC})\rho_{ABCE}'(P_{AE}\otimes\ident_{BC}))\nonumber\\
&\leq P((\ident_{AE}\otimes T_{BC})\rho_{ABCE}'(\ident_{AE}\otimes T_{BC}^{\dagger}),\rho_{ABCE}')\nonumber\\
&=P(\rho_{BC}'',\rho_{BC}')\ ,
\end{align}
and hence
\begin{align}
P(\bar{\rho}_{ABC}'',\bar{\rho}_{ABC}')\leq P(\rho_{BC}'',\rho_{BC})+P(\rho_{BC},\rho_{BC}')\leq\eps''+\eps'\ .
\end{align}
Finally we obtain
\begin{align}
P(\bar{\rho}''_{ABC},\rho_{ABC})&\leq P(\bar{\rho}''_{ABC},\bar{\rho}_{ABC}')+P(\bar{\rho}'_{ABC},\rho_{ABC}')+P(\rho_{ABC}',\rho_{ABC})\nonumber\\
&\leq\eps''+\eps'+\eps+\eps'=\eps + 2\eps' + \eps'\ ,
\end{align}
and thus together with~\eqref{ABCD} that
\begin{align}
H_{\min}^{\eps + 2\eps' + \eps'}(AB|C)_{\rho}\geq H_{\min}^{\eps'}(A|BC)_{\rho}+H_{\min}^{\eps''}(B|C)_{\rho}-\log\frac{2}{\eps^{2}}\ .
\end{align}
\end{proof}


\section{Technical Lemmas}

\begin{lem}\cite[Lemma 6]{duality-min-max-entropy}\label{lem:purified-dist-vs-trace-dist}
Let $\rho, \sigma \in \cS_{\leqslant}(\cH)$. Then, we have that
\begin{align}
\bar{D}(\rho, \sigma) \leqslant P(\rho, \sigma) &\leqslant \sqrt{2\bar{D}(\rho,\sigma)} \leqslant \sqrt{2 \| \rho - \sigma \|_1}\\
\frac{1}{2}P(\rho,\sigma)^2 \leqslant \bar{D}(\rho,\sigma) &\leqslant P(\rho,\sigma)\ ,
\end{align}
where $\bar{D}(\rho,\sigma) = \frac{1}{2}\| \rho - \sigma \|_1 + \frac{1}{2}| \tr[\rho] - \tr[\sigma]|$.
\end{lem}

\begin{lem}\label{lem:hmax-optimal-z}
Let $\rho_{ABC} \in \cS_{\leqslant}(\cH_{ABC})$ be pure. Then, we have that for any $\sigma_{B}\in\cS_{=}(\cH_{B})$ with full rank,
\begin{align}
\rho_{ABC} \leqslant Z_{AB} \otimes \ident_C\ ,
\end{align}
where $Z_{AB} = 2^{\demi \hmax(A|B)_{\rho | \sigma}}\cdot\sigma_B^{-1/2} \sqrt{\sigma_B^{1/2} \rho_{AB} \sigma_B^{1/2}} \sigma_B^{-1/2}$. Furthermore, $Z_{AB}$ has the property that $\tr[Z_{AB} \sigma_B] = 2^{\hmax(A|B)_{\rho|\sigma}}$.
\end{lem}

\begin{proof}
Consider the following semidefinite program (for a introduction to semidefinite programs presented in this manner, see for instance \cite{watrousnotes}):
\begin{center}
  \begin{minipage}{6cm}
    \centerline{\underline{Primal}}\vspace{-7mm}
    \begin{align}
      \text{maximize:}\quad & \tr[\rho_{ABC} X_{ABC}]\nonumber\\
      \text{subject to:}\quad & \tr_C[X_{ABC}] = \ident_A \otimes \sigma_B\nonumber\\
      & X_{ABC} \geqslant 0\nonumber
    \end{align}
  \end{minipage}
  \hspace*{10mm}
  \begin{minipage}{6cm}
    \centerline{\underline{Dual}}\vspace{-7mm}
    \begin{align}
      \text{minimize:}\quad & \tr[(\ident_A \otimes \sigma_B) Z_{AB}]\nonumber\\
    \text{subject to:}\quad & \rho_{ABC} \leqslant Z_{AB} \otimes \ident_C\nonumber\ .
    \end{align}
  \end{minipage}
\end{center}
From the definition of the conditional max-entropy (Definition~\ref{hmax-rho-given-sigma}) and Uhlmann's theorem~\cite{uhlmann} it is clear that the optimal value of the primal problem is $2^{\hmax(A|B)_{\rho|\sigma}}$. One can also easily show that strong duality holds (i.e., that the optimal value of the dual problem is equal to that of the primal problem). One simply needs to show that there exists a $Z_{AB}$ such that $Z_{AB} \otimes \ident_C > \rho_{ABC}$, which holds for $Z_{AB}=2\cdot\ident_{AB}$.

Now, we need to show that the optimal $Z_{AB}$ for this problem has the form given in the lemma statement. First, note that by Uhlmann's theorem~\cite{uhlmann}, there must exist an optimal $X_{ABC}$ which has rank 1, assuming we consider the system $C$ to be large enough. Let $X_{ABC} = \proj{\varphi}_{ABC}$ and let $\rho_{ABC} = \proj{\rho}_{ABC}$, and consider the complementary slackness condition for $X$ and $Z$ to be optimal: $\rho_{ABC} X_{ABC} = (Z_{AB} \otimes \ident_C) X_{ABC}$. We can rewrite this as
\begin{align}
\braket{\rho}{\varphi} \ket{\rho}\bra{\varphi} = (Z_{AB} \otimes \ident_C) \proj{\varphi}\ ,
\end{align} 
and therefore
\begin{align}
\braket{\rho}{\varphi} \ket{\rho} = (Z_{AB} \otimes \ident_C) \ket{\varphi}\ ,
\end{align} 
as well as
\begin{align}
F(\rho, \varphi)^2 \proj{\rho} = (Z_{AB} \otimes \ident_C) \proj{\varphi} (Z_{AB} \otimes \ident_C)\ .
\end{align} 
Tracing out $C$ and using the fact that $F(\rho,\varphi)^2 = 2^{\hmax(A|B)_{\rho|\sigma}}$, we get
\begin{align}
2^{\hmax(A|B)_{\rho|\sigma}}\cdot\rho_{AB} = Z_{AB} (\ident_A \otimes \sigma_B) Z_{AB}\ .
\end{align} 
Now, conjugating both sides by $\sigma_B^{1/2}$ and taking square roots on both sides, we get that
\begin{align}
2^{\demi \hmax(A|B)_{\rho|\sigma}}\cdot\sqrt{\sigma_B^{1/2} \rho_{AB} \sigma_B^{1/2}} = \sigma_B^{1/2} Z_{AB} \sigma_B^{1/2}\ .
\end{align} 
If $\sigma_B$ has full rank, we get the expression for $Z_{AB}$ by conjugating both sides by $\sigma_B^{-1/2}$. Finally, the fact that $\tr[Z_{AB} \sigma_B] = 2^{\hmax(A|B)_{\rho|\sigma}}$ can simply be computed from the expression for $Z$.
\end{proof}

\begin{lem}\label{lem:extuhlm}
Let $\rho_{AB}\in\cS_{\leq}(\cH_{AB})$ and $\sigma_{A}\in\cS_{\leq}(\cH_{A})$. Then, there exists $T_{A}\in\cL(\cH_{A})$ with
\begin{align}
\sigma_{AB}=(T_{A}\otimes\ident_{B})\rho_{AB}(T_{A}^{\dagger}\otimes\ident_{B})\in\cS_{\leq}(\cH_{AB})
\end{align}
an extension of $\sigma_{A}$ such that
$P(\rho_{AB},\sigma_{AB})=P(\rho_{A},\sigma_{A})$.
\end{lem}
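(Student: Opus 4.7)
I would first reduce the claim to the fidelity equality $F(\rho_{AB},\sigma_{AB})=F(\rho_A,\sigma_A)$. Because $\sigma_{AB}$ is required to be an extension of $\sigma_A$, one automatically has $\tr\sigma_{AB}=\tr\sigma_A$ and $\tr\rho_{AB}=\tr\rho_A$, so the $\sqrt{(1-\tr\rho)(1-\tr\sigma)}$ contributions to the generalized fidelity $\bar F$ in the definition of $P$ match on both sides of the desired identity. Monotonicity of fidelity under $\tr_B$ already yields $F(\rho_{AB},\sigma_{AB})\leq F(\rho_A,\sigma_A)$, so all the work is in constructing $T_A$ achieving the matching lower bound.

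\textbf{Purification and Uhlmann lower bound.} Let $|\rho\rangle_{ABC}$ be any purification of $\rho_{AB}$; it is simultaneously a purification of $\rho_A$ with purifying system $BC$. Define the candidate $|\sigma\rangle_{ABC}:=(T_A\otimes\ident_{BC})|\rho\rangle_{ABC}$. Expanding in the Schmidt basis of $|\rho\rangle$ across the cut $A:BC$ yields simultaneously $\tr_C|\sigma\rangle\langle\sigma|=(T_A\otimes\ident_B)\rho_{AB}(T_A^\dagger\otimes\ident_B)=\sigma_{AB}$ and $\langle\rho|\sigma\rangle=\tr(T_A\rho_A)$. It therefore suffices to produce $T_A$ satisfying $T_A\rho_AT_A^\dagger=\sigma_A$ (so that $\sigma_{AB}$ is a genuine extension of $\sigma_A$) and $\tr(T_A\rho_A)=F(\rho_A,\sigma_A)$; the fidelity-via-purification bound $F(\rho_{AB},\sigma_{AB})\ge|\langle\rho|\sigma\rangle|$ then matches the upper bound and closes the argument.

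\textbf{Explicit $T_A$.} Setting $M:=T_A\sqrt{\rho_A}$ turns the constraint $T_A\rho_AT_A^\dagger=\sigma_A$ into $MM^\dagger=\sigma_A$ and the objective into $\tr(M\sqrt{\rho_A})$. From the polar decomposition $\sqrt{\rho_A}\sqrt{\sigma_A}=V\bigl|\sqrt{\rho_A}\sqrt{\sigma_A}\bigr|$ with partial isometry $V$, I would take $M:=\sqrt{\sigma_A}V^\dagger$, i.e.\ $T_A:=\sqrt{\sigma_A}V^\dagger\sqrt{\rho_A}^{-1}$ with the inverse interpreted as a pseudoinverse on $\mathrm{supp}(\rho_A)$. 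Then $MM^\dagger=\sqrt{\sigma_A}V^\dagger V\sqrt{\sigma_A}=\sigma_A$, while $\tr(M\sqrt{\rho_A})=\tr(V^\dagger\sqrt{\rho_A}\sqrt{\sigma_A})=\tr\bigl|\sqrt{\rho_A}\sqrt{\sigma_A}\bigr|=F(\rho_A,\sigma_A)$, which is exactly what was needed.

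\textbf{Main obstacle.} The delicate step is the support book-keeping. The pseudoinverse $\sqrt{\rho_A}^{-1}$ is only defined on $\mathrm{supp}(\rho_A)$, and the identity $V^\dagger V\sqrt{\sigma_A}=\sqrt{\sigma_A}$ used above presupposes that $V^\dagger V$ acts as the identity on $\mathrm{supp}(\sigma_A)$. Standard properties of the polar decomposition of $\sqrt{\rho_A}\sqrt{\sigma_A}$ make this clean in the generic situation $\mathrm{supp}(\sigma_A)\subseteq\mathrm{supp}(\rho_A)$; on $\mathrm{supp}(\rho_A)^\perp$ the value of $T_A$ may be set to zero without affecting $T_A\rho_AT_A^\dagger$ or $\tr(T_A\rho_A)$, since $\rho_A$ (and hence $\rho_{AB}$) vanishes there.
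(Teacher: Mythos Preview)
Your proposal is correct and essentially identical to the paper's proof. Both construct $T_A=\sqrt{\sigma_A}\,W\,\sqrt{\rho_A}^{-1}$ from the polar decomposition of $\sqrt{\sigma_A}\sqrt{\rho_A}$ (the paper polar-decomposes this product directly, you polar-decompose its adjoint $\sqrt{\rho_A}\sqrt{\sigma_A}$, so your $V^\dagger$ plays the role of the paper's $V_A$), verify $T_A\rho_AT_A^\dagger=\sigma_A$, and then compute $\langle\rho|T_A\otimes\ident|\rho\rangle=\tr(T_A\rho_A)=F(\rho_A,\sigma_A)$ via a purification. The only organizational difference is that the paper first treats the case of pure $\rho_{AB}$ and then invokes monotonicity of the purified distance to pass to mixed $\rho_{AB}$, whereas you purify $\rho_{AB}$ to $|\rho\rangle_{ABC}$ from the outset and sandwich $F(\rho_{AB},\sigma_{AB})$ between the Uhlmann lower bound $|\langle\rho|\sigma\rangle|$ and the monotonicity upper bound $F(\rho_A,\sigma_A)$; this is the same content in slightly more streamlined packaging. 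Your acknowledgment of the support subtlety (needing $V^\dagger V$ to act as the identity on $\mathrm{supp}\,\sigma_A$, which is guaranteed when $\mathrm{supp}\,\sigma_A\subseteq\mathrm{supp}\,\rho_A$) is on point and matches the level of care in the paper, which likewise works with a generalized inverse and implicitly extends the partial isometry.
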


\begin{proof}
Define $X_{A}=\sigma_{A}^{\frac{1}{2}}\rho_{A}^{\frac{1}{2}}$ and polar decompose $X_{A}=V_{A}(X^{\dagger}_{A}X_{A})^{1/2}$. Furthermore define $T_{A}=\sigma_{A}^{\frac{1}{2}}V_{A}\rho_{A}^{-\frac{1}{2}}$, where the inverse is a generalized inverse.\footnote{For $M\in\cP$, $M^{-1}$ is a generalized inverse of $M$ if $MM^{-1}=M^{-1}M=\mathrm{supp}(M)=\mathrm{supp}(M^{-1})$, where $\mathrm{supp}(\cdot)$ denotes the support.} We have
\begin{align}
\tr_{B}((T_{A}\otimes\ident_{B})\rho_{AB}(T_{A}^{\dagger}\otimes\ident_{B}))=T_{A}\rho_{A}T_{A}^{\dagger}=\sigma_{A}^{\frac{1}{2}}V_{A}V_{A}^{\dagger}\sigma_{A}^{\frac{1}{2}}=\sigma_{A}\ ,
\end{align}
which shows that $\sigma_{AB}=(T_{A}\otimes\ident_{B})\rho_{AB}(T_{A}^{\dagger}\otimes\ident_{B})$ is an extension of $\sigma_{A}$. Thus it remains to prove that $P(\rho_{AB},\sigma_{AB})=P(\rho_{A},\sigma_{A})$.

For this we first assume that $\rho_{AB}$ is pure and normalized, i.e., $\rho_{AB}=\ket{\rho}\bra{\rho}_{AB}\in\cS_{=}(\cH_{AB})$. Then, we have that 
\begin{align}
P(\rho_{AB},\sigma_{AB})&=\sqrt{1-|\braket{\rho}{\sigma}|^2}\nonumber\\
&=\sqrt{1-\left|\tr\left[(T_A \otimes \ident_B) \rho_{AB}\right]\right|^2}\nonumber\\
&=\sqrt{1-\left|\tr\left[(\sigma_A^{1/2} V_A \rho_A^{-1/2} \otimes \ident_B) \rho_{AB}\right]\right|^2}\nonumber\\
&=\sqrt{1-\left|\tr\left[\sigma_A^{1/2} V_A \rho_A^{1/2}\right]\right|^2}\nonumber\\
&=\sqrt{1-\left|\tr\left[\rho_A^{1/2}\sigma_A^{1/2} V_A\right]\right|^2}\nonumber\\
&=\sqrt{1-\left|\tr\left[\sqrt{\rho_A^{1/2} \sigma_A \rho_A^{1/2}}\right]\right|^2}\nonumber\\
&=\sqrt{1-F^{2}(\rho_{A},\sigma_{A})}\nonumber\\
&=P(\rho_{A},\sigma_{A})\ .
\end{align}
If $\rho_{AB}=\ket{\rho}\bra{\rho}_{AB}$ is not normalized we obtain analogously
\begin{align}
P(\rho_{AB},\sigma_{AB})&=\sqrt{1-[F(\rho_{AB},\sigma_{AB})+\sqrt{(1-\tr[\rho_{AB}])(1-\tr[\sigma_{AB}])}]^{2}}\nonumber\\
&=\sqrt{1-\left(F(\rho_{A},\sigma_{A})+\sqrt{(1-\tr[\rho_{A}])(1-\tr[\sigma_{A}])}\right)^{2}}\nonumber\\
&=P(\rho_{A},\sigma_{A})\ .
\end{align}
The statement for a general $\rho_{AB}$ (not necessarily pure) follows by the monotonicity of the purified distance~\cite[Lemma 7]{duality-min-max-entropy} under partial trace.
\end{proof}

\end{document}